\documentclass[a4paper,UKenglish,cleveref, autoref, numberwithinsect]{lipics-v2021}

\pdfoutput=1 
\hideLIPIcs  

\title{Ideal Membership in Polynomial Calculus: Complexity and Reductions} 

\titlerunning{Ideal Membership in Polynomial Calculus} 

\author{Alex {Bortolotti}}{University of Applied Sciences and Arts of Southern Switzerland, IDSIA, Switzerland}{alex.bortolotti@supsi.ch}{https://orcid.org/0009-0008-0988-4873}{}

\author{Monaldo {Mastrolilli}}{University of Applied Sciences and Arts of Southern Switzerland, IDSIA, Switzerland}{monaldo.mastrolilli@supsi.ch}{https://orcid.org/0000-0002-2948-9749}{}

\authorrunning{A. Bortolotti and M. Mastrolilli} 

\Copyright{Alex Bortolotti and Monaldo Mastrolilli} 

\ccsdesc[500]{Theory of computation~Proof complexity}
\ccsdesc[300]{Theory of computation~Problems, reductions and completeness}
\ccsdesc[300]{Theory of computation~Constraint and logic programming}

\keywords{Polynomial Calculus, Ideal Membership Problem, polymorphisms, proof complexity} 

\category{} 

\relatedversion{}

\funding{Supported by the Swiss National Science Foundation project n. 200021\_207429 / 1 ``Ideal Membership Problems and the Bit Complexity of Sum of Squares Proofs''}

\nolinenumbers 

\newcommand{\sos}{\text{\sc{SoS}}}
\newcommand{\PS}{\mathcal{P}}
\newcommand{\GIdeal}[1]{\left\langle #1 \right\rangle}

\newcommand{\CSP}{\textsc{CSP}}
\newcommand{\IMP}{\textsc{IMP}}
\newcommand{\Pol}{\mathsf{Pol}}
\newcommand{\Inv}{\mathsf{Inv}}
\newcommand{\PC}{\textsc{PC}}

\newcommand{\Ptime}{\ensuremath{\mathsf{P}}}
\newcommand{\NP}{\ensuremath{\mathsf{NP}}}
\newcommand{\coNP}{\ensuremath{\mathsf{coNP}}}
\newcommand{\ExpSpace}{\ensuremath{\mathsf{ExpSpace}}}

\newcommand{\lex}{\textsf{lex }}
\newcommand{\grlex}{\textsf{grlex }}

\newcommand{\BoolD}{\{0,1\}}
\newcommand{\pin}[1]{\mathsf{Pin}_{#1}}
\newcommand{\Gpin}{\Gamma^{\mathrm{pin}}}

\newcommand{\Variety}[1]{{\textbf{V}}\left( #1 \right)}
\newcommand{\Geq}{\Gamma \cup \{=_{\Lambda}\}}

\usepackage{algpseudocodex}
\newcommand{\runin}[1]{\par\smallskip\noindent\textbf{#1}\quad}

\begin{document}
\maketitle

\begin{abstract}
    The Ideal Membership Problem ($\IMP$) asks whether a polynomial $f$ belongs to an ideal $\mathrm{I}=\langle p_1,\dots,p_m\rangle\subseteq\mathbb{Q}[x_1,\dots,x_n]$. Such membership can be certified in \emph{Polynomial Calculus} ($\PC$), the proof system that derives $f$ from the generators $p_1,\dots,p_m$ by iteratively combining them; the resulting derivation is the certificate. The \emph{degree} of the derivation is its central complexity parameter, and bounding it to $d$ upper bounds the number of derivation steps at $n^{O(d)}$. We write $\PC$-$\IMP_d$ for the resulting problem of producing a degree-bounded $\PC$ certificate of membership, and call it solvable when such a certificate is guaranteed to exist and to be findable in time $n^{O(d)}$. Over a finite field the step bound already yields a polynomial-time algorithm. Over $\mathbb{Q}$, which is the setting we consider, a derivation may instead require exponentially many bits, so it cannot be written down in polynomial time.

    In this paper we study the power of $\PC$ for solving $\IMP$ instances arising from constraint satisfaction problems, and more generally ask for which constraint languages $\Gamma$ such instances can be solved efficiently. Our main contribution is an algebraic reduction framework for $\PC$-$\IMP_d$, based on pp-definitions, pp-interpretations, and pp-encodings, that mirrors the algebraic approach to $\CSP$ complexity. The framework shows that solvability is preserved by these constructions and, in the language of algebras, by passing to subalgebras, finite direct powers, and homomorphic images. We use it to obtain new tractable classes over ternary and larger domains: every language closed under the median operation on a finite chain has solvable $\PC$-$\IMP_d$, by reduction to the Boolean majority algebra, and in particular so does every language over $\{0,1,2\}$ closed under a fixed-value majority. This also places $\IMP_d(\Gamma)$ in $\Ptime$ for such languages, advancing the classification of $\IMP_d$ over ternary domains begun by Mastrolilli and continued by Bulatov and Rafiey. In the process, we settle the last open case of the Boolean dichotomy for $\IMP_d(\Gamma)$ and complete the Boolean classification of $\PC$-$\IMP_d(\Gamma)$ as well: in doing so, we provide an unconditional lower bound for an instance of $\PC$-$\IMP_1$. Finally, a recent $\PC$-to-$\sos$ simulation reduces the \emph{degree-automatability} of Sum-of-Squares---the well-known open problem of finding a degree-$d$ $\sos$ proof in time $n^{O(d)}$ when one exists---to solvability of $\PC$-$\IMP_d$. Each of our new tractable classes therefore yields a family of constraint systems on which $\sos$ proofs are degree-automatable.
\end{abstract}

\newpage

\section{Introduction}

The \emph{Ideal Membership Problem} ($\IMP$) asks whether a polynomial $f \in \mathbb{Q}[x_1,\dots,x_n]$ belongs to the \emph{ideal} generated by $p_1,\dots,p_m \in \mathbb{Q}[x_1,\dots,x_n]$, that is, to the set $\langle p_1,\dots,p_m\rangle = \{\sum_i g_i\,p_i : g_i \in \mathbb{Q}[x_1,\dots,x_n]\}$ of their polynomial combinations \cite{CoxLO15}. A choice of $g_i$ with $f=\sum_i g_i\,p_i$ is a \emph{certificate} of membership, and two questions about such certificates are classical: how large must they be, and can one be found efficiently? The first is the subject of the effective Nullstellensatz, which bounds the certificate degree in terms of the degrees of the generators \cite{Brownawell87,Kollar88}. The second is computational, and is the one we pursue: even deciding membership is \ExpSpace-complete \cite{Mayr89,MayrM82}.

Since the general problem is intractable, we ask for a different certificate: a bounded-degree derivation in \emph{Polynomial Calculus} ($\PC$) rather than the cofactors $g_i$. A $\PC$ derivation builds the elements of an ideal $\mathrm{I}$ from its generators $p_1,\dots,p_m$ by two rules: take $\mathbb{Q}$-linear combinations of polynomials already derived, and multiply a derived polynomial by a variable. A derivation ending at $f$ certifies $f\in \mathrm{I}$; its \emph{degree} is the largest degree occurring in the derivation, and its \emph{bit-size} is the number of bits needed to write it down. Bounding the degree to $d$, one can run Buchberger's algorithm under the \grlex order and discard every polynomial of degree above $d$; this, together with a polynomial division step, finds a $\PC$ derivation of any $f$ of degree at most $d$, whenever one exists, in $n^{O(d)}$ arithmetic operations \cite{CleggEI96}. One might expect a fixed degree to make the certificate easy to find; it does not. Over $\mathbb{Q}$ the coefficients in the derivation may require exponentially many bits \cite{Hakoniemi21}, so bounding the degree alone leaves the certificate intractable to produce.

\medskip
\noindent\textbf{\boldmath$\PC$-$\IMP_d$.}\enspace We study the problem of producing such a certificate at a fixed bounded degree $d$, which we call $\PC$-$\IMP_d$: given $p_1,\dots,p_m,f\in\mathbb{Q}[x_1,\dots,x_n]$ with $\deg f\le d$ and the promise $f\in\langle p_1,\dots,p_m\rangle$, output a $\PC$ derivation of $f$ from $\{p_1,\dots,p_m\}$; the promise makes the task meaningful, since no derivation exists otherwise. Since $d$ is a constant rather than part of the input, $n^{O(d)}$ is polynomial in $n$.
\medskip

\noindent We call $\PC$-$\IMP_d$ \emph{solvable} when every promise instance admits a degree-$O(d)$ derivation \emph{and} this is computable in time $n^{O(d)}$. As mentioned earlier, solvability can fail because no degree-$O(d)$ certificate exists at all, and it can fail because every such certificate has superpolynomial bit-size and so cannot be written down in time $n^{O(d)}$. Note that over a finite field the latter cannot occur.
A $\PC$ derivation of the constant $1$ certifies that $\{ p_i = 0 \}_{i}$ has no common zero; such a derivation is a \emph{refutation}, and it is precisely the degree-$0$ case $\PC$-$\IMP_0$. Algebraic proof complexity typically studies $\PC$ through such refutations. The problem $\PC$-$\IMP_d$ is the membership generalization: the target is an arbitrary $f$.

We investigate $\PC$-$\IMP_d$ when the generators encode an instance of a \emph{constraint satisfaction problem} ($\CSP$). A \emph{constraint language} $\Gamma$ over a finite domain $\Lambda$ is a finite set of relations, a \emph{constraint} applies one of them to a tuple of variables, and $\CSP(\Gamma)$ asks whether a system of constraints admits a satisfying assignment; both $\Gamma$ and $\Lambda$ are fixed throughout. To an instance $\mathcal{T}$ one associates a generating set $\mathcal{F}$ containing, for each constraint $C$, an interpolant $f_C$ vanishing exactly on the tuples allowed by its relation, together with domain polynomials $f_\Lambda(x_i)=\prod_{\lambda\in\Lambda}(x_i-\lambda)$ pinning each variable to $\Lambda$ \cite{Mastrolilli21}. The domain polynomials restrict the zeros of $\mathcal{F}$ to $\Lambda^n$, where they are exactly the satisfying assignments of $\mathcal{T}$; by the Weak Nullstellensatz, the instance is unsatisfiable precisely when $1\in\langle \mathcal{F}\rangle$. The degree-$0$ case $\PC$-$\IMP_0(\Gamma)$ thus coincides with finding $\PC$ refutations of unsatisfiable $\CSP(\Gamma)$ instances. The algebraic approach to $\CSP$s classifies the hardness of $\CSP(\Gamma)$ by the \emph{polymorphisms} of $\Gamma$---the operations under which all its relations are closed---and this viewpoint famously led to a complete dichotomy: every $\CSP(\Gamma)$ is in $\Ptime$ or $\NP$-complete \cite{Bulatov17,Zhuk20}. The same approach has proved fruitful for ideal membership: the polymorphism-based study of $\IMP$ was initiated by Mastrolilli \cite{Mastrolilli19,Mastrolilli21}, who delineated the Boolean tractability borderline for the restriction of $\IMP$ to a fixed constraint language $\Gamma$---which we write $\IMP(\Gamma)$---with later extensions to larger domains \cite{BharathiM22,BharathiM25,BortolottiMV25,BulatovRSTOC22}.

\runin{Membership versus its certificate.}
Demanding an explicit $\PC$ certificate is strictly harder than deciding membership. The sharpest witnesses are the unsatisfiable affine systems over $\mathbb{F}_2$: for the languages $\Gamma$ that produce them, $\IMP_d(\Gamma)\in\Ptime$ \cite{BulatovRSTOC22,Mastrolilli21}, and yet $\PC$ refutations over $\mathbb{Q}$ may require degree $\Omega(n)$ \cite{BussGIP01} (recast in the algebraic $\CSP$ framework by \cite{AtseriasO19}), so that $\PC$-$\IMP_0(\Gamma)$ admits no constant-degree solution. The existing $\IMP$ classification therefore does not settle $\PC$-$\IMP_d(\Gamma)$: its hardness transfers only under $\Ptime \neq \NP$, whereas $\PC$ admits unconditional bounds. $\PC$-$\IMP_d$ calls for a classification of its own.

\runin{Sum-of-Squares automatability.}
Whether fixed-degree Sum-of-Squares ($\sos$) proofs are \emph{degree-automatable} (findable in time $n^{O(d)}$ whenever a degree-$d$ proof exists) is a well-known open problem: a system may admit a low-degree $\sos$ proof, yet every such proof can require coefficients of doubly-exponential magnitude in $n$, putting them out of reach of the standard semidefinite programming methods such as the ellipsoid method \cite{Hakoniemi21, Odonnell2017, RaghavendraW17}. The question remains under active investigation from several directions \cite{BortolottiMPV26,GriblingPS23,GartnerMV26,PalombaSVM25}, and, more broadly, in convex optimization \cite{SlotSW26}.
The $\PC$ criterion of \cite{BortolottiMV25} ties $\PC$-$\IMP_d$ directly to this question. Building on a simulation of $\PC$ by $\sos$ \cite{Berkholz18,BortolottiMV25}, it shows that whenever $\PC$-$\IMP_d$ is solvable for a system $\PS$, any degree-$2d$ $\sos$ proof of ``$r\ge0$'' from $\PS$ can be replaced, up to an arbitrary additive $\varepsilon$, by one of degree $O(d)$ whose coefficients are bounded by $2^{\mathrm{poly}(n^d)}$. The criterion thus reduces $\sos$ degree-automatability to $\PC$-$\IMP_d$: solvable instances yield automatable $\sos$ proofs, including for the new tractable classes we obtain below.

\runin{Our contributions.}
We develop an algebraic reduction framework for $\PC$-$\IMP_d$ mirroring the polymorphism approach to $\CSP$ complexity, and apply it to obtain new tractable classes. The framework rests on a single observation: solvability of $\PC$-$\IMP_d$ is preserved by the standard constructions relating constraint languages. Over a fixed domain, if $\Gamma$ pp-defines $\Delta$ and $\PC$-$\IMP_d(\Gamma)$ is solvable, then so is $\PC$-$\IMP_d(\Delta)$ (\cref{th:pp-definability and PC reductions}); across domains, the same holds for pp-interpretations and pp-encodings (\cref{th:pp-interpretations and PC reductions}, \cref{th:pp-encodings and PC reductions}). What these statements add to their $\IMP$ counterparts is that each substitution step must itself be carried out by a bounded-degree derivation whose coefficients stay of polynomial bit-size, which is what our proofs supply; \cref{def:PC reduction} makes this notion of reduction precise. Recast in the language of algebras, solvability then propagates along subalgebras, finite direct powers, and homomorphic images (\cref{sect:algebras}).

The framework yields new tractable classes over ternary and larger domains. On a finite chain $\{0,1,\dots,\ell\}$, the median operation $m_\ell$ reduces to the Boolean majority algebra $\mathcal{B}=(\{0,1\},\textsc{Maj})$: the algebra $(\{0,\dots,\ell\},m_\ell)$ is an isomorphic image of a subalgebra of $\mathcal{B}^\ell$, so, since $\PC$-$\IMP_d(\Gamma)$ is solvable for every $\Gamma$ closed under $\textsc{Maj}$ \cite{BortolottiMV_ARXIV25}, the transfer along subalgebras, powers, and homomorphic images applies (\cref{prop:median-chain}). As the special case $\ell=2$, every constraint language over $\{0,1,2\}$ closed under a \emph{fixed-value majority} $\mu_i$---the majority that returns a constant on all-distinct triples---has solvable $\PC$-$\IMP_d$ for every fixed $d$ (\cref{cor:012-majorities}). This in turn places $\IMP_d(\Gamma)$ in $\Ptime$ for such $\Gamma$, advancing the classification of $\IMP_d$ over ternary domains begun by Mastrolilli~\cite{Mastrolilli21} and continued by Bulatov and Rafiey~\cite{BulatovRSTOC22} (\cref{sect:majorities}). Through the $\PC$ criterion of \cite{BortolottiMV25}, these tractable classes also yield new families of degree-automatable $\sos$ proofs.

Along the way we complete the Boolean picture for both problems (\cref{sect:csp-pc-imp,th:complete Boolean classification}). For the decision problem $\IMP_d(\Gamma)$, Mastrolilli \cite{Mastrolilli21} left one case open: whether $\IMP_1(\Gamma)$ is $\coNP$-complete when $\Gamma$ is closed under the two constants $\mathbf{0}$ and $\mathbf{1}$, but under none of $\neg$, $\textsc{Majority}$, $\textsc{Minority}$, $\textsc{Max}$, $\textsc{Min}$. We settle it in the affirmative by reducing $\CSP(\Gpin)$, with $\Gpin := \Gamma\cup\{\pin{0},\pin{1}\}$, to the complement of $\IMP_1(\Gamma)$; for such $\Gamma$, $\CSP(\Gpin)$ is $\NP$-complete by Schaefer's dichotomy \cite{Jeavons98,Schaefer78} (\cref{prop:degree-one-hardness}).

The $\PC$-$\IMP$ dichotomy does not follow from this one. Its tractable side is strictly smaller, and rests on $\PC$ constructions of its own \cite{AtseriasO19,BortolottiMV25}, as the $\IMP_d$ algorithms do not always work within bounded-degree $\PC$. What transfers from the hard side is conditional: $\coNP$-hardness of $\IMP_d(\Gamma)$ rules out solving $\PC$-$\IMP_d(\Gamma)$ unless $\Ptime = \NP$, and it says nothing about the $\textsc{Minority}$ languages closed under a constant operation, where every instance is satisfiable and $\IMP_d(\Gamma) \in \Ptime$ for every $d$. We settle that case unconditionally, since a substitution turns any $\PC$ derivation of a suitable degree-$1$ polynomial into a refutation of a Tseitin system over an expander (\cref{prop:affine degree one}).

\runin{Related work.}
Closest to our reduction framework is the work of Bulatov and Rafiey \cite{BulatovR26}, who develop a reduction framework based on interpolation, similar to ours, for the closely related problem $\chi\IMP_d$: given an ideal $\mathrm{I}$ and polynomials $p_1, \ldots, p_m$, decide whether some linear combination of them belongs to $\mathrm{I}$. They aptly name their technique \emph{reduction by substitution}. Their framework also covers reductions by pp-definitions and pp-interpretations and, together with \cite{BulatovRSTOC22}, the corresponding algebraic reductions. Notably, solving $\chi\IMP_d$ efficiently lets them construct $d$-truncated Gr\"obner bases semantically: candidates are selected one at a time and tested for membership in the ideal by checking whether the candidate vanishes over the relevant variety. It remains however unclear whether such an algorithm can be simulated by $\PC$. Our question of interest is whether a reduction is viable \emph{syntactically}, that is, at the level of $\PC$ derivations, which, we emphasize, is what our results need in order to propagate to $\sos$ automatability.

A concept closely related to $\PC$-$\IMP_d$, and extensively studied in proof complexity, is \emph{automatability} \cite{BonetPR00}: a proof system is automatable if a proof can be found in time polynomial in the size of the \emph{shortest} proof. In this strong sense the question is settled negatively for $\PC$, since automating it is $\NP$-hard \cite{deRezendeGNPRS21}; the weaker degree-automatability is, as discussed above, more delicate. These results treat $\PC$ as a \emph{refutation} system, as do the classical degree lower bounds \cite{BussGIP01,Razborov98}.
In the context of refutations, Atserias and Ochremiak \cite{AtseriasO19} show that the standard $\CSP$ reductions preserve proof complexity across the main proof systems, giving an algebraic characterization of which languages admit small certificates, with $\CSP(\Gamma)$ presented through homomorphisms of relational structures.
In the context of $\PC$-$\IMP_d$ derivations of arbitrary targets $f$, the bounded-degree viewpoint already appears in Jefferson et al.\ \cite{JeffersonJGD13}, whose polynomial-time truncated Gr\"obner computations simulate the local-consistency (propagation) algorithms of constraint programming.

\section{Preliminaries}\label{sect:preliminaries}

\runin{Polynomial Calculus.} 
Given a set of polynomial equality constraints $\mathcal{P} = \{ p_1 = 0, \ldots, p_m = 0 \} \subseteq \mathbb{Q}[x_1, \ldots, x_n]$, $\PC$ is characterized by the inference rules: let $p_i \in \mathcal{P}$, $f,g \in \mathbb{Q}[x_1, \ldots, x_n]$ and $a,b \in \mathbb{Q}$
\begin{align}\label{eq:PC rules}
    \frac{}{p_i=0} \qquad \frac{f=0 \qquad g=0}{a f + b g =0} \qquad \frac{f=0}{x_j f=0} 
\end{align}
called \emph{axiom}, \emph{addition} and \emph{multiplication} rules, respectively. A $\PC$ derivation of $``q = 0"$ from $\mathcal{P}$ is a sequence $\mathfrak{Q} = [q_1, \ldots, q_{s-1}, q]$ in which each $q_i$ follows from \cref{eq:PC rules}; it certifies that $q$ lies in the ideal generated by $\mathcal{P}$ and, by soundness, that $q$ vanishes on the \emph{feasibility set} $V := \{x \mid p_i(x) = 0 \ \forall i \in [m]\}$. A derivation of $``1 = 0"$ is a $\PC$ \emph{refutation}, certifying $V = \emptyset$. The \emph{degree} of $\mathfrak{Q}$ is the greatest degree appearing in it, and its \emph{bit-size} the number of bits needed to write it down.

Throughout, polynomials are in $\mathbb{Q}[x_1, \ldots, x_n]$, the ring is ordered by \emph{graded lexicographic} (\textsf{grlex}) order, and the proof system is Polynomial Calculus over $\mathbb{Q}$ (usually denoted $\PC/\mathbb{Q}$; we consider no other field and write $\PC$).

\runin{$\CSP$s and polymorphisms.}
Let $\Lambda \subseteq \mathbb{Q}$ be a finite domain and $X = \{x_1, \ldots, x_n\}$ a set of variables. A \emph{$k$-ary relation} is a set $R \subseteq \Lambda^k$, and a \emph{constraint language} $\Gamma$ is a finite set of relations. A \emph{constraint} over $\Gamma$ is a pair $((x_{i_1}, \ldots, x_{i_k}), R)$ with $R \in \Gamma$ a $k$-ary relation, also written $R(x_{i_1}, \ldots, x_{i_k})$. An instance $\mathcal{T}$ of $\CSP(\Gamma)$ is a triple $(X, \Lambda, \mathcal{C})$ with $\mathcal{C}$ a set of constraints; it asks whether some assignment $\phi \colon X \to \Lambda$ satisfies $(\phi(x_{i_1}), \ldots, \phi(x_{i_k})) \in R$ for every constraint $R(x_{i_1}, \ldots, x_{i_k}) \in \mathcal{C}$. 
An \emph{$m$-ary polymorphism} of $\Gamma$ is an operation $\varphi \colon \Lambda^{m} \to \Lambda$ such that, for every $R \in \Gamma$, applying $\varphi$ componentwise to any $m$ tuples of $R$ (repetitions allowed) yields a tuple of $R$. The set of all polymorphisms of $\Gamma$ is denoted $\Pol(\Gamma)$.

To a constraint $C = R(x_{i_1},\ldots, x_{i_k})$ we associate the (Lagrange) interpolating polynomial $f_C \in \mathbb{Q}[x_1, \ldots, x_n]$ with $f_C(a) = 0 \iff a \in R$ for every $a \in \Lambda^k$, together with the \emph{domain polynomial} $f_{\Lambda}(x_i) = \prod_{\lambda \in \Lambda} (x_i - \lambda)$. See e.g.\ \cite{BulatovRSTOC22,Mastrolilli21} for details.

\runin{On radical ideals.}
We collect some facts used below (see e.g.\ \cite{BeckerW93,CoxLO15}). An ideal $\mathrm{I}$ is \emph{radical} if $p^d \in \mathrm{I}$ implies $p \in \mathrm{I}$. The \emph{variety} $\mathbf{V}(\mathcal{P})$ is the set of points at which every polynomial in $\mathcal{P}$ vanishes; over an algebraically closed field the Weak Nullstellensatz gives $\mathbf{V}(\mathrm{I}) = \emptyset \iff 1 \in \mathrm{I}$. The ideals of interest are generated by axioms together with the domain polynomials $f_\Lambda(x_i)$, and for these the standard Nullstellensatz holds already over $\mathbb{Q}$:
\begin{lemma}[Finite-domain Nullstellensatz]\label{th:domain ideal nullstellensatz}
    Let $\Lambda \subseteq \mathbb{Q}$ be a finite domain, $\mathcal{F} \subseteq
    \mathbb{Q}[x_1,\ldots,x_n]$, and $\mathrm{I} = \langle \mathcal{F} \cup
    \{f_{\Lambda}(x_i)\}_{i \in [n]} \rangle$. Then $\mathrm{I}$ is radical
    \textup{\cite[Lemma 8.13]{BeckerW93}}, $\mathbf{V}(\mathrm{I}) \subseteq \Lambda^n$, and
    for every $f \in \mathbb{Q}[x_1,\ldots,x_n]$,
    \[
        f \in \mathrm{I} \iff f(a) = 0 \ \text{ for all } a \in \mathbf{V}(\mathrm{I}); \qquad \textit{(Strong Nullstellensatz)}
    \]
    in particular $1 \in \mathrm{I} \iff \mathbf{V}(\mathrm{I}) = \emptyset$ \textit{(Weak Nullstellensatz)}.
\end{lemma}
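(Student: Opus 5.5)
The plan is to first establish $\mathbf{V}(\mathrm{I}) \subseteq \Lambda^n$, then the Strong Nullstellensatz equivalence, and finally to read off radicality and the Weak Nullstellensatz as consequences. Here varieties are taken over an algebraically closed extension (say $\mathbb{C}$), or equivalently over $\mathbb{Q}$.

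\textbf{Containment in $\Lambda^n$.} Any common zero $a$ of $\mathrm{I}$ satisfies $f_\Lambda(a_i)=\prod_{\lambda\in\Lambda}(a_i-\lambda)=0$ for each $i$. Hence $a_i\in\Lambda$, so $\mathbf{V}(\mathrm{I})\subseteq\Lambda^n$. In particular all zeros are rational, so it does not matter whether the variety is taken over $\mathbb{Q}$ or over its algebraic closure.

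\textbf{Reduction to multilinear-type normal forms.} The direction ``$\Rightarrow$'' of the Strong Nullstellensatz is soundness: every element of $\mathrm{I}$ vanishes on $\mathbf{V}(\mathrm{I})$. For ``$\Leftarrow$'', let $J=\langle f_\Lambda(x_1),\ldots,f_\Lambda(x_n)\rangle\subseteq \mathrm{I}$.
\begin{itemize}
\item Each $f_\Lambda(x_i)$ is monic in $x_i$ of degree $|\Lambda|$. Dividing successively by them writes any polynomial $g$ as $g\equiv \bar g \pmod J$, where $\bar g$ has degree $<|\Lambda|$ in each variable.
\item Such reduced polynomials are determined by their values on $\Lambda^n$, by multivariate Lagrange interpolation: the space of reduced polynomials and $\mathbb{Q}^{\Lambda^n}$ both have dimension $|\Lambda|^n$, and the evaluation map is injective, which follows by induction on $n$ from the univariate fact that a nonzero polynomial of degree $<|\Lambda|$ cannot vanish on all of $\Lambda$.
\item Consequently, $g\in J$ if and only if $g$ vanishes on $\Lambda^n$. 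Equivalently, $\mathbb{Q}[x]/J\cong\mathbb{Q}^{\Lambda^n}$ via evaluation.
\end{itemize}

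\textbf{The key step: localizing to points.} For each $a\in\Lambda^n$, let $\delta_a$ be the Lagrange indicator polynomial, equal to $1$ at $a$ and $0$ elsewhere on $\Lambda^n$. Now suppose $f$ vanishes on $\mathbf{V}(\mathrm{I})$. For each $a\in\Lambda^n\setminus\mathbf{V}(\mathrm{I})$, some generator $p\in\mathcal{F}$ has $p(a)\neq0$. Then $\delta_a\cdot p/p(a)\in \mathrm{I}$, and this polynomial agrees with $\delta_a$ on $\Lambda^n$, so $\delta_a\equiv \delta_a p/p(a)\pmod J$ and hence $\delta_a\in\mathrm{I}$. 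Next, $f$ agrees on $\Lambda^n$ with $\sum_{a\notin \mathbf{V}(\mathrm{I})} f(a)\delta_a$, since $f$ vanishes at the points of the variety. By the previous step,
\[
f-\sum_{a\notin \mathbf{V}(\mathrm{I})} f(a)\,\delta_a\in J\subseteq \mathrm{I},
\]
and therefore $f\in\mathrm{I}$.

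\textbf{Consequences.} Radicality follows at once: if $p^d\in\mathrm{I}$, then $p^d$ vanishes on $\mathbf{V}(\mathrm{I})$, so $p$ does too, and by the Strong Nullstellensatz $p\in\mathrm{I}$. The Weak Nullstellensatz is the case $f=1$: $1$ vanishes on $\mathbf{V}(\mathrm{I})$ exactly when $\mathbf{V}(\mathrm{I})=\emptyset$.

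The only real care point is the injectivity of the interpolation map on reduced polynomials. This is a standard induction, so I expect no serious obstacle; alternatively one can cite \cite[Lemma 8.13]{BeckerW93} directly for radicality.
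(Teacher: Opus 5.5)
Your proof is correct, but it runs in the opposite logical direction from the paper. The paper gives no standalone proof. It imports radicality from \cite[Lemma 8.13]{BeckerW93} (Seidenberg's lemma: an ideal containing a squarefree univariate polynomial in each variable is radical). The membership criterion is then Hilbert's Strong Nullstellensatz $\mathbf{I}(\mathbf{V}(\mathrm{I}))=\sqrt{\mathrm{I}}=\mathrm{I}$ over $\overline{\mathbb{Q}}$, which descends to $\mathbb{Q}$ because $\mathbf{V}(\mathrm{I})\subseteq\Lambda^n\subseteq\mathbb{Q}^n$.

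You instead prove the membership criterion directly. You use the evaluation isomorphism $\mathbb{Q}[x]/J\cong\mathbb{Q}^{\Lambda^n}$, and you observe that $\delta_a\in\mathrm{I}$ for every $a\in\Lambda^n\setminus\mathbf{V}(\mathrm{I})$, via a single generator $p$ with $p(a)\neq 0$. Only afterwards do you deduce radicality and the weak form.

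The two routes buy different things:
\begin{itemize}
\item The citation route is shorter and places the lemma inside standard theory; Seidenberg's lemma needs only squarefree, not split, univariate polynomials.
\item Your route is elementary and never leaves $\mathbb{Q}$; it works verbatim over any field containing $\Lambda$.
\item Your route is also constructive. The cofactors come from multiplying single generators by the explicit polynomials $\delta_a/p(a)$ and dividing by the monic $f_\Lambda(x_i)$.
\end{itemize}

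That constructive content is what the paper later needs for subsystems on $O(1)$ variables, as in \cref{prop:encoding-independence} and the pullback lemmas of \cref{sect:pp-definitions}. There your certificate is already a constant-degree $\PC$ derivation of polynomial bit-size, with no appeal to Buchberger's algorithm. On all $n$ variables, by contrast, it has $|\Lambda|^n$ summands and degree about $n(|\Lambda|-1)$. That is consistent with the paper using this lemma only as a semantic membership test.
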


\runin{The problem $\PC$-$\IMP_d(\Gamma)$.}
Given a constraint language $\Gamma$ over $\Lambda$, an instance of $\PC$-$\IMP_d(\Gamma)$ is a pair $(f, \mathcal{T})$ with $f \in \mathbb{Q}_d[x_1,\ldots,x_n]$ of bit-size $n^{O(d)}$ and $\mathcal{T}$ an instance of $\CSP(\Gamma)$ with constraint set $\mathcal{C}$. We write $\mathcal{F} := \{f_C\}_{C \in \mathcal{C}} \cup \{f_{\Lambda}(x_i)\}_{i \in [n]}$ for the associated set of generators (\emph{axioms}); the domain polynomials always belong to $\mathcal{F}$, so $\langle \mathcal{F} \rangle$ is radical by \cref{th:domain ideal nullstellensatz}.
We emphasize that the generators of a $\PC$-$\IMP_d(\Gamma)$ instance have degree $O(d)$ and bit-size $n^{O(d)}$ (polynomial in $n$ for fixed $d$); this holds since $\Gamma$ is finite and the $f_C$ are interpolating polynomials.
The input comes with the promise $f \in \langle \mathcal{F} \rangle$, equivalently---since $\PC$ derives from $\mathcal{F}$ exactly the members of $\langle \mathcal{F} \rangle$ (soundness and completeness)---that a $\PC$ derivation of $f$ from $\mathcal{F}$ exists at some degree depending on $n$; the task is to output one. Closely related is the \emph{decision} problem $\IMP_d(\Gamma)$: given such a pair $(f, \mathcal{T})$, decide whether $f \in \langle \mathcal{F} \rangle$. We work throughout over arbitrary finite domains $\Lambda$, not only the Boolean case.

Since $\langle \mathcal{F} \rangle$ is radical and $f_{\Lambda}$ restricts each $x_i$ to $\Lambda$, \cref{th:domain ideal nullstellensatz} identifies $\mathbf{V}(\mathcal{F})$ with the solution set of $\mathcal{T}$, so the promise $f \in \langle \mathcal{F} \rangle$ amounts to $f$ vanishing on all solutions of the $\CSP(\Gamma)$ instance.

\subsection{Polynomial encodings}\label{sect:encodings}
A consequence of radicality is that the particular choice of polynomial encoding does not matter: the degree and size of $\PC$ derivations are then independent of the encoding. Fix a $\CSP(\Gamma)$ instance $\mathcal{T}$ with constraint set $\mathcal{C}$. A \emph{polynomial encoding} of $\mathcal{T}$ is any generator set $\{g_C\}_{C \in \mathcal{C}} \cup \{f_{\Lambda}(x_i)\}_{i \in [n]}$ whose $g_C$ have degree $O(1)$ and vanish exactly on the tuples of their relation; in particular, $\mathcal{F}$ is a polynomial encoding.

\begin{proposition}\label{prop:encoding-independence}
    Let $\mathcal{F} = \{f_C\}_{C} \cup \{f_{\Lambda}(x_i)\}_i$ and $\mathcal{G} = \{g_C\}_{C} \cup \{f_{\Lambda}(x_i)\}_i$ be two polynomial encodings of the same instance $\mathcal{T}$, of bit-size $n^{O(d)}$. Then every $\PC$ derivation of $f$ from $\mathcal{F}$ of degree $\delta$ and bit-size $b$ converts into a $\PC$ derivation of $f$ from $\mathcal{G}$ of degree $\max\{\delta, O(1)\}$ and bit-size $b + |\mathcal{C}|\,n^{O(d)}$, and symmetrically.
\end{proposition}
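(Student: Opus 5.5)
The plan is to replace each axiom $f_C$ in the given derivation by a short, low-degree derivation of $f_C$ from $\mathcal{G}$, prepended to the original derivation. The original derivation from $\mathcal{F}$ uses $f_C$ only through the axiom rule. So once each $f_C$ has been derived from $\mathcal{G}$ at degree $O(1)$ with bit-size $n^{O(d)}$, every later line is unchanged. The total degree is then $\max\{\delta, O(1)\}$ and the bit-size is $b + |\mathcal{C}|\, n^{O(d)}$. The reverse direction follows by exchanging the roles of $\mathcal{F}$ and $\mathcal{G}$.

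The key step is deriving $f_C$ from $g_C$ together with the domain polynomials, using only the variables $x_{i_1},\dots,x_{i_k}$ of $C$. Consider the ideal $J = \langle g_C, f_\Lambda(x_{i_1}),\dots,f_\Lambda(x_{i_k})\rangle \subseteq \mathbb{Q}[x_{i_1},\dots,x_{i_k}]$. By \cref{th:domain ideal nullstellensatz}, its variety is exactly $R \subseteq \Lambda^k$, because $g_C$ vanishes on precisely the tuples of $R$. The same lemma shows that $J$ is radical. Since $f_C$ vanishes on $R$, the Strong Nullstellensatz gives $f_C \in J$.

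It remains to bound the degree and coefficients of a certificate for $f_C \in J$. Here I use that $k$, $\Lambda$ and $\Gamma$ are fixed. There are finitely many relations $R \in \Gamma$. For each one we may fix, once and for all, a certificate $f_R = \sum_j h_j\, g_R + \sum_t u_t\, f_\Lambda(y_t)$ in abstract variables $y_1,\dots,y_k$. This certificate has some constant degree $D_R$ and constant-size rational coefficients. One subtlety is that $g_C$ might not be determined by $R$ alone, since the definition only asks for degree $O(1)$ and the correct zero set.

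I handle that subtlety directly. First reduce $g_C$ modulo the domain polynomials to its multilinear-type normal form. This normal form is the unique interpolant of the values of $g_C$ on $\Lambda^k$. The reduction is a $\PC$ derivation of degree $\deg g_C = O(1)$, and it has polynomial bit-size because the bit-size of $g_C$ is $n^{O(d)}$. Next, multiply this normal form by the interpolant $\chi$ of the function equal to $1/g_C(a)$ off $R$ and $0$ on $R$. Reducing again modulo the domain polynomials produces the indicator interpolant $\mathbb{1}_{\Lambda^k\setminus R}$. Finally, $f_C$ equals $f_C \cdot \mathbb{1}_{\Lambda^k\setminus R}$ modulo the domain ideal, because $f_C$ vanishes on $R$. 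Each step is a product with a fixed-degree polynomial followed by division by the $f_\Lambda(x_{i_t})$. Every such step is realised by $\PC$ multiplication and addition rules at degree $O(|\Lambda| k + \deg g_C) = O(1)$.

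The main obstacle I expect is the bit-size bound. The coefficients of $\chi$ are rational functions of the values $g_C(a)$, for the constantly many points $a \in \Lambda^k$. Each such value has bit-size $n^{O(d)}$, so Lagrange interpolation over a constant-size grid keeps every coefficient at bit-size $n^{O(d)}$. Division by the domain polynomials uses only the fixed constants of $\Lambda$, so it does not increase bit-size beyond a constant factor per step. Summing over all constraints gives the additive term $|\mathcal{C}|\, n^{O(d)}$.
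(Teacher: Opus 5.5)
Your proof is correct. Its outer layer matches the paper's: replace each axiom line $f_C$ by a local derivation from $g_C$ and the domain polynomials of the $k$ variables of $C$, leave the shared domain axioms alone, and add up degrees and bit-sizes.

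You diverge from the paper in how the local derivation is produced. The paper gets $f_C \in \langle g_C, f_\Lambda(x_{i_1}), \ldots, f_\Lambda(x_{i_k})\rangle$ from the Strong Nullstellensatz and then treats the certificate as a black box. It runs Buchberger's algorithm on this $O(1)$-variable system, divides $f_C$ by the result, and records every S-polynomial and reduction step as a $\PC$ line. It then argues the computation takes $O(1)$ steps, each polynomial in the $n^{O(d)}$ input bit-size.

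You instead write the certificate down explicitly, in effect as $f_C = (f_C\,\chi)\, g_C + \sum_t u_t\, f_\Lambda(x_{i_t})$, with $\chi$ interpolating $1/g_C$ off $R$ and $0$ on $R$. The remainder vanishes on all of $\Lambda^k$. The domain polynomials have pairwise coprime leading monomials $x_{i_t}^{|\Lambda|}$, so they form a Gr\"obner basis, and division expresses the remainder with no degree increase.

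This gives you an explicit degree bound of about $2k(|\Lambda|-1) + \deg g_C$. It also makes the bit-size count transparent, since the only $n$-dependent numbers involved are the $O(1)$ values $g_C(a)$ and their inverses. And it spares you from justifying that Buchberger halts after boundedly many steps with controlled coefficient growth, which the paper asserts from the bounded number of variables. The explicit identity also makes your earlier Nullstellensatz step and the per-relation certificate you set aside unnecessary, since the identity itself certifies membership.

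The paper's black box is more generic: it settles any local membership question without inspecting the generators. However, every later reuse of that local argument involves a single constraint polynomial plus domain polynomials. This covers the pulled-back axioms in the pp-definition and pp-interpretation reductions and the subalgebra lemma, so your construction would serve there as well.
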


    \begin{proof}
    For each constraint $C = R(x_{i_1}, \ldots, x_{i_k})$, both $\{f_C\} \cup \{f_{\Lambda}(x_{i_j})\}_{j \in [k]}$ and $\{g_C\} \cup \{f_{\Lambda}(x_{i_j})\}_{j \in [k]}$ have common zero set $R$ over $\Lambda^k$, so by \cref{th:domain ideal nullstellensatz} they generate the same ideal, namely the set of polynomials vanishing on $R$. In particular $f_C \in \langle g_C, \{f_{\Lambda}(x_{i_j})\}_{j \in [k]} \rangle$, an ideal presenting polynomials with $O(1)$ variables and degree, and bit-size $n^{O(d)}$, so by the following procedure $f_C$ admits a $\PC$ derivation from $\mathcal{G}$ of degree $O(1)$ and bit-size $n^{O(d)}$.
    \begin{algorithmic}[1]
        \Statex Input : $f_C, \mathcal{G}_C = \{g_C, f_{\Lambda}(x_{i_1}), \ldots, f_{\Lambda}(x_{i_k})\}$.
        \Statex Output : a $\PC$-derivation $\mathcal{Q} = [q_1, \ldots, q_s = f_C]$ from $\mathcal{G}_C$.
        \Statex
        \State $\mathcal{H} := \textbf{Buchberger}(\mathcal{G}_C)$
        \State $\mathcal{Q} := \textbf{Step-by-Step-Buchberger}(\mathcal{G}_C)$
        \State $\mathcal{Q} := \mathcal{Q} + \textbf{Reverse}(\textbf{Step-by-Step-PolynomialDivision}(f_C, \mathcal{H}))$
        \State \Return $\mathcal{Q}$
    \end{algorithmic}
    Here \textbf{Buchberger} and \textbf{Step-by-Step-Buchberger} both run the standard Buchberger algorithm (see e.g. \cite{CoxLO15}); the former returns only the Gröbner basis of $\langle \mathcal{G}_C \rangle$, the latter the sequence of polynomials produced during the run, as a $\PC$ derivation. The subprocedure \textbf{Step-by-Step-PolynomialDivision} is analogous, and \textbf{Reverse} flips the list so that $\mathcal{Q}$ terminates with $f_C$.

    The algorithm relies essentially on two procedures: forming an S-polynomial and polynomial division. Both only add polynomials and multiply them by monomials, which are $\PC$ operations. Crucially, since $\Gamma$ is finite, the arity $k$ is bounded by a constant, so this subsystem of $\mathcal{G}$ has $O(1)$ variables and the Gröbner computation terminates in $O(1)$ \emph{steps}. Each step is polynomial in the bit-size of the encoding, which is $n^{O(d)}$, so the computation runs in \emph{time} $n^{O(d)}$ and yields a derivation of constant degree with coefficients of bit-size $n^{O(d)}$ (see also \cite{BortolottiMV_ARXIV25}).

    Replacing each axiom line $f_C$ by its derivation from $\mathcal{G}$ turns the original derivation into one from $\mathcal{G}$; the domain polynomials are shared by $\mathcal{F}$ and $\mathcal{G}$ and need no conversion. The result is a $\PC$ derivation of $f$ from $\mathcal{G}$: its degree is the maximum of $\delta$ and the $O(1)$ conversion degrees, and its bit-size exceeds $b$ by $n^{O(d)}$ per constraint, hence by $|\mathcal{C}|\,n^{O(d)}$. The symmetric statement is identical with $\mathcal{F}, \mathcal{G}$ exchanged.
\end{proof}


\section{$\PC$-$\IMP$ Reductions}\label{sect:PC-IMP reductions}

This section develops the reduction machinery that underlies the rest of the paper. The aim is to transfer $\PC$-$\IMP_d$ tractability from one constraint language to another: if we can solve bounded-degree $\PC$-$\IMP$ over $\Gamma$, and $\Gamma$ is related to $\Delta$ in the right algebraic sense, then we can solve it over $\Delta$ as well. The pp-definitions and pp-interpretations we use are standard; see e.g.\ \cite{BartoKW17}. We first make explicit the notion of reduction that both subsections instantiate.

\begin{definition}\label{def:PC reduction}
    Let $(f_{\mathcal{T}}, \mathcal{T})$ and $(f_{\mathcal{S}}, \mathcal{S})$ be instances of $\PC$-$\IMP$, over variables $X$ and $Y$ and with generators $\mathcal{F}_{\mathcal{T}}$ and $\mathcal{F}_{\mathcal{S}}$, respectively. A \emph{substitution of degree $d_1$} is a family $r = \{r_y\}_{y \in Y}$ of polynomials of $\mathbb{Q}[X]$ with $\deg r_y \leq d_1$ and coefficients of bit-size polynomial in $n$; we write $r(h)$ for the polynomial obtained from $h$ by replacing every $y$ with $r_y$.

    We say that $(f_{\mathcal{T}}, \mathcal{T})$ is \emph{$(d_1,d_2)$-reducible} to $(f_{\mathcal{S}}, \mathcal{S})$ if there is a substitution $r$ of degree $d_1$ such that the following admit $\PC$ derivations from $\mathcal{F}_{\mathcal{T}}$, of degree $d_2$ and of bit-size polynomial in $n$:
    \begin{enumerate}
        \item $r(h)$, for every $h \in \mathcal{F}_{\mathcal{S}}$;
        \item $f_{\mathcal{T}} - r(f_{\mathcal{S}})$.
    \end{enumerate}

    Let $d$ and $e$ be fixed degrees. We say that $\PC$-$\IMP_d(\Delta)$ \emph{$\PC$-reduces} to $\PC$-$\IMP_e(\Gamma)$ if every instance of $\PC$-$\IMP_d(\Delta)$ is $(O(1),O(1))$-reducible to some instance of $\PC$-$\IMP_e(\Gamma)$, and a polynomial-time algorithm returns that instance together with the substitution $r$ and derivations of the two polynomials in 1. and 2. above.
\end{definition}

$(d_1,d_2)$-reducibility was already introduced in \cite{BussGIP01}, and the additional polynomial-time map on instances in \cite{GalesiGPS23}. Both are stated for \emph{refutations}. We additionally require a bound on the bit-size. Given a substitution $r$, a degree-$\delta$ derivation of $f_{\mathcal{S}}$ from $\mathcal{F}_{\mathcal{S}}$ becomes a degree-$d_1\delta$ derivation of $r(f_{\mathcal{S}})$ from $r(\mathcal{F}_{\mathcal{S}})$, and $(d_1, d_2)$-reducibility gives a well-behaved $\PC$ derivation of $f_{\mathcal{T}}$ from $\mathcal{F}_{\mathcal{T}}$. A $\PC$ reduction therefore transfers solvability, which is how \cref{th:pp-definability and PC reductions,th:pp-interpretations and PC reductions} are applied below.

\subsection{Pp-definitions}\label{sect:pp-definitions}

In this section we show how the efficiency of solving $\PC$-$\IMP_d$ transfers by means of pp-definitions, which we first recall.

\begin{definition}
    Let $\Gamma, \Delta$ be constraint languages over the same finite domain $\Lambda$. We say that $\Gamma$ \emph{pp-defines} $\Delta$ (or $\Delta$ is \emph{pp-definable} from $\Gamma$) if for each relation (predicate) $R \subseteq \Lambda^{k}$ in $\Delta$ there exists a first order formula $L$ over variables $\{x_1, \ldots , x_m, x_{m+1}, \ldots, x_{m+k}\}$ that uses only constraints (predicates) from $\Gamma$, equality relations, and conjunctions, such that
    \begin{equation}\label{eq:pp-formula}
        R(x_{m+1}, \ldots, x_{m+k}) = \exists x_1 \ldots \exists x_m L.
    \end{equation}
    Such an expression is called a \emph{primitive positive (pp-)formula}.
\end{definition}

We write $=_{\Lambda}$ for the equality relation $\{(a,a) : a \in \Lambda\}$ on $\Lambda$ and encode it by the polynomial $f_{=_{\Lambda}}(x,x') = x - x'$.

Fix $\Gamma, \Delta$ over $\Lambda$ with $\Gamma$ pp-defining $\Delta$, and let $(f_{\Delta}, \mathcal{T}_{\Delta})$ be an instance of $\PC$-$\IMP_d(\Delta)$ with $\mathcal{T}_\Delta = (X, \Lambda, \mathcal{C}_\Delta)$. Write
\begin{align}\label{eq:generators Delta}
    \mathcal{F}_\Delta := \{ \bigcup_{C \in \mathcal{C}_{\Delta}} f_C \} \cup \{f_{\Lambda}(x)\}_{x \in X}
\end{align}
for the generators of $\mathcal{T}_\Delta$. Each constraint $C \in \mathcal{C}_\Delta$ has a pp-formula $C(X_C) = \exists Y_C\, L_C(Y_C, X_C)$, with $L_C$ a conjunction of constraints from $\Gamma$ and equalities on $Y_C \cup X_C$, where the variables of $Y_C$ are taken fresh, so that the $Y_C$ are pairwise disjoint. Collecting the existentially quantified variables $Y = \bigcup_{C} Y_C$ and all the constraints occurring in $L_C$ into $\mathcal{C}_{\Geq}$ yields the instance $\mathcal{T}_{\Geq} := (Y \cup X, \Lambda, \mathcal{C}_{\Geq})$, with generators
\begin{align}\label{eq:generators Gamma}
    \mathcal{F}_{\Geq} := \{\bigcup_{C \in \mathcal{C}_{\Geq}} f_C \} \cup \{f_{\Lambda} (x) \}_{x \in Y \cup X}.
\end{align}
Setting $f_{\Geq} := f_\Delta$, which we denote simply by $f$, we obtain the instance $(f, \mathcal{T}_{\Geq})$, which however is not an instance of $\PC$-$\IMP_d(\Gamma)$ as the equality constraints of the pp-formulas are not constraints of $\Gamma$. 

We therefore collapse them. Specifically, consider the partition of $Y \cup X$ by means of equalities of $\mathcal{T}_{\Geq}$, let $\sigma$ be the map that sends every variable of $Y \cup X$ to a fixed representative of its class, chosen in $X$ whenever the class meets $X$, and let $\mathcal{T}_{\Gamma} := (\sigma(Y \cup X), \Lambda, \mathcal{C}_{\Gamma})$ be the instance obtained by substituting $\sigma(x)$ for $x$ and removing the equality constraints. We obtain the generators
\begin{align}\label{eq:generators Gamma collapsed}
    \mathcal{F}_{\Gamma} := \{\bigcup_{C \in \mathcal{C}_{\Gamma}} f_C \} \cup \{f_{\Lambda}(x)\}_{x \in \sigma(Y \cup X)}.
\end{align}
The underlying instance $\mathcal{T}_{\Gamma}$ is an instance of $\CSP(\Gamma)$. Lastly, we consider the polynomial $\sigma(f)$, of degree at most $d$ and, by the choice of the representatives, in the variables of $X$.

\begin{lemma}[\cite{BulatovRSTOC22}]\label{th:pp-definability IMP reduction}
    $f \in \langle \mathcal{F}_{\Delta} \rangle$ if and only if $\sigma(f) \in \langle \mathcal{F}_{\Gamma} \rangle$.
\end{lemma}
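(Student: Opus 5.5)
Both ideals contain the domain polynomials of all their variables, so the Finite-domain Nullstellensatz (\cref{th:domain ideal nullstellensatz}) applies to each of them. Membership in either ideal is therefore equivalent to vanishing on the corresponding variety, and that variety is the solution set of the associated instance inside $\Lambda^{n}$. The lemma then reduces to a purely semantic statement:
\[
f \text{ vanishes on all solutions of } \mathcal{T}_\Delta \iff \sigma(f) \text{ vanishes on all solutions of } \mathcal{T}_\Gamma .
\]
The plan is to compare the two solution sets through the intermediate instance $\mathcal{T}_{\Geq}$.

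\textbf{Step 1: from $\mathcal{T}_\Delta$ to $\mathcal{T}_{\Geq}$.} The solutions of $\mathcal{T}_\Delta$ are exactly the projections onto $X$ of the solutions of $\mathcal{T}_{\Geq}$. In one direction, each pp-formula $C(X_C)=\exists Y_C\,L_C$ means that any assignment $a$ to $X$ satisfying $C$ extends to witnesses for $Y_C$. Since the sets $Y_C$ are pairwise disjoint and fresh, these witnesses can be chosen independently for each constraint, which gives a solution of $\mathcal{T}_{\Geq}$. In the other direction, the restriction to $X$ of any solution of $\mathcal{T}_{\Geq}$ satisfies every $C$. As $f$ depends only on the variables in $X$, it vanishes on all solutions of $\mathcal{T}_\Delta$ if and only if it vanishes on all solutions of $\mathcal{T}_{\Geq}$.

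\textbf{Step 2: from $\mathcal{T}_{\Geq}$ to $\mathcal{T}_\Gamma$.} Solutions of $\mathcal{T}_{\Geq}$ are constant on each equality class, because equality constraints are composed transitively along the partition. Hence the map $b\mapsto b\circ\sigma$, taking an assignment on $\sigma(Y\cup X)$ to an assignment on $Y\cup X$, is a bijection from solutions of $\mathcal{T}_\Gamma$ onto solutions of $\mathcal{T}_{\Geq}$. This uses two facts: each constraint of $\mathcal{C}_\Gamma$ is the $\sigma$-image of a non-equality constraint of $\mathcal{C}_{\Geq}$, and the equalities hold automatically on assignments of the form $b\circ\sigma$. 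Under this correspondence, the value of $f$ at $b\circ\sigma$ equals the value of $\sigma(f)$ at $b$.

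\textbf{Step 3: combining.} Steps 1 and 2 together give the semantic equivalence, and the Nullstellensatz converts it back into the ideal-membership statement of the lemma.

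The one subtle point is the bookkeeping of which variables $f$ and $\sigma(f)$ live in. Two distinct variables $x,x'\in X$ may be identified by $\sigma$, and some variables of $X$ may occur in no constraint. The first case is handled by Step 2: every solution of $\mathcal{T}_{\Geq}$ assigns equal values to identified variables, so identifying them loses no information. The second case is handled by keeping every variable of $X$ in the variable set of both instances together with its domain polynomial, so that such variables range freely over $\Lambda$ in both varieties. With these two points in place, each step is a direct check.
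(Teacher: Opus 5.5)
Your proposal is correct and follows essentially the same route as the paper. Both arguments reduce membership to vanishing on solution sets via the finite-domain Nullstellensatz, then compare solutions through $\mathcal{T}_{\Geq}$: solutions of $\mathcal{T}_\Delta$ are extended by independently chosen witnesses for the fresh existential variables, and solutions of $\mathcal{T}_\Gamma$ are identified with class-constant solutions of $\mathcal{T}_{\Geq}$ via the expansion $c \mapsto \tilde c$ (your $b \mapsto b\circ\sigma$), under which $\sigma(f)(c) = f(\tilde c|_X)$; the only difference is that you package this as a bijection of solution sets, while the paper argues the two implications pointwise.
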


\begin{proof}
    By \cref{th:domain ideal nullstellensatz} the ideals $\langle \mathcal{F}_{\Delta} \rangle$ and $\langle \mathcal{F}_{\Gamma} \rangle$ are radical, with $\Variety{\mathcal{F}_{\Delta}}$ and $\Variety{\mathcal{F}_{\Gamma}}$ the solution sets of $\mathcal{T}_{\Delta}$ and of $\mathcal{T}_{\Gamma}$. It thus suffices to show that $f$ vanishes on $\Variety{\mathcal{F}_{\Delta}}$ if and only if $\sigma(f)$ vanishes on $\Variety{\mathcal{F}_{\Gamma}}$.

    Given $b \in \Lambda^{|Y \cup X|}$ we write $b|_{X}$ for its restriction to the coordinates of the variables in $X$, and given $c \in \Lambda^{|\sigma(Y \cup X)|}$ we write $\tilde{c} \in \Lambda^{|Y \cup X|}$ for the tuple obtained from $c$ by repeating each coordinate along the class of $\sigma$ it represents. Then $\sigma(f)(c) = f(\tilde{c}|_{X})$, as $\sigma(f)$ is obtained from $f$ by substituting $\sigma(x)$ for $x$, where $f$ involves only variables of $X$.

    Assume $f \in \langle \mathcal{F}_{\Delta} \rangle$ and consider $c \in \Variety{\mathcal{F}_{\Gamma}}$. By construction $\tilde{c}$ satisfies every equality constraint of $\mathcal{T}_{\Geq}$, whereas every other constraint of $\mathcal{T}_{\Geq}$ is mapped by $\sigma$ to a constraint of $\mathcal{T}_{\Gamma}$, and so satisfied by $c$. Hence $\tilde{c} \in \Variety{\mathcal{F}_{\Geq}}$ and, since each $C \in \mathcal{C}_{\Delta}$ is satisfied as soon as $L_C$ is, $\tilde{c}|_{X} \in \Variety{\mathcal{F}_{\Delta}}$. We conclude that $\sigma(f)(c) = f(\tilde{c}|_{X}) = 0$.

    Conversely, assume $\sigma(f) \in \langle \mathcal{F}_{\Gamma} \rangle$ and consider $a \in \Variety{\mathcal{F}_{\Delta}}$. As $a$ satisfies every $C \in \mathcal{C}_{\Delta}$, we may extend it to a tuple $b$ with $b|_{X} = a$ by choosing for each such $C$ any tuple of witnesses for the variables $Y_C$ of its pp-formula. Every constraint of $L_C$ is then satisfied, equalities included, so that $b \in \Variety{\mathcal{F}_{\Geq}}$ and, since the classes of $\sigma$ are generated by the equality constraints, $b$ repeats each coordinate along the class of $\sigma$ it belongs to. Letting $c$ be the restriction of $b$ to the coordinates of the representatives we thus have $b = \tilde{c}$, where $c \in \Variety{\mathcal{F}_{\Gamma}}$, since every constraint of $\mathcal{C}_{\Gamma}$ is the image under $\sigma$ of a constraint of $\mathcal{C}_{\Geq}$ satisfied by $b$. Therefore $f(a) = f(\tilde{c}|_{X}) = \sigma(f)(c) = 0$.
\end{proof}

The above lemma supplies the polynomial-time map on instances. In the rest of this subsection we show that this is indeed a $\PC$ reduction.

Since $\sigma(f) \in \langle \mathcal{F}_{\Gamma} \rangle$, the pair $(\sigma(f), \mathcal{T}_{\Gamma})$ is an instance of $\PC$-$\IMP_d(\Gamma)$. Two steps then take any degree-$O(d)$ $\PC$ derivation of $\sigma(f)$ from $\mathcal{F}_{\Gamma}$ to one of $f$ from $\mathcal{F}_{\Delta}$: (a) we have to eliminate the variables of $Y$, and (b) to retrieve a derivation of $f$ from the derivation of $\sigma(f)$ from generators of $\Gamma$ (as opposed to images of generators by $\sigma$). We take care of the latter first.

\begin{lemma}\label{th:renaming}
    Let $\mathcal{F} \subseteq \mathbb{Q}[x_1, \ldots, x_n]$, let $x, x'$ be variables such that $x - x'$ has a $\PC$ derivation from $\mathcal{F}$ of degree $\delta_0$ and bit-size $s_0$, and let $\tau$ denote the substitution of $x$ by $x'$. Then for every $g \in \mathbb{Q}[x_1, \ldots, x_n]$ of degree at most $D$ the polynomial $g - \tau(g)$ has a $\PC$ derivation from $\mathcal{F}$ of degree $\max\{\delta_0, D\}$ and bit-size polynomial in $s_0$, in the bit-size of $g$, and in $D$.
\end{lemma}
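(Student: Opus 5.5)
By linearity it suffices to treat monomials. Write $g=\sum_{\alpha} c_\alpha m_\alpha$. Since $\tau$ is a ring homomorphism, $g-\tau(g)=\sum_\alpha c_\alpha\,(m_\alpha-\tau(m_\alpha))$. Once each $m_\alpha-\tau(m_\alpha)$ has been derived, a single application of the addition rule per term produces $g-\tau(g)$. The coefficients $c_\alpha$ are taken from $g$, so they contribute only the bit-size of $g$. The number of monomials is bounded by the size of the description of $g$. Hence everything reduces to deriving $m-\tau(m)$ for a monomial $m$ of degree at most $D$.

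For a monomial, write $m=x^{k}u$ with $u$ free of $x$ and $k+\deg u\le D$. Then $\tau(m)=x'^{k}u$. I would use the telescoping identity
\[
x^{k}-x'^{k}=\sum_{j=0}^{k-1} x^{j}x'^{\,k-1-j}\,(x-x'),
\]
which gives
\[
m-\tau(m)=\sum_{j=0}^{k-1} x^{j}x'^{\,k-1-j}u\,(x-x').
\]
Each summand is obtained from the given derivation of $x-x'$ by applying the multiplication rule $k-1+\deg u\le D-1$ times. The intermediate degrees rise monotonically from $1$ to at most $D$, so every line has degree at most $\max\{\delta_0,D\}$. The $k$ summands are then combined with coefficients $1$, so all new coefficients are $\pm1$. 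Alternatively, one can avoid telescoping by induction on $k$: from $x^{k-1}u-x'^{k-1}u$, multiply by $x$, add $x'^{k-1}u\,(x-x')$, and obtain $x^ku-x'^ku$. This gives the same bounds.

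The one point needing care is the bit-size accounting, and the main risk is re-copying the derivation of $x-x'$ once per summand. This costs $s_0$ times the number of summands, which is polynomial in $s_0$, $D$, and the number of monomials of $g$, and that is still acceptable. To keep things clean I would include the derivation of $x-x'$ only once and refer to that line repeatedly, since a $\PC$ derivation is a sequence whose lines may be reused. The remaining additions are about $D$ lines per monomial, each of degree at most $D$, with coefficients $\pm1$ or the $c_\alpha$. The total is therefore $s_0$ plus a polynomial in $D$ and the bit-size of $g$.
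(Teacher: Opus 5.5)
Your proposal is correct and follows essentially the paper's route. The paper also writes $g-\tau(g)=(x-x')\cdot h$ with $h=\sum_{t\ge 1}c_t\sum_{u=0}^{t-1}x^{u}(x')^{t-1-u}$, grouping by powers of $x$ rather than by monomials, and gets the derivation by multiplying the line $x-x'$ by the monomials of $h$ and summing; you additionally spell out the degree and bit-size bookkeeping, including reusing the single derivation of $x-x'$, which the paper leaves implicit.
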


\begin{proof}
    Write $g = \sum_{t \geq 0} c_t\, x^t$ with $c_t \in \mathbb{Q}[x_1, \ldots, x_n]$ not involving $x$, so that $\tau(g) = \sum_{t \geq 0} c_t\, (x')^t$ and
    \begin{align}\label{eq:substitution divisibility}
        g - \tau(g) \; = \; \sum_{t \geq 1} c_t \, \big(x^t - (x')^t\big) \; = \; (x - x') \cdot h, \qquad h \; := \; \sum_{t \geq 1} c_t \sum_{u = 0}^{t-1} x^{u} (x')^{t-1-u}.
    \end{align}
    This exhibits $g - \tau(g)$ as a multiple of $x - x'$. The claimed $\PC$ derivation follows accordingly.
\end{proof}

Every equality constraint of $\mathcal{T}_{\Geq}$ contributes to $\mathcal{F}_{\Geq}$ an axiom of degree $1$. Hence, for any two variables $x, x'$ such that $\sigma(x) = \sigma(x')$, summing these axioms along a path joining $x$ to $x'$ yields a degree-$1$ derivation of $x - x'$ from $\mathcal{F}_{\Geq}$. Therefore \cref{th:renaming} can be applied both to the generators of $\mathcal{F}_{\Gamma}$ and to $\sigma(f)$, proving the following.

\begin{lemma}\label{th:equality collapse}
    Given a degree-$O(d)$ $\PC$ derivation of $\sigma(f)$ from $\mathcal{F}_{\Gamma}$ of bit-size $b$, a degree-$O(d)$ $\PC$ derivation of $f$ from $\mathcal{F}_{\Geq}$ can be computed in time polynomial in $b$ and $n$.
\end{lemma}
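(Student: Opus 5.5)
The plan is to combine the given derivation with \cref{th:renaming}, applied once to each generator and once to the target, in three steps.

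First, the generators. Every generator $h \in \mathcal{F}_{\Gamma}$ is $\sigma(h')$ for some $h' \in \mathcal{F}_{\Geq}$. This $h'$ is either the interpolant $f_{C'}$ of a non-equality constraint $C' \in \mathcal{C}_{\Geq}$, or a domain polynomial $f_\Lambda(x)$ with $\sigma(x)$ a representative. I will derive $h$ from $\mathcal{F}_{\Geq}$ by starting from the axiom $h'$ and renaming its variables one at a time into their representatives. For each variable $x$ of $h'$ with $\sigma(x) \neq x$, I sum the degree-$1$ equality axioms along a path in the equality graph from $x$ to $\sigma(x)$. This yields a degree-$1$ derivation of $x - \sigma(x)$ whose bit-size is linear in the path length, hence polynomial in $n$. \cref{th:renaming} then gives a derivation of $g - \tau(g)$ of degree $\max\{1,\deg h'\} = O(1)$. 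Subtracting it from the current polynomial performs the rename. Since $\Gamma$ is finite, $h'$ has $O(1)$ variables, so $O(1)$ renamings suffice. The coefficients of $h'$ do not change under renaming, so each intermediate polynomial has degree $O(1)$ and the bit-size of $h'$.

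Second, I substitute these derivations for the axiom lines of the given derivation of $\sigma(f)$ from $\mathcal{F}_{\Gamma}$. The result is a derivation of $\sigma(f)$ from $\mathcal{F}_{\Geq}$ of degree $O(d)$ and bit-size $b + |\mathcal{F}_{\Gamma}|\cdot\mathrm{poly}(n)$. The variables $\sigma(Y \cup X)$ are a subset of $Y \cup X$, so no further conversion is needed.

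Third, I pass from $\sigma(f)$ to $f$. Since $f$ involves only variables of $X$ and has degree at most $d$, I apply \cref{th:renaming} to each variable $x$ of $f$ with $\sigma(x) \neq x$, in some fixed order. Each application derives $g - \tau(g)$, where $g$ is the current partially renamed version of $f$, with degree $\max\{1, d\}$. Summing these telescoping differences gives a derivation of $f - \sigma(f)$, and adding the derivation of $\sigma(f)$ yields $f$ within degree $O(d)$.

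The main point to check is the bit-size, in two places. The first is the explicit derivation behind \cref{th:renaming}, which writes $g - \tau(g) = (x - x')h$ and derives it by multiplying the derivation of $x - x'$ by the monomials of $h$. The cofactor $h = \sum_t c_t \sum_u x^u (x')^{t-1-u}$ has coefficients bounded by those of $g$. Its number of monomials is at most $D$ times that of $g$, which is $n^{O(d)}$ for the target $f$. The second is that repeated renaming of $f$ keeps its bit-size polynomial. This holds because renaming merges monomials, which adds at most $O(\log)$ bits per coefficient, and the number of monomials never increases. There are at most $n$ renamings, each polynomial in $n^{O(d)}$ and in $b$, so the whole construction runs in time polynomial in $b$ and $n$.
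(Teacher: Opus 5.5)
Your proposal is correct and is essentially the paper's proof. You obtain each $\sigma(f_C)$ from the axiom $f_C$ by subtracting the telescoped renaming differences from \cref{th:renaming} with $\delta_0 = 1$ (path sums of equality axioms), splice these in for the axiom lines of the given derivation, and finish by adding a derivation of $f - \sigma(f)$.

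The one difference is that the paper justifies taking $\{\sigma(f_C)\}_C \cup \{f_\Lambda(\sigma(z))\}_z$ as $\mathcal{F}_{\Gamma}$ via \cref{prop:encoding-independence}, whereas you assume $f_{\sigma(C')} = \sigma(f_{C'})$ directly, which is valid under the positional definition of $f_C$. Your bit-size accounting is more explicit than the paper's, with one imprecision: when two variables of the same polynomial lie in one class, renaming merges coefficients into sums of several original ones rather than leaving them unchanged or bounded, though this still keeps all sizes polynomial.
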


\begin{proof}
    Write $Z := Y \cup X$ for the variables of $\mathcal{T}_{\Geq}$ and consider the images under $\sigma$ of the generators of $\mathcal{F}_{\Geq}$; every equality axiom is mapped to $0$, leaving $\{\sigma(f_C)\}_{C} \cup \{f_{\Lambda}(\sigma(z))\}_{z \in Z}$. Each $\sigma(f_C)$ vanishes on $\Lambda^{|\sigma(Z)|}$ exactly at the tuples satisfying the collapsed constraint, and has degree $O(1)$, so this is a polynomial encoding of $\mathcal{T}_{\Gamma}$ in the sense of \cref{sect:encodings}. By \cref{prop:encoding-independence} we may therefore take it as $\mathcal{F}_{\Gamma}$.

    As observed before \cref{th:equality collapse}, \cref{th:renaming} applies with $\delta_0 = 1$ to any pair of variables having the same image under $\sigma$. Renaming the non-representative variables of a polynomial $g$ one at a time and summing the resulting chain of differences, we therefore obtain, for every $g$, a derivation of $g - \sigma(g)$ from $\mathcal{F}_{\Geq}$.

    Let now $\hat{\mathfrak{G}}$ be a degree-$O(d)$ derivation of $\sigma(f)$ from $\mathcal{F}_{\Gamma}$. It is also possible to find a degree-$O(d)$ derivation of $\sigma(f)$ from $\mathcal{F}_{\Geq}$: since the domain axioms $f_{\Lambda}(\sigma(z))$ are already lines of $\mathcal{F}_{\Geq}$, it suffices to derive the polynomial $f_C - \sigma(f_C)$ and subtract it from the axiom $f_C$, thus obtaining the set given by the $\sigma(f_C)$s, i.e., the starting points of the derivation from $\mathcal{F}_{\Gamma}$. Finally, adding the derivation of $f - \sigma(f)$ from $\mathcal{F}_{\Geq}$ we obtain the claimed $\PC$ derivation of $f$.
\end{proof}

It remains to transform a bounded-degree $\PC$ derivation of $f$ from $\mathcal{F}_{\Geq}$ into one from $\mathcal{F}_\Delta$, in polynomial time. Consider a constraint $C = R_C(x_{m+1}, \ldots, x_{m+k}) \in \mathcal{C}_{\Delta}$ with pp-formula
\begin{align}
    R_C(x_{m+1}, \ldots, x_{m+k}) = \exists x_1 \ldots x_m\, L_1^C \ldots L_s^C,
\end{align}
the $L_j^C$ being constraints from $\Geq$ on $\{x_1, \ldots, x_{m+k}\}$. For every $a \in R_C \subseteq \Lambda^k$ there exists at least one witness $\bar{a} = (a_1, \ldots, a_m) \in \Lambda^m$ so that $(\bar{a},a)$ satisfies every $L_j^C$; fixing a \emph{witness map} $w_C \colon R_C \to \Lambda^m$ choosing one such $\bar{a}$ per $a$,
\begin{align}
    &f_{L_j^C}(w_C(a),a) = 0 \qquad \forall j \in [s] \label{eq:interpolating constraint of pp-formula equals 0}
\end{align}
for every $a \in R_C$. Let $U_i \in \mathbb{Q}[x_{m+1}, \ldots, x_{m+k}]$ interpolate the $i$-th witness coordinate, $U_i(a) = w_C(a)_i$; these are well-defined since $w_C$ is a function. With $U = (U_1, \ldots, U_m)$, \cref{eq:interpolating constraint of pp-formula equals 0} becomes
\begin{align}
    &f_{L_j^C}(U(a),a) = 0\qquad \forall j \in [s], \; \forall a \in R_C\label{eq:constraint polynomials with interpolated variables}.
\end{align}
Writing $\mathcal{F}_C := \{f_C(x_{m+1}, \ldots, x_{m+k})\} \cup \{f_{\Lambda}(x)\}_{x \in \{x_{m+1}, \ldots, x_{m+k}\}}$, the interpolants let us pull back the axioms of $\mathcal{F}_{\Geq}$ into $\mathcal{F}_\Delta$. First, by applying the Strong Nullstellensatz, one can verify the following congruences:

\begin{lemma}\label{th:generators with interpolated variables are in the ideal}
    \begin{enumerate}
        \item $f_{\Lambda}(U_i(x_{m+1}, \ldots, x_{m+k})) \cong 0 \mod \langle \mathcal{F}_C \rangle$ for all $i \in [m]$,
        \item $f_{L_j^C}(U_1, \ldots, U_m, x_{m+1}, \ldots, x_{m+k}) \cong 0 \mod \langle \mathcal{F}_C  \rangle$ for all $j \in [s]$.
    \end{enumerate}
\end{lemma}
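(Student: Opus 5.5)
The plan is to reduce both congruences to a pointwise vanishing check via the Strong Nullstellensatz of \cref{th:domain ideal nullstellensatz}. The ideal $\langle \mathcal{F}_C \rangle$ is generated by $f_C$ together with the domain polynomials $f_{\Lambda}(x_{m+1}), \ldots, f_{\Lambda}(x_{m+k})$. Since it contains all domain polynomials of the variables it involves, it is radical and its variety $\Variety{\mathcal{F}_C}$ lies in $\Lambda^k$. Because $f_C$ vanishes on $\Lambda^k$ exactly at the tuples of $R_C$, we get $\Variety{\mathcal{F}_C} = R_C$. Hence, for any polynomial $g \in \mathbb{Q}[x_{m+1}, \ldots, x_{m+k}]$, we have $g \in \langle \mathcal{F}_C \rangle$ if and only if $g(a) = 0$ for every $a \in R_C$. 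Each of the two items therefore amounts to evaluating the relevant polynomial at points $a \in R_C$.

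The first step is to check that both polynomials in the statement actually lie in $\mathbb{Q}[x_{m+1}, \ldots, x_{m+k}]$. The $U_i$ are polynomials in these variables. Substituting them for $x_1, \ldots, x_m$ inside $f_{\Lambda}$ or inside $f_{L_j^C}$ therefore yields polynomials in $x_{m+1}, \ldots, x_{m+k}$ alone, so the criterion above applies.

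For item 1, fix $a \in R_C$. By construction $U_i(a) = w_C(a)_i \in \Lambda$, since the witness map takes values in $\Lambda^m$. So $f_{\Lambda}(U_i(a)) = \prod_{\lambda \in \Lambda}(w_C(a)_i - \lambda) = 0$. For item 2, fix $a \in R_C$ again; substitution commutes with evaluation, so
\[
f_{L_j^C}(U_1, \ldots, U_m, x_{m+1}, \ldots, x_{m+k})(a) = f_{L_j^C}(U(a), a),
\]
and \cref{eq:constraint polynomials with interpolated variables} says this value is $0$ for every $j \in [s]$. When $L_j^C$ is an equality constraint, $f_{L_j^C}$ is the linear polynomial $x - x'$. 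This is still covered by \cref{eq:interpolating constraint of pp-formula equals 0}, because the witness satisfies the equalities too.

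There is no real obstacle; the only point to handle with care is the identification $\Variety{\mathcal{F}_C} = R_C$. This identification uses two facts: the interpolant $f_C$ vanishes exactly on $R_C$ within $\Lambda^k$, and the domain polynomials cut the variety down to $\Lambda^k$. Together with radicality from \cref{th:domain ideal nullstellensatz}, this is what lets pointwise vanishing on $R_C$ give ideal membership.
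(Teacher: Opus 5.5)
Your proof is correct and follows the paper's approach. Both arguments use radicality of $\langle \mathcal{F}_C\rangle$ and the Strong Nullstellensatz to reduce the two items to pointwise vanishing on $\mathbf{V}(\mathcal{F}_C) = R_C$, checked via $U_i(a) = w_C(a)_i \in \Lambda$ and the witness equations. Your justification of $\mathbf{V}(\mathcal{F}_C)=R_C$, and your remark that everything lives in $\mathbb{Q}[x_{m+1},\ldots,x_{m+k}]$ so that the finite-domain Nullstellensatz applies, make explicit what the paper only recalls.
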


\begin{proof}
    By \cref{th:domain ideal nullstellensatz} the ideal $\langle \mathcal{F}_C \rangle$ is radical. Therefore, by applying Hilbert Strong Nullstellensatz, to show whether a target polynomial belongs to $\langle \mathcal{F}_C \rangle$, it suffices to check whether for all points in $\mathbf{V}(\mathcal{F}_C)$ the target polynomial equals $0$. We also recall that $\mathbf{V}(\mathcal{F}_C) = R_C$.

    Let $a \in R_C$ be a point in the variety. For every interpolant polynomial it holds that $U_i(a) \in \Lambda$, thus $f_{\Lambda}(U_i(a)) = 0$, and that \cref{eq:interpolating constraint of pp-formula equals 0}, proving the first item. The second is proven similarly, using the definition of $U_i$ and of $f_{L_j^C}$.
\end{proof}

\Cref{th:generators with interpolated variables are in the ideal} only places the pulled-back axioms in $\langle \mathcal{F}_C \rangle$. For a $\PC$ reduction we need more: by completeness of $\PC$ over finite domains, each pulled-back axiom admits a \emph{constant-degree} derivation from $\mathcal{F}_\Delta$.

\begin{proposition}\label{th:PC size bounds for "generators with interpolated variables are in the ideal"}
    \begin{enumerate}
        \item For all $i \in [m]$ a degree-$O(1)$ derivation of $f_{\Lambda}(U_i)$ from $\mathcal{F}_C$ is computable in polynomial time.
        \item For all $j \in [s]$ a degree-$O(1)$ derivation of $f_{L_j^C}(U_1, \ldots, U_m, x_{m+1}, \ldots, x_{m+k})$ from $\mathcal{F}_C$ is computable in polynomial time.
    \end{enumerate}
\end{proposition}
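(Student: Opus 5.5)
The approach follows the procedure already used in the proof of \cref{prop:encoding-independence}. Membership is given by \cref{th:generators with interpolated variables are in the ideal}, and the remaining task is to turn it into an explicit derivation with controlled degree and bit-size. The key observation is that everything lives in a constant-size world. The set $\mathcal{F}_C$ involves only the $k$ variables $x_{m+1},\ldots,x_{m+k}$, and $k$ is bounded because $\Gamma$ and $\Delta$ are fixed and finite. The pp-formula for $R_C$ is fixed once per relation of $\Delta$, so $m$ and $s$ are constants as well. The interpolants $U_i$ depend only on $R_C$ and the chosen witness map $w_C$. Hence the $U_i$ and the targets $f_\Lambda(U_i)$ and $f_{L_j^C}(U,x_{m+1},\ldots,x_{m+k})$ are drawn from a finite, instance-independent list of polynomials, up to renaming of variables.

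The steps are as follows.
\begin{enumerate}
    \item Fix the interpolants with bounded degree. Each $U_i$ is taken to be the Lagrange interpolant on $\Lambda^k$ of a function $\Lambda^k\to\Lambda$ extending $w_C(\cdot)_i$ (values outside $R_C$ chosen arbitrarily in $\Lambda$). Then $\deg U_i\le k(|\Lambda|-1)=O(1)$, with constant-size rational coefficients. Consequently $f_\Lambda(U_i)$ and $f_{L_j^C}(U,\ldots)$ have degree $O(1)$.
    \item Derive each target. Apply \cref{th:generators with interpolated variables are in the ideal} to get membership in $\langle\mathcal{F}_C\rangle$. Then run the algorithm from \cref{prop:encoding-independence}: Step-by-Step-Buchberger on $\mathcal{F}_C$ under \grlex, followed by the reversed step-by-step division of the target by the resulting Gröbner basis. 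Since the target is in the ideal, the remainder is $0$. The recorded S-polynomial formations and reductions use only additions and multiplications by monomials, so they form a valid $\PC$ derivation ending at the target.
    \item Bound the cost. Buchberger on a fixed set of polynomials in $O(1)$ variables terminates after $O(1)$ steps, with the maximal degree depending only on the input. So the derivation has degree $O(1)$ and bit-size $O(1)$, and it is produced in constant time per constraint. Because there are finitely many templates, one may even precompute all derivations once and copy them with variables renamed. Summing over the at most $|\mathcal{C}_\Delta|\cdot(m+s)$ targets gives polynomial total time.
\end{enumerate}

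The main obstacle is justifying that the Buchberger-based derivation really has degree $O(1)$ and not merely a finite degree, and that it can be emitted as a $\PC$ sequence. Multiplication by a monomial must be unfolded into successive multiplications by single variables, which is harmless. The degree bound I would obtain from uniformity: the inputs range over a finite set of polynomials determined by $\Gamma$, $\Delta$, $\Lambda$ and the fixed pp-formulas, so the maximum degree over the finitely many runs is an absolute constant. For the same reason the bit-size is constant.
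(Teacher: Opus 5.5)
Your proposal is correct and follows essentially the paper's argument: take membership in $\langle \mathcal{F}_C \rangle$ from \cref{th:generators with interpolated variables are in the ideal}, then run the Buchberger-plus-division procedure of \cref{prop:encoding-independence} on a system with $O(1)$ variables and $O(1)$ degree, which gives a constant-degree derivation computable in polynomial time. Your extra remark (the targets come from a finite, instance-independent list up to renaming, so degree and bit-size are absolute constants and the derivations can be precomputed) spells out what the paper compresses into one line.
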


\begin{proof}
    It suffices to observe that the claims in \cref{th:generators with interpolated variables are in the ideal} do not depend on the number of variables $n$, since $\Delta$ and $\Gamma$ are finite. The systems involved therefore have $O(1)$ variables and $O(1)$ degree, and the argument of \cref{sect:preliminaries} applies unchanged, so the derivations have degree $O(1)$ and are computable in time polynomial in $n$.
\end{proof}

The derivation to be pulled back is the one provided by \cref{th:equality collapse}. Substituting constraint by constraint these bounded-degree pullbacks into a derivation of $f$ from $\mathcal{F}_{\Geq}$ turns this into a derivation from $\mathcal{F}_\Delta$.

\begin{theorem}[$\PC$ reduction by pp-definability]\label{th:pp-definability and PC reductions}
    Let $\Gamma,\Delta$ be two fixed constraint languages over the finite domain $\Lambda$, with $\Gamma$ pp-defining $\Delta$. Then $\PC$-$\IMP_d(\Delta)$ $\PC$-reduces to $\PC$-$\IMP_d(\Gamma)$. In particular, if $\PC$-$\IMP_{d}(\Gamma)$ can be solved in polynomial time for any fixed $d$, then so can $\PC$-$\IMP_d(\Delta)$.
\end{theorem}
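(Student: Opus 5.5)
The plan is to verify \cref{def:PC reduction} for the pair $(f,\mathcal{T}_\Delta)$ and $(\sigma(f),\mathcal{T}_\Gamma)$. The map on instances is the one described before \cref{th:pp-definability IMP reduction}. It is computable in polynomial time: $\Gamma,\Delta$ are fixed, so each pp-formula has constant size, and the partition induced by the equalities is found by a connected-components computation. By \cref{th:pp-definability IMP reduction}, the promise $\sigma(f)\in\langle\mathcal{F}_\Gamma\rangle$ holds, and $\deg\sigma(f)\le d$. So $(\sigma(f),\mathcal{T}_\Gamma)$ is a valid $\PC$-$\IMP_d(\Gamma)$ instance with $e=d$.

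For the substitution $r$, we set $r_x:=x$ for $x\in X$. For a variable $y\in Y_C$ whose class under $\sigma$ does not meet $X$, we fix one constraint $C$ and let $r_y:=U_i$ be the witness interpolant attached to the corresponding coordinate of $w_C$. The interpolants must be consistent across identified variables. To arrange this, we first choose the representatives and then define all witness interpolants through $\sigma$: an existential variable identified with some $x\in X$ is sent to that $x$, and the others are sent to their interpolant. A variable can be identified only with variables of the same pp-formula or with $X$, because the $Y_C$ are fresh and disjoint, so within one formula we interpolate witnesses for the collapsed variables directly. Each $U_i$ has $O(1)$ variables, hence constant degree and coefficients of constant bit-size.

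Condition 1 of the definition asks for derivations of $r(h)$ for $h\in\mathcal{F}_\Gamma$. The domain axioms map to $f_\Lambda(x)$, which are axioms, or to $f_\Lambda(U_i)$. A constraint axiom $f_{L}$ maps to $f_{L}(U,x_{m+1},\ldots,x_{m+k})$, after collapsing. Both kinds are handled by \cref{th:PC size bounds for "generators with interpolated variables are in the ideal"}, since $\mathcal{F}_C\subseteq\mathcal{F}_\Delta$; these derivations have constant degree and polynomial bit-size. To stay within the statements proved, I would apply \cref{th:generators with interpolated variables are in the ideal} with the collapsed formula (equalities substituted). That formula is still a pp-definition of $R_C$ by $\Gamma$ alone, so the same Nullstellensatz argument gives membership. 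Condition 2 is immediate: $\sigma(f)$ involves only $X$-variables and $r$ is the identity on $X$, so $f-r(\sigma(f))=f-\sigma(f)=0$, since $f$ already lives on $X$ and the representatives of $X$-classes lie in $X$ (two variables of $X$ identified by equalities need \cref{th:renaming}). This last point is subtle. If distinct $x,x'\in X$ are identified through a pp-formula, then $f-\sigma(f)$ is a multiple of $x-x'$, and $x-x'$ vanishes on $R_C$, so it has a constant-degree derivation from $\mathcal{F}_C$ by the same Buchberger argument. \Cref{th:renaming} then derives $f-\sigma(f)$ in degree $\max\{O(1),d\}$.

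Finally, solvability transfers. Given a degree-$O(d)$ derivation of $\sigma(f)$ from $\mathcal{F}_\Gamma$, applying $r$ line by line yields a degree-$O(d)$ derivation of $r(\sigma(f))$ from $r(\mathcal{F}_\Gamma)$, with polynomially bounded bit-size growth because $r$ has constant degree and substitution is linear on lines. Splicing in the derivations from condition 1 and adding condition 2 gives the derivation of $f$. I expect the main obstacle to be the well-definedness and consistency of $r$ across the collapse by $\sigma$, especially when equalities identify several $X$-variables or existential variables of one formula with $X$. My first move there is to treat the collapsed pp-formula as the definition and interpolate witnesses only for surviving existential representatives, which reduces everything to \cref{th:PC size bounds for "generators with interpolated variables are in the ideal"} and \cref{th:renaming}. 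This bypasses \cref{th:equality collapse}, which remains available as an alternative route through $\mathcal{F}_{\Geq}$.
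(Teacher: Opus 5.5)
Your route differs from the paper's. You apply one substitution $r$ directly to the collapsed instance $\mathcal{T}_\Gamma$. The paper instead first lifts a derivation of $\sigma(f)$ from $\mathcal{F}_\Gamma$ to one of $f$ from $\mathcal{F}_{\Geq}$ (\cref{th:equality collapse}), and only then pulls back via $y\mapsto U_y$. That gives every $y\in Y$ its own interpolant and treats each equality atom of $L_C$ as one more local axiom in $\langle\mathcal{F}_C\rangle$, so every base case involves a single constraint and no consistency question arises. Your observation that a class avoiding $X$ stays inside one $Y_C$ is correct. However, condition 1 has a gap. The collapse $\sigma$ is global: a class meeting $X$ can run through arbitrarily many pp-formulas, and its single representative is in general not a variable of most constraints touching the class. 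Hence, for a constraint axiom $h=\sigma(f_L)$ of $\mathcal{F}_\Gamma$ coming from $C$, the polynomial $r(h)$ involves variables outside $X_C$ and is not in $\langle\mathcal{F}_C\rangle$. The local Buchberger argument you invoke therefore does not apply, and the ``collapsed formula'' is not even a formula over $X_C$. For instance, let $R\in\Gamma$ be binary and $\Delta=\{S,E\}$ with $S(x,x'):=R(x,x')$ and $E(x,x'):=(x=x')$. Take the instance $S(x_1,x_2),E(x_2,x_3),\ldots,E(x_{n-1},x_n),S(x_n,x_{n+1})$ with $n\ge 3$. Whatever representative $x^*\in\{x_2,\ldots,x_n\}$ is chosen, one of $f_R(x_1,x^*)$, $f_R(x^*,x_{n+1})$ lies in $\mathcal{F}_\Gamma$ while its two variables share no constraint of $\mathcal{T}_\Delta$. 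In general it belongs to $\langle\mathcal{F}_\Delta\rangle$ only through the whole chain of $E$-constraints.

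The repair is precisely what you hoped to bypass. For each $x$ in such a class, derive $x-\sigma(x)$ by summing, along a path of the equality graph, the differences $x_a-x_b$ of consecutive $X$-variables on it. Each such difference vanishes on a single $R_{C'}$, so it has a constant-degree derivation from $\mathcal{F}_{C'}$; the path has length at most $n$, which keeps the bit-size polynomial. Then derive from $\mathcal{F}_C$ the \emph{local} polynomial, namely $f_L$ with each $Y_C$-variable replaced by its interpolant or by the variable of $X_C$ it is equated to inside $L_C$. Move it to $r(h)$ with \cref{th:renaming}. The same chained derivations, rather than a single $\mathcal{F}_C$, are needed for condition 2 whenever two $X$-variables are identified through several formulas. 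Also drop the claim that the collapsed formula is still a pp-definition of $R_C$ by $\Gamma$ alone. It fails once $L_C$ equates two free variables (for $E$ above the collapsed formula is empty). Instead, apply the Strong Nullstellensatz directly to the local polynomial, which vanishes on $R_C$ because each witness satisfies every atom of $L_C$, equalities included. With these changes your single $r$ works and matches \cref{def:PC reduction} literally, with target $(\sigma(f),\mathcal{T}_\Gamma)$. What the paper's detour through $\mathcal{F}_{\Geq}$ buys is that these chains come for free as sums of degree-$1$ equality axioms.
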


\begin{proof}
    Let $(f, \mathcal{T}_{\Delta})$ be an instance of $\PC$-$\IMP_d(\Delta)$, where $\mathcal{T}_{\Delta} = (X, \Lambda, \mathcal{C}_{\Delta})$ is an instance of $\CSP(\Delta)$. Consider $\mathcal{T}_{\Geq}$ defined as in \cref{sect:pp-definitions}--for which \cref{th:pp-definability IMP reduction} holds--and, furthermore, consider the sets of generators $\mathcal{F}_{\Delta}$ and $\mathcal{F}_{\Geq}$ defined in \cref{eq:generators Delta} and \cref{eq:generators Gamma}, respectively. 
    
    Let $\mathfrak{G} = [g_1, g_2, \ldots, g_s = f]$ be any degree-$O(d)$ $\PC$-derivation of $f$ from $\mathcal{F}_{\Geq}$; by \cref{th:equality collapse} one is computable from a degree-$O(d)$ derivation of $\sigma(f)$ from $\mathcal{F}_{\Gamma}$ in time polynomial in $n$ and in the bit-size of the latter. We will define a derivation $\mathfrak{F} = [f_1, f_2, \ldots, f_r = f]$ simulating $\mathfrak{G}$ with only a polynomial increase in size. We will proceed by induction. The idea is as follows: consider any polynomial $g \in \mathfrak{G}$. Then $g \in \mathfrak{G}$ is a polynomial in the variables $Y \cup X$, where $Y$ is the set of variables introduced when pp-defining the constraints $\mathcal{C}_{\Delta}$ using $\Gamma$. For each such $g$ we define $f_g(X) := g(U_Y(X),X)$, where $U_Y$ is the set of interpolating polynomials. If we can show that each $f_g$ can be $\PC$ derived from $\mathcal{F}_{\Delta}$ with bounded degree and polynomial size, then we are done: for the final element $g_s = f$, which already contains only variables in $X$, $f_{g_s}(X) = g_s(X) = f$, so $f_r = f$.
    
    \textit{Base case--Domain polynomials.} There are two types of domain polynomials in the instance $(f, \mathcal{T}_{\Geq})$: either $g = f_{\Lambda}(x)$ for some $x \in X$, in which case trivially $g \in \mathcal{F}_{\Delta}$ and we can add $g$ to $\mathfrak{F}$; otherwise $g = f_{\Lambda}(y)$ for some variable $y \notin X$ and arising from some pp-formula of a constraint $C = R_C(x_{m+1}, \ldots, x_{m+k}) \in \mathcal{C}_{\Delta}$ from $\Gamma$. In the latter case, there exists an interpolant polynomial $U_y$. \cref{th:PC size bounds for "generators with interpolated variables are in the ideal"} guarantees that $f_g := f_{\Lambda}(U_y)$ has a bounded-degree $\PC$ derivation from $\mathcal{F}_C \subseteq \mathcal{F}_{\Delta}$ of polynomial size. Hence, we add $f_g$ and its $\PC$ derivation from $\mathcal{F}_{\Delta}$ to $\mathfrak{F}$.

    \textit{Base case--Constraint polynomials.} These are polynomials also arising from the pp-definition of $\Delta$ by $\Gamma$ of the form $g = f_{L_j^C}(y_1, \ldots, y_m, x_{m+1}, \ldots, x_{m+k})$, equalities included. Similarly as before, \cref{th:PC size bounds for "generators with interpolated variables are in the ideal"} guarantees that $f_g := f_{L_j^C}(U_1, \ldots, U_m, x_{m+1}, \ldots, x_{m+k})$ has a polynomial-size bounded-degree $\PC$ derivation from $\mathcal{F}_{\Delta}$. We add $f_g$ and its derivation to $\mathfrak{F}$.

    \textit{Inductive step--Addition.} Suppose $g = g_1 + g_2$ and that $f_{g_1} = g_1(U_Y(X),X)$ and $f_{g_2} = g_2(U_Y(X),X)$ can be derived from $\mathcal{F}_{\Delta}$. Then $f_g := f_{g_1} + f_{g_2}$ can also be derived from $\mathcal{F}_{\Delta}$. We simply add $f_g$ to $\mathfrak{F}$.

    \textit{Inductive step--Multiplication.} Suppose $g = z g_1 \in \mathfrak{G}$ with $f_{g_1} = g_1(U_Y(X), X)$ in $\mathfrak{F}$. If $z = x \in X$, then $f_g := x f_{g_1}$ is one $\PC$ multiplication step and we add it to $\mathfrak{F}$. If $z = y \in Y$, then $f_g := U_y f_{g_1}$ has a straightforward $\PC$ derivation of bounded degree and polynomial size, which we add to $\mathfrak{F}$.

    In the terms of \cref{def:PC reduction}, $r$ fixes $X$ and sends each $y \in Y$ to $U_y$, the base cases are the derivations of the $r(h)$, and $f - r(f) = 0$ since $f$ involves only $X$. If $\PC$-$\IMP_d(\Gamma)$ is solvable, then a degree-$O(d)$ derivation of $\sigma(f)$ from $\mathcal{F}_{\Gamma}$ is computable in time $n^{O(d)}$, hence of bit-size $n^{O(d)}$, and the simulation above turns it into one of $f$ from $\mathcal{F}_{\Delta}$ within the same bound.
\end{proof}

\subsection{Pp-interpretations}\label{sec:pp-interpretations}

Reductions by pp-definability are useful, yet they require both languages to be over the same domain. In this section, we show how pp-interpretations allow for reductions of $\PC$-$\IMP$ between languages over different domains.

\begin{definition}\label{def:pp-interpretability}
    Let $\Gamma, \Delta$ be fixed constraint languages over finite domains $\Lambda_{\Gamma}, \Lambda_{\Delta}$, respectively. We say that $\Gamma$ \emph{pp-interprets} $\Delta$ if there exists a natural number $\ell$, a set $F \subseteq \Lambda_{\Gamma}^{\ell}$, and an onto mapping $\pi:F \rightarrow \Lambda_{\Delta}$ such that $\Gamma$ pp-defines the following relations
    \begin{enumerate}
        \item the relation $F$,
        \item the $\pi$-preimage of the equality relation on $\Lambda_{\Delta}$,
        \item the $\pi$-preimage of every relation in $\Delta$,
    \end{enumerate}
    where by the $\pi$-preimage of a $k$-ary relation $S$ on $\Lambda_{\Delta}$ we mean the $\ell k$-ary relation $\pi^{-1}(S)$ on $\Lambda_{\Gamma}$ defined by
    \begin{align*}
        \pi^{-1}(S)(y_{11},\ldots, y_{\ell 1}, y_{1 2}, \ldots, y_{\ell 2}, \ldots, y_{1 k}, \ldots, y_{\ell k}) \quad is \ true
    \end{align*}
    if and only if
    \begin{align*}
        S(\pi(y_{11}, \ldots, y_{\ell 1}), \ldots, \pi(y_{1k}, \ldots, y_{\ell k})) \quad is \ true.
    \end{align*}
\end{definition}

\begin{definition}\label{def:pp-encodings}
    If $\pi$ is a bijection we say that $\Gamma$ \emph{pp-encodes} $\Delta$ (or $\Delta$ is \emph{pp-encoded} by $\Gamma$).
\end{definition}

Consider $\Gamma,\Delta$ constraint languages over finite domains $\Lambda_{\Gamma}, \Lambda_{\Delta}$, respectively, such that $\Delta$ is pp-interpreted by $\Gamma$ via $\pi : F \rightarrow \Lambda_{\Delta}$ with $F \subseteq \Lambda_{\Gamma}^{\ell}$. Since $\pi$ is onto, we may fix once and for all a \emph{section} of $\pi$, that is, a map $s \colon \Lambda_{\Delta} \to F$ with $\pi(s(e)) = e$ for every $e \in \Lambda_{\Delta}$; note that $s$ is injective, and that when $\pi$ is a bijection (the pp-encoding case of \cref{def:pp-encodings}) the only section is $s = \pi^{-1}$.

Let $(f_{\Delta}, \mathcal{T}_{\Delta})$ be an instance of $\PC$-$\IMP_d(\Delta)$ with $\mathcal{T}_{\Delta} = \{X = \{x_1, \ldots, x_n\}, \Lambda_{\Delta}, \mathcal{C}_{\Delta}\}$. Our objective is to show a polynomial time reduction of $\PC$-$\IMP_d(\Delta)$ to $\PC$-$\IMP_c(\Gamma)$ for some constant natural number $c$.

First we define a language $\Gamma'$ and instance $(f_{\Gamma'}, \mathcal{T}_{\Gamma'})$ of $\IMP_{d \ell |\Lambda_{\Gamma}|}(\Gamma')$, where $\mathcal{T}_{\Gamma'} = (Y, \Lambda_{\Gamma}, \mathcal{C}_{\Gamma'})$ and $Y = \{y_{1 1}, \ldots, y_{\ell 1}, \ldots, y_{1 n},\ldots, y_{\ell n}\}$, that is a polynomial-time reduction of $(f_{\Delta}, \mathcal{T}_{\Delta})$ with respect to the $\IMP$, with the key property that $\Gamma$ pp-defines $\Gamma'$.
To do so, we define $\Gamma' := \{F\} \cup \{\pi^{-1}(S) : S \in \Delta\}$.
Then $\Gamma$ pp-defines $\Gamma'$ by \cref{def:pp-interpretability}.
Coherently, for every $k$-ary constraint $S(x_{i_1}, \ldots, x_{i_k}) \in \mathcal{C}_{\Delta}$, we add the constraint $\pi^{-1}(S)(y_{1 i_1}, \ldots, y_{\ell i_1}, \ldots, y_{1 i_k}, \ldots, y_{\ell i_k})$ to $\mathcal{C}_{\Gamma '}$; moreover, for \emph{every} $i \in [n]$ we add the constraint $F(y_{1 i}, y_{2 i}, \ldots, y_{\ell i})$ to $\mathcal{C}_{\Gamma'}$.
The $F$-constraints guarantee that every tuple $(y_{1i}, \ldots, y_{\ell i})$ assumes only values in $F$, so that $\pi(y_{1i}, \ldots, y_{\ell i})$ corresponds to a legal assignment of $x_i$ in the domain $\Lambda_{\Delta}$. For a variable $x_i$ appearing in some constraint of $\mathcal{C}_{\Delta}$, the corresponding $F$-constraint is implied by the $\pi^{-1}(S)$-constraints (every $\ell$-block of a tuple of $\pi^{-1}(S)$ lies in $F$, by \cref{def:pp-interpretability}), so adding it changes neither the solution set nor, by radicality, the ideal; for a variable appearing in no constraint, the $F$-constraint is necessary: without it, a solution of $\mathcal{T}_{\Gamma'}$ may assign $(y_{1i},\ldots,y_{\ell i})$ a tuple outside $F$, which would not correspond to a feasible assignment of $x_i$ within the domain $\Lambda_{\Delta}$.

Next, let $p$ be the polynomial interpolating $\pi$ (of total degree at most $\ell|\Lambda_{\Gamma}|$); note that the $F$-constraints guarantee that nothing depends on the values of $p$ outside $F$. We define $f_{\Gamma'}$ to be the polynomial obtained from $f_{\Delta}$ by substituting every appearance of variable $x_i$ with the polynomial $p(y_{1 i}, \ldots, y_{\ell i})$. Let
\begin{align}\label{eq:generators pp-interpretations case}
    \mathcal{F}_\Delta := \{ \bigcup_{C \in \mathcal{C}_{\Delta}} f_C \} \cup \{f_{\Lambda_{\Delta}}(x)\}_{x \in X} \quad \text{and} \quad \mathcal{F}_{\Gamma'} := \{ \bigcup_{C \in \mathcal{C}_{\Gamma'}}f_C\} \cup \{ f_{\Lambda_{\Gamma}}(y)\}_{y \in Y}
\end{align}
be the two sets of generators. The following lemma justifies such reduction.

\begin{lemma}[\cite{BulatovRSTOC22}]\label{th:pp-interpretability IMP reduction}
    $f_{\Delta} \in \langle \mathcal{F}_{\Delta} \rangle$ if and only if $f_{\Gamma'} \in \langle \mathcal{F}_{\Gamma'} \rangle$. Moreover, because $\Gamma$ pp-defines $\Gamma'$, $\IMP_d(\Delta)$ is polynomial time reducible to $\IMP_{d \ell |\Lambda_{\Gamma}|}(\Gamma)$.
\end{lemma}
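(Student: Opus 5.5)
The plan mirrors the proof of \cref{th:pp-definability IMP reduction}. By \cref{th:domain ideal nullstellensatz}, both $\langle \mathcal{F}_{\Delta} \rangle$ and $\langle \mathcal{F}_{\Gamma'} \rangle$ are radical, and their varieties are the solution sets of $\mathcal{T}_{\Delta}$ and of $\mathcal{T}_{\Gamma'}$. So I only need to show that $f_{\Delta}$ vanishes on every solution of $\mathcal{T}_{\Delta}$ if and only if $f_{\Gamma'}$ vanishes on every solution of $\mathcal{T}_{\Gamma'}$. The tool is the map $\Pi$ sending a solution $b \in \Lambda_{\Gamma}^{\ell n}$ of $\mathcal{T}_{\Gamma'}$ to $a = (\pi(b_{\cdot 1}), \ldots, \pi(b_{\cdot n}))$, where $b_{\cdot i} = (b_{1i}, \ldots, b_{\ell i})$. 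This is well defined because the $F$-constraints force each block $b_{\cdot i}$ into $F$, which was the reason for adding an $F$-constraint for every $i$. On such $b$, since $p$ interpolates $\pi$ on $F$, we have $f_{\Gamma'}(b) = f_{\Delta}(\Pi(b))$.

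For the forward direction, take a solution $b$ of $\mathcal{T}_{\Gamma'}$. Every constraint $\pi^{-1}(S)(\ldots)$ holds at $b$, so by the definition of $\pi$-preimage the constraint $S(x_{i_1},\ldots,x_{i_k})$ holds at $\Pi(b)$. Hence $\Pi(b)$ solves $\mathcal{T}_{\Delta}$, and $f_{\Gamma'}(b) = f_{\Delta}(\Pi(b)) = 0$. For the converse, take a solution $a$ of $\mathcal{T}_{\Delta}$ and lift it with the fixed section $s$: put $b_{\cdot i} := s(a_i) \in F$. Each $F$-constraint holds, and since $\pi(s(a_i)) = a_i$, each $\pi^{-1}(S)$-constraint holds as well. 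So $b$ solves $\mathcal{T}_{\Gamma'}$ with $\Pi(b) = a$, and therefore $f_{\Delta}(a) = f_{\Gamma'}(b) = 0$. Note that surjectivity of $\pi$ is used only to guarantee that the section $s$ exists.

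For the complexity claim, I check that the map $(f_{\Delta}, \mathcal{T}_{\Delta}) \mapsto (f_{\Gamma'}, \mathcal{T}_{\Gamma'})$ is computable in polynomial time. The instance has $\ell n$ variables and $|\mathcal{C}_{\Delta}| + n$ constraints, with $\ell$ and $\Gamma'$ fixed. The polynomial $p$ is fixed, of degree at most $\ell|\Lambda_{\Gamma}|$. Expanding $f_{\Delta}(p(y_{\cdot 1}), \ldots, p(y_{\cdot n}))$ monomial by monomial gives a polynomial of degree at most $d\ell|\Lambda_{\Gamma}|$ in $\ell n$ variables. It therefore has $n^{O(d)}$ terms and polynomial bit-size for fixed $d$. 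This yields $\IMP_d(\Delta) \le \IMP_{d\ell|\Lambda_{\Gamma}|}(\Gamma')$.

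Finally, since $\Gamma$ pp-defines $\Gamma'$, \cref{th:pp-definability IMP reduction} applied with $\Gamma'$ in the role of $\Delta$ reduces this further to $\IMP_{d\ell|\Lambda_{\Gamma}|}(\Gamma)$ through the map $\sigma$, which is polynomial time and does not increase degree. Composing the two reductions gives the claim.

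The only delicate point is the substitution identity $f_{\Gamma'}(b) = f_{\Delta}(\Pi(b))$. It holds only on points whose blocks lie in $F$, because $p$ is arbitrary outside $F$. This is exactly why every variable needs its $F$-constraint, including variables that appear in no constraint of $\mathcal{C}_{\Delta}$. Everything else is routine bookkeeping.
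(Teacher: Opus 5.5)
Your proof is correct and takes essentially the paper's approach. The paper does not prove this lemma itself but cites \cite{BulatovRSTOC22}; the argument it relies on is the radicality and variety-correspondence argument of \cref{th:pp-definability IMP reduction}, with the $F$-constraint on every block ensuring that $p$ agrees with $\pi$ on all solutions, exactly as the paper's remarks around the construction indicate. Your lift via the section $s$, your degree bound $d\,\deg p \le d\ell|\Lambda_{\Gamma}|$, and your composition with the $\sigma$-map of \cref{th:pp-definability IMP reduction} for the second claim all fit that argument.
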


As in \cref{sect:pp-definitions}, the construction above supplies the polynomial-time map on instances, and the rest of this subsection shows that this is indeed a $\PC$ reduction.

We wish to show that a $\PC$ derivation of $f_{\Gamma'}$ from $\mathcal{F}_{\Gamma'}$ can be efficiently simulated obtaining a $\PC$ derivation of $f_{\Delta}$ from $\mathcal{F}_{\Delta}$. Note that there is a correspondence between constraint polynomials in $\mathcal{F}_{\Delta}$ and constraint polynomials in $\mathcal{F}_{\Gamma'}$, namely for each $f_{C_S}(X) \in \mathcal{F}_{\Delta}$ with $S \in \Delta$, there exists $f_{C_{\pi^{-1}(S)}}(Y) \in \mathcal{F}_{\Gamma'}$.

We introduce polynomial interpolants mimicking the section $s$. For every $j \in [\ell]$ and $i \in [n]$, let $U_{j,i} \in \mathbb{Q}[x_i]$ be the polynomial of degree at most $|\Lambda_{\Delta}|-1$ interpolating the $j$-th coordinate of the section, that is,
\begin{align}\label{eq:section interpolants}
    U_{j,i}(e) = s(e)_j \qquad \text{for every } e \in \Lambda_{\Delta}.
\end{align}
The fact that $s$ is a function guarantees that the $U_{j,i}$ are well-defined polynomial functions (in the pp-encoding case, $s = \pi^{-1}$ and we recover the previous definition).
We set $U_i(x_i) = (U_{1,i}(x_i), \ldots, U_{\ell,i}(x_i))$, so that $U_i(e) = s(e) \in F$ for every $e \in \Lambda_{\Delta}$, and then set $U(X) = (U_1(x_1), \ldots, U_{n}(x_n))$.

\begin{lemma}[see also \cite{BortolottiMV_ARXIV25}]\label{th:variables interpolation and domain polynomials cong 0 for pp-interpretations}
    Let $\mathcal{F}_{C_S} : = f_{C_S}(X_{C_S}) \cup \{f_{\Lambda_{\Delta}}(x)\}_{x \in X_{C_S}}$ with $X_{C_S} := Var(C_S)$. The following congruences hold.
    \begin{enumerate}
        \item $x_i - p(U_i(x_i)) \cong 0 \mod{\langle f_{\Lambda_{\Delta}}(x_i) \rangle}$ for all $i \in [n]$,
        \item $f_{\Lambda_{\Gamma}}(U_{j,i}(x_i)) \cong 0 \mod{\langle f_{\Lambda_{\Delta}}(x_i)\rangle}$ for all $j \in [\ell]$, $i \in [n]$,
        \item $f_{C_F}(U_i(x_i)) \cong 0 \mod{\langle f_{\Lambda_{\Delta}}(x_i) \rangle}$ for all $i \in [n]$,
        \item $f_{C_{\pi^{-1}(S)}}(U(X_{C_S})) \cong 0 \mod \langle \mathcal{F}_{C_S} \rangle$ for all constraints $C_S \in \mathcal{C}_{\Delta}$.
    \end{enumerate}
    Furthermore, all the above polynomials on the LHS can be derived from $\mathcal{F}_{\Delta}$ by constant-degree $\PC$ in polynomial time.
\end{lemma}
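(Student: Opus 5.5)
The plan follows the pattern of \cref{th:generators with interpolated variables are in the ideal} and \cref{th:PC size bounds for "generators with interpolated variables are in the ideal"}. First I establish the four congruences semantically through the Strong Nullstellensatz of \cref{th:domain ideal nullstellensatz}. Then I convert each into a constant-degree $\PC$ derivation using the local Buchberger procedure from the proof of \cref{prop:encoding-independence}.

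For the congruences, note that $\langle f_{\Lambda_{\Delta}}(x_i)\rangle$ and $\langle \mathcal{F}_{C_S}\rangle$ are radical. Their varieties are $\Lambda_{\Delta}$ and $S \subseteq \Lambda_{\Delta}^k$, respectively. It therefore suffices to evaluate at points of these sets.
\begin{enumerate}
\item For $e \in \Lambda_{\Delta}$ we have $U_i(e) = s(e) \in F$. Since $p$ interpolates $\pi$ on $F$, this gives $p(U_i(e)) = \pi(s(e)) = e$, proving the first item.
\item Each coordinate satisfies $U_{j,i}(e) = s(e)_j \in \Lambda_{\Gamma}$, so $f_{\Lambda_{\Gamma}}$ vanishes there.
\item Because $s(e) \in F$ and $f_{C_F}$ vanishes exactly on $F$, the third item follows.
\item For $a = (a_1,\dots,a_k) \in S$, the tuple $U(a) = (s(a_1),\dots,s(a_k))$ has blocks in $F$. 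Applying $\pi$ blockwise returns $(a_1,\dots,a_k) \in S$. By \cref{def:pp-interpretability}, $U(a)$ therefore lies in $\pi^{-1}(S)$, so $f_{C_{\pi^{-1}(S)}}(U(a)) = 0$.
\end{enumerate}
A subtlety here is that the interpolant $p$ is only specified on $F$. Its values elsewhere never matter, because every point at which $p$ is evaluated lies in $F$.

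For the derivations, observe that each polynomial on the left-hand side involves only $O(1)$ variables. Items 1--3 involve only $x_i$, and item 4 involves only the at most $\max$-arity variables of $C_S$. Each also has degree $O(1)$: the $U_{j,i}$ have degree less than $|\Lambda_{\Delta}|$, and $p$, $f_{\Lambda_\Gamma}$, $f_{C_F}$, $f_{C_{\pi^{-1}(S)}}$ are fixed polynomials depending only on $\Gamma$, $\Delta$, $\pi$ and $s$. The coefficients are constants independent of $n$, so the bit-size is $O(1)$.

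Membership in the local ideal $\langle f_{\Lambda_\Delta}(x_i)\rangle$ or $\langle \mathcal{F}_{C_S}\rangle$ is then certified as follows. Run the Step-by-Step-Buchberger on the local generator set, then append the reversed step-by-step division of the target by the resulting Gröbner basis. This is exactly the procedure in the proof of \cref{prop:encoding-independence}, and it uses only $\PC$ addition and monomial-multiplication steps. Since the number of variables, the degrees, and the bit-sizes are all bounded by constants depending only on the fixed languages, the computation terminates in $O(1)$ steps. It yields derivations of degree $O(1)$ and constant bit-size. Finally, both local generator sets are subsets of $\mathcal{F}_\Delta$, so these are derivations from $\mathcal{F}_\Delta$, and there are only $O(n + |\mathcal{C}_\Delta|)$ of them, which keeps the total polynomial.

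The main obstacle is item 4, and specifically making sure that the ideal $\langle \mathcal{F}_{C_S}\rangle$ suffices. Two points need care here. First, the variables of a constraint may repeat, as in $S(x_1,x_1)$. In that case I would evaluate at the actual satisfying assignments of the variables of $C_S$, where repetition is harmless because $U$ is applied variable by variable. Second, the degree bound of the derivation must not depend on the degree $d$ of $f_\Delta$. It does not, since the target polynomials do not involve $f_\Delta$ at all. Everything else is routine.
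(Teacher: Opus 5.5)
Your proposal is correct and follows essentially the same route as the paper. You reduce each congruence to pointwise vanishing on the variety ($\Lambda_\Delta$, or $S$) via radicality and the Strong Nullstellensatz, verify it using $U_i(e)=s(e)\in F$ and $\pi(s(e))=e$, and then obtain constant-degree derivations from the local Buchberger procedure of \cref{prop:encoding-independence}, since only $O(1)$ variables are involved; your remarks on repeated variables and on the degree being independent of $d$ are harmless refinements of the paper's argument.
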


\begin{proof}
    By \cref{th:domain ideal nullstellensatz} the ideals $\langle f_{\Lambda_{\Delta}}(x_i) \rangle$ and $\langle \mathcal{F}_{C_S} \rangle$ are radical, so by Hilbert's Strong Nullstellensatz it suffices to check that each polynomial on the LHS vanishes on every point of the corresponding variety.
    For items 1--3, the variety is $\Lambda_{\Delta}$, so let $e \in \Lambda_{\Delta}$: then $p(U_i(e)) = p(s(e)) = \pi(s(e)) = e$, proving item 1; $U_{j,i}(e) = s(e)_j \in \Lambda_{\Gamma}$, proving item 2; and $U_i(e) = s(e) \in F$, proving item 3.
    For item 4, the variety is $\mathbf{V}(\mathcal{F}_{C_S}) = S$, so let $a = (a_1, \ldots, a_k) \in S$: then
    \[
        \big(U_{i_1}(a_1), \ldots, U_{i_k}(a_k)\big) = \big(s(a_1), \ldots, s(a_k)\big) \in \pi^{-1}(S),
    \]
    since $\big(\pi(s(a_1)), \ldots, \pi(s(a_k))\big) = (a_1, \ldots, a_k) \in S$; hence $f_{C_{\pi^{-1}(S)}}$ vanishes at this point.

    All polynomials on the LHS depend on a constant number of variables, thus their $\PC$ derivation from $\mathcal{F}_{\Delta}$ can be performed with constant degree and polynomial time (in the same manner as in \cref{sect:encodings}).
\end{proof}

Next, we show how to efficiently simulate a $\PC$ derivation from $\mathcal{F}_{\Gamma'}$.

\begin{lemma}\label{th:simulation pp-interpretations}
    Let $\mathfrak{G} = [g_1, g_2, \ldots, g_t]$ be a degree-$c$ $\PC$ derivation from $\mathcal{F}_{\Gamma'}$ of size $s$. Let $f_i(X) := g_i(U(X))$. For fixed $c$, a degree-$O(c)$ $\PC$ derivation of $f_i(X)$ from $\mathcal{F}_{\Delta}$ is computable in time polynomial in $s$.
\end{lemma}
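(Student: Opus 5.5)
We argue by induction on the length of $\mathfrak{G}$, mirroring the simulation in the proof of \cref{th:pp-definability and PC reductions}. The substitution is $r\colon y_{j,i}\mapsto U_{j,i}(x_i)$. Each $U_{j,i}$ is a univariate polynomial of degree at most $|\Lambda_\Delta|-1$, with rational coefficients whose bit-size depends only on the fixed domains $\Lambda_\Gamma,\Lambda_\Delta$ and the fixed section $s$. Hence, for any $g$ of degree at most $c$, the polynomial $r(g)=g(U(X))$ has degree at most $c(|\Lambda_\Delta|-1)=O(c)$, and its bit-size is polynomial in that of $g$ for fixed $c$.

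\emph{Base cases.} Each axiom $g_i\in\mathcal{F}_{\Gamma'}$ falls into one of three kinds.
\begin{enumerate}
\item A domain polynomial $f_{\Lambda_\Gamma}(y_{j,i})$: then $f_i=f_{\Lambda_\Gamma}(U_{j,i}(x_i))$, covered by item 2 of \cref{th:variables interpolation and domain polynomials cong 0 for pp-interpretations}.
\item An $F$-constraint $f_{C_F}(y_{1i},\ldots,y_{\ell i})$: covered by item 3.
\item A constraint $f_{C_{\pi^{-1}(S)}}$ attached to $C_S\in\mathcal{C}_\Delta$: covered by item 4.
\end{enumerate}
In every case the lemma gives a constant-degree derivation from $\mathcal{F}_\Delta$, computable in polynomial time, because each of these polynomials involves $O(1)$ variables and the Buchberger-plus-division procedure of \cref{prop:encoding-independence} applies. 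Since $\Gamma'$ is finite, only $O(1)$ distinct "shapes" of such derivations occur up to renaming variables. Each of them therefore costs $n^{O(1)}$ bits, and there are at most $t\le s$ axiom lines.

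\emph{Inductive steps.} For an addition line $g=ag'+bg''$, substitution is linear, so $r(g)=a\,r(g')+b\,r(g'')$ is one addition step from already-derived lines. For a multiplication line $g=y_{j,i}\,g'$ we need $U_{j,i}(x_i)\cdot r(g')$. Write $U_{j,i}=\sum_{t<|\Lambda_\Delta|}u_t x_i^t$. We obtain each $x_i^t r(g')$ by $t$ successive multiplication steps and then combine them linearly. This adds $O(1)$ lines of degree at most $\deg r(g')+|\Lambda_\Delta|-1$. That is the same $O(c)$ bound, because the total degree of $r(g)$ is at most $(|\Lambda_\Delta|-1)\deg g$ and intermediate lines never exceed it.

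\emph{Bit-size and running time.} This is the step needing care; it is less of an obstacle than the degree issue, which is immediate. Expanding $g(U(X))$ could in principle blow up the number of monomials. However, $g$ has degree $c$ in $n\ell$ variables, and $r(g)$ lies in $n$ variables with degree $O(c)$, so it has at most $n^{O(c)}$ monomials. Its coefficients are sums of products of at most $c$ coefficients of the $U_{j,i}$ times coefficients of $g$. Their bit-size is therefore polynomial in that of $g$, and in turn polynomial in $s$.

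Each simulated line costs polynomial time, and there are $t\le s$ lines. The total construction is thus computable in time polynomial in $s$ (for fixed $c$, treating $n\le s$), which gives the claim for every $f_i$.
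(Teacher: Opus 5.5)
Your proposal is correct and follows the paper's own argument. You induct along $\mathfrak{G}$ under the substitution $y_{j,i}\mapsto U_{j,i}(x_i)$, handle axioms by items 2, 3 and 4 of the preceding lemma, carry additions over linearly, and simulate a multiplication by $y_{j,i}$ as multiplication by $U_{j,i}(x_i)$ using $O(1)$ multiplication and addition steps; your degree and bit-size bookkeeping makes explicit what the paper calls straightforward (and the appeal to $n \le s$ can be dropped, since each monomial of $g$ expands into at most $|\Lambda_\Delta|^{c}$ monomials of $r(g)$).
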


\begin{proof}
    The argument is the same as in the proof of \cref{th:pp-definability and PC reductions}. The base case, namely when $g_i \in \mathcal{F}_{\Gamma'}$, is covered by \cref{th:variables interpolation and domain polynomials cong 0 for pp-interpretations} (the $F$-constraint polynomials by item 3, the $\pi^{-1}(S)$-constraint polynomials by item 4, the domain polynomials by item 2). The inductive case is straightforward. In particular, if $g_t = y_{ji}g_{t-1}$ for some $y_{ji} \in Y$, then the polynomial $f_t(X) = U_{j,i}(x_i)f_{t-1}(X)$ can be derived efficiently from the polynomial $f_{t-1}(X)$.
\end{proof}

\begin{theorem}[$\PC$ reduction by pp-interpretations]\label{th:pp-interpretations and PC reductions}
    Let $\Gamma,\Delta$ be two fixed constraint languages over finite domains $\Lambda_{\Gamma},\Lambda_{\Delta}$, respectively. Suppose that $\Gamma$ pp-interprets $\Delta$ with $\ell$. Then $\PC$-$\IMP_d(\Delta)$ $\PC$-reduces to $\PC$-$\IMP_{d\,\ell\,|\Lambda_{\Gamma}|}(\Gamma)$. In particular, if $\PC$-$\IMP_c(\Gamma)$ can be solved in polynomial time for any fixed $c$, then so can $\PC$-$\IMP_d(\Delta)$ for any fixed $d$.
\end{theorem}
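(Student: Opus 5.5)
The plan is to compose two reductions: first from $\Delta$ to the intermediate language $\Gamma' = \{F\} \cup \{\pi^{-1}(S) : S \in \Delta\}$ via the substitution $U$, and then from $\Gamma'$ to $\Gamma$ via \cref{th:pp-definability and PC reductions}, which applies because $\Gamma$ pp-defines $\Gamma'$ by \cref{def:pp-interpretability}.

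\textbf{Step 1: the map on instances.} Given $(f_\Delta, \mathcal{T}_\Delta)$, build $(f_{\Gamma'}, \mathcal{T}_{\Gamma'})$ as described before \cref{th:pp-interpretability IMP reduction}. This is clearly polynomial time, since $\ell$ and $p$ are fixed and each $x_i$ is replaced by $p(y_{1i},\ldots,y_{\ell i})$. By that lemma the promise $f_{\Gamma'} \in \langle \mathcal{F}_{\Gamma'} \rangle$ holds, and $\deg f_{\Gamma'} \le d\ell|\Lambda_\Gamma|$. Apply the pp-definition map of \cref{sect:pp-definitions} to this instance, yielding $(\sigma(f_{\Gamma'}), \mathcal{T}_\Gamma)$, an instance of $\PC$-$\IMP_{d\ell|\Lambda_\Gamma|}(\Gamma)$.

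\textbf{Step 2: verifying \cref{def:PC reduction} for the $\Delta \to \Gamma'$ step.} Take the substitution $r = U$, sending $y_{ji} \mapsto U_{j,i}(x_i)$, which has degree $\le |\Lambda_\Delta|-1 = O(1)$ and fixed rational coefficients. Item 1 of the definition asks for derivations of $r(h)$ for every $h \in \mathcal{F}_{\Gamma'}$; these are items 2--4 of \cref{th:variables interpolation and domain polynomials cong 0 for pp-interpretations}, each of constant degree and polynomial size. Item 2 asks for a derivation of $f_\Delta - r(f_{\Gamma'}) = f_\Delta(X) - f_\Delta(p(U_1(x_1)),\ldots,p(U_n(x_n)))$. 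Here the main point lies: I would proceed as in \cref{th:renaming}, replacing one variable at a time. Item 1 of that lemma gives a constant-degree derivation of $x_i - q_i$, where $q_i := p(U_i(x_i))$ has constant degree. The telescoping identity $g - \tau(g) = (x_i - q_i)\,h$ of \cref{eq:substitution divisibility} still holds when $x'$ is replaced by the polynomial $q_i$. Multiplying the derived $x_i - q_i$ by the monomials of $h$ then gives degree $O(d)$, and summing over the $n$ variables keeps the bit-size polynomial, since $f_\Delta$ has $n^{O(d)}$ monomials and each $q_i$ has $O(1)$ fixed-size coefficients. A subtlety is that the powers $q_i^t$ with $t \le d$ expand to polynomially many terms of constant degree each, so the bound is preserved.

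\textbf{Step 3: composition.} Compose the two reductions. A derivation of $\sigma(f_{\Gamma'})$ from $\mathcal{F}_\Gamma$ becomes, by \cref{th:pp-definability and PC reductions} (via \cref{th:equality collapse}), a derivation of $f_{\Gamma'}$ from $\mathcal{F}_{\Gamma'}$. Then \cref{th:simulation pp-interpretations} turns this into a derivation of $r(f_{\Gamma'})$ from $\mathcal{F}_\Delta$, with degree multiplied by a constant. Adding the derivation of $f_\Delta - r(f_{\Gamma'})$ from Step 2 yields $f_\Delta$.

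Degrees compose as constants times $d$, and bit-sizes compose polynomially. Composing two $(O(1),O(1))$-reductions, with substitutions composed, is again an $(O(1),O(1))$-reduction. The "in particular" clause then follows: solvability of $\PC$-$\IMP_c(\Gamma)$ with $c = d\ell|\Lambda_\Gamma|$ yields a derivation in time $n^{O(d)}$. The main obstacle I expect is Step 2's bit-size control when substituting the $q_i$ into $f_\Delta$; everything else is bookkeeping inherited from earlier lemmas.
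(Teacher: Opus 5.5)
Your proposal is correct and follows essentially the paper's proof. Both build $\Gamma'$, use \cref{th:pp-definability and PC reductions} for the step $\Gamma'\to\Gamma$, pull a derivation of $f_{\Gamma'}$ back along $U$ via \cref{th:simulation pp-interpretations} (items 2--4), and recover $f_\Delta$ by applying the renaming argument of \cref{th:renaming} to the substitution $x_i \mapsto p(U_i(x_i))$, using item 1. Your explicit check that the two reductions compose, with composed substitutions, and your bit-size accounting for substituting polynomials rather than variables fill in details the paper leaves implicit.
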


\begin{proof}
    Let $(f_{\Delta}, \mathcal{T}_{\Delta})$ be an instance of $\PC$-$\IMP_d(\Delta)$ and consider the instance $(f_{\Gamma'}, \mathcal{T}_{\Gamma'})$ of $\PC$-$\IMP_{d\ell|\Lambda_{\Gamma}|}(\Gamma')$ defined as above. By construction, $\Gamma$ pp-defines $\Gamma'$, so $\PC$-$\IMP_c(\Gamma')$ $\PC$-reduces to $\PC$-$\IMP_c(\Gamma)$ by \cref{th:pp-definability and PC reductions}; moreover, by \cref{th:pp-interpretability IMP reduction}, $f_{\Delta} \in \langle \mathcal{F}_{\Delta} \rangle$ if and only if $f_{\Gamma'} \in \langle \mathcal{F}_{\Gamma'} \rangle$.
    Given a polynomial-size $\PC$ derivation of $f_{\Gamma'}$ from $\mathcal{F}_{\Gamma'}$ of degree $c$, \cref{th:simulation pp-interpretations} computes in polynomial time a bounded-degree $\PC$ derivation of $f_{\Gamma'}(U(X))$ from $\mathcal{F}_{\Delta}$. By definition, $f_{\Gamma'}(U(X)) = f_{\Delta}\big(p(U_1(x_1)), \ldots, p(U_n(x_n))\big)$. Recovering $f_{\Delta}(X)$ from it is handled as the equalities were in \cref{th:renaming}, applied to the substitution of each $x_i$ by $p(U_i(x_i))$: its argument holds verbatim when a variable is replaced by a polynomial, and its assumption is met by the derivation of $x_i - p(U_i(x_i))$ from $\mathcal{F}_{\Delta}$ supplied by \cref{th:variables interpolation and domain polynomials cong 0 for pp-interpretations}. We thus obtain a derivation of $f_{\Delta}$ from $\mathcal{F}_{\Delta}$ of degree $O(d)$ and polynomial size, which we output. In particular, if $\PC$-$\IMP_c(\Gamma)$ is solvable, then so is $\PC$-$\IMP_c(\Gamma')$, and the derivation of $f_{\Gamma'}$, hence the one of $f_{\Delta}$, is computable in polynomial time.
\end{proof}

\begin{corollary}[$\PC$ reduction by pp-encodings]\label{th:pp-encodings and PC reductions}
    \Cref{th:pp-interpretations and PC reductions} holds, in particular, when $\Gamma$ pp-encodes $\Delta$.
\end{corollary}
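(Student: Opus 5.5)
The corollary is a direct specialization of \cref{th:pp-interpretations and PC reductions}, so the plan is to check that a pp-encoding is a particular pp-interpretation. Then the theorem applies verbatim, and no new simulation argument is needed.

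\textbf{Step 1.} Suppose $\Gamma$ pp-encodes $\Delta$. By \cref{def:pp-encodings} there are $\ell$, a set $F \subseteq \Lambda_\Gamma^\ell$, and a bijection $\pi \colon F \to \Lambda_\Delta$ such that $\Gamma$ pp-defines $F$, $\pi^{-1}(=_{\Lambda_\Delta})$, and $\pi^{-1}(S)$ for every $S \in \Delta$. A bijection is in particular onto, so every condition of \cref{def:pp-interpretability} holds. Hence $\Gamma$ pp-interprets $\Delta$ with the same $\ell$.

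\textbf{Step 2.} Invoke \cref{th:pp-interpretations and PC reductions}. It gives that $\PC$-$\IMP_d(\Delta)$ $\PC$-reduces to $\PC$-$\IMP_{d\ell|\Lambda_\Gamma|}(\Gamma)$. It also gives the transfer of polynomial-time solvability for every fixed $d$.

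\textbf{Step 3: consistency check.} The only choice made in the pp-interpretation proof was the section $s$ of $\pi$. In the encoding case the only section is $s = \pi^{-1}$, as already remarked before \cref{eq:section interpolants}. So the interpolants $U_{j,i}$ are exactly the coordinates of $\pi^{-1}$. Items 1--4 of \cref{th:variables interpolation and domain polynomials cong 0 for pp-interpretations} hold unchanged, and item 1 reads $p(\pi^{-1}(e)) = e$.

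There is no real obstacle. The one point to make explicit is that \cref{th:pp-interpretations and PC reductions} used surjectivity of $\pi$ only to choose a section, and that its proof did not rely on non-injectivity anywhere. Both are immediate for a bijection.
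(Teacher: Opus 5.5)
Your proposal is correct and matches the paper, which treats the corollary as immediate: by \cref{def:pp-encodings}, a pp-encoding is just a pp-interpretation whose map $\pi$ is a bijection, so \cref{th:pp-interpretations and PC reductions} applies verbatim. Your Step~3 is harmless but unnecessary, since the theorem is stated for all pp-interpretations and you do not need to re-inspect its proof.
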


\runin{Reducing the semilattice.}
We illustrate the usefulness of pp-interpretations as a tool for transferring $\PC$-$\IMP$ tractability across domains. Here we reduce $\PC$-$\IMP_d(\Delta)$, with $\Delta$ a constraint language on a finite domain $\Lambda_{\Delta}$ closed under a semilattice polymorphism, to $\PC$-$\IMP_c(\Gamma)$ with $\Gamma$ Boolean and closed under the \textsc{Min} (or \textsc{Max}) operation. This instance happens to be a pp-encoding, since the map $\pi$ below is a bijection; it is thus also a witness to the usefulness of the encoding special case.

Recall that a binary operation $\phi:\Lambda^2 \rightarrow \Lambda$ is a \emph{semilattice} operation if it satisfies the properties: \textit{Associativity}: $\phi(x,\phi(y,z)) = \phi(\phi(x,y),z)$; \textit{Commutativity}: $\phi(x,y) = \phi(y,x)$; \textit{Idempotency}: $\phi(x,x) = x$. Setting $a \leq_{\phi} b$ whenever $\phi(a,b) = a$ defines a partial order on $\Lambda$ with respect to which $\phi$ is the \emph{meet}, that is, $\phi(x,y) = \max \{ z \ | \ z \leq_{\phi} x, \ z \leq_{\phi} y\}$.

\begin{example}[Reducing a semilattice to the Boolean case]\label{ex:semilattice}
    Let $\Delta$ be a constraint language over a finite domain $\Lambda_{\Delta} = \{0, 1, \ldots, \ell\}$ closed under a meet-semilattice polymorphism $\phi$; we use a pp-encoding to reduce $\PC$-$\IMP_d(\Delta)$ to a $\PC$-$\IMP$ over a Boolean \textsc{Min}-closed language, which is tractable.

    Take
    \begin{align*}
        \tilde{\pi} : \Lambda_{\Delta} \to \{0,1\}^{\ell + 1}, \qquad \tilde{\pi}(a)_t = \mathbf{1}[ \, t \leq_{\phi} a \,] \quad \forall t \in \Lambda_{\Delta},
    \end{align*}
    where $\mathbf{1}[\cdot]$ is an indicator, and let $F := \tilde{\pi}(\Lambda_{\Delta})$ be the set of $0/1$ encodings. The map $\tilde{\pi}$ is a bijection onto $F$, and we let $\pi := \tilde{\pi}^{-1} : F \rightarrow \Lambda_{\Delta}$. Observing that $t \leq_{\phi} \phi(a,b)$ if and only if $t \leq_{\phi} a$ and $t \leq_{\phi} b$, we get $\tilde{\pi}(\phi(a,b)) = \textsc{Min}(\tilde{\pi}(a), \tilde{\pi}(b))$ and $\tilde{\pi}(S)$ is \textsc{Min}-closed for every $\phi$-closed $S$. Starting from $\Gamma = \emptyset$, add to $\Gamma$ the relation $F$, the $\pi$-preimage of the equality relation on $\Lambda_{\Delta}$, and the $\pi$-preimage of every relation in $\Delta$. Then $\Gamma$ is \textsc{Min}-closed and pp-encodes $\Delta$. Since $\PC$-$\IMP_d(\Gamma)$ is solvable in polynomial time by bounded-degree $\PC$ for any Boolean \textsc{Min}/\textsc{Max}-closed $\Gamma$ \cite{BortolottiMV_ARXIV25}, \cref{th:pp-encodings and PC reductions} yields that $\PC$-$\IMP_d(\Delta)$ is solvable in polynomial time by bounded-degree $\PC$ for every fixed $d$.

    To see how a $\PC$ derivation is transported, take $\Lambda_{\Delta} = \{0,1,2\}$ with $\phi = \min$, so that $\ell = 2$, $F = \{(1,0,0), (1,1,0), (1,1,1)\}$ and $\pi(u) = \sum_{t} u_t - 1$. Thus \eqref{eq:section interpolants} gives
    \begin{align*}
        U_{1,i}(x_i) = 1, \qquad U_{2,i}(x_i) = \frac{x_i(3 - x_i)}{2}, \qquad U_{3,i}(x_i) = \frac{x_i(x_i - 1)}{2},
    \end{align*}
    whence $U_{1,i}(x_i) + U_{2,i}(x_i) + U_{3,i}(x_i) - 1 = x_i$, as expected from item 1 of \cref{th:variables interpolation and domain polynomials cong 0 for pp-interpretations}. Multiplying a line $f$ of the Boolean derivation by the variable encoding the second coordinate of $x_i$ pulls back to $U_{2,i}(x_i) f$, which $\PC$ directly obtains as $\tfrac{3}{2}(x_i f) - \tfrac{1}{2}\big(x_i(x_i f)\big)$. Note that the degree increase is $\deg U_{2,i} = |\Lambda_{\Delta}| - 1 = 2$.
\end{example}

While this result was already pointed out in \cite{BortolottiMV25,BortolottiMV_ARXIV25}, pp-encodings provide a general framework for doing such reductions.

\section{$\CSP$-based Complexity of $\PC$-$\IMP$: the Boolean case}\label{sect:csp-pc-imp}

An interesting consequence of \cref{th:pp-definability and PC reductions} is that the algebraic approach to $\CSP$s can be leveraged to give a constraint-language based complexity classification of the $\PC$-$\IMP_d$ problem.
    This follows from the tight relation between pp-definability and polymorphisms, where we recall that $\Pol(\Gamma)$ is the set of operations preserving every relation of $\Gamma$ (see e.g. \cite{BartoKW17}).

\begin{lemma}\label{old:th:pp-definability and polymorphism clone}
    Let $\Gamma$ and $\Delta$ be constraint languages over the same domain $D$. Then $\Gamma$ pp-defines $\Delta$ if and only if $\Pol(\Gamma) \subseteq \Pol(\Delta)$.
\end{lemma}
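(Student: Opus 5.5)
The plan is to prove the two implications separately. This is the classical Galois correspondence of Geiger and of Bodnarchuk--Kalu\v{z}nin--Kotov--Romov. Throughout, every relation is viewed as a set of tuples over the finite domain $D$.

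\emph{Forward direction} ($\Gamma$ pp-defines $\Delta$ implies $\Pol(\Gamma)\subseteq\Pol(\Delta)$). Fix $\varphi\in\Pol(\Gamma)$ of arity $m$ and a relation $R\in\Delta$ with pp-formula $R(\bar x)=\exists \bar y\, L(\bar y,\bar x)$. Take $m$ tuples $a^1,\ldots,a^m\in R$. For each $a^i$, choose witnesses $b^i$ so that $(b^i,a^i)$ satisfies every conjunct of $L$. Then apply $\varphi$ coordinatewise to the extended tuples $(b^i,a^i)$ and check that each conjunct is preserved:
\begin{itemize}
\item a conjunct from $\Gamma$ is preserved because $\varphi$ is a polymorphism of $\Gamma$;
\item an equality conjunct is preserved because coordinatewise application of a function preserves equal coordinates.
\end{itemize}
Hence $\varphi(b^1,\ldots,b^m)$ witnesses $\varphi(a^1,\ldots,a^m)\in R$, so $\varphi\in\Pol(\Delta)$.

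\emph{Backward direction.} Assume $\Pol(\Gamma)\subseteq\Pol(\Delta)$ and fix a $k$-ary $R\in\Delta$ with $R=\{a^1,\ldots,a^m\}$. If $R$ is empty, I would pp-define it separately: empty relations are excluded or handled as a degenerate case, for instance by noting that a polymorphism argument with $m=0$ uses nullary operations. I will flag this rather than dwell on it. For nonempty $R$, build the \emph{indicator instance}, i.e. the $m$-th power construction:
\begin{itemize}
\item introduce one variable $z_t$ for each $t\in D^m$;
\item for each $S\in\Gamma$ of arity $r$ and each $r$-tuple $(t_1,\ldots,t_r)$ of elements of $D^m$ whose coordinatewise rows all lie in $S$, impose the constraint $S(z_{t_1},\ldots,z_{t_r})$.
\end{itemize}
This is a finite conjunction, since $D$, $m$ and $\Gamma$ are finite. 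Its solutions $z\colon D^m\to D$ are exactly the $m$-ary polymorphisms of $\Gamma$.

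Now let $c_j\in D^m$ be the $j$-th column $(a^1_j,\ldots,a^m_j)$ for $j\in[k]$. Define
\[
R'(x_1,\ldots,x_k)=\exists (z_t)_t\; L\wedge\bigwedge_{j} (x_j=z_{c_j}).
\]
The last step is to show $R'=R$:
\begin{itemize}
\item \emph{$R\subseteq R'$:} take the projections $z(t)=t_i$, which are polymorphisms; they yield exactly $a^i$.
\item \emph{$R'\subseteq R$:} any solution is a polymorphism $\varphi\in\Pol(\Gamma)\subseteq\Pol(\Delta)$. The resulting tuple $(\varphi(c_1),\ldots,\varphi(c_k))$ is $\varphi$ applied coordinatewise to $a^1,\ldots,a^m\in R$, so it lies in $R$.
\end{itemize}

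The main obstacle is the bookkeeping in the backward direction: verifying that solutions of the indicator instance coincide exactly with the polymorphisms. Repeated columns are harmless, because they simply reuse the same variable $z_{c_j}$.
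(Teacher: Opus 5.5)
Your proof is correct and is the standard Galois-correspondence argument: preservation of pp-formulas in one direction, and in the other the pp-definition of $R$ via the instance on variables $z_t$, $t\in D^m$, whose solutions are exactly the $m$-ary polymorphisms. This is all the paper relies on, since it states the lemma without proof and defers to \cite{BartoKW17}. One correction concerns the case you flagged: the empty relation cannot in general be pp-defined ``separately'' (if every relation of $\Gamma$ contains $(d,\ldots,d)$ for a fixed $d\in D$, every pp-formula is satisfiable, yet every operation of positive arity preserves $\emptyset$), so the statement genuinely needs a convention---exclude empty relations, or admit nullary polymorphisms, in which case your construction with $m=0$ goes through verbatim.
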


\subsection{The Boolean dichotomy for $\IMP_d(\Gamma)$.}

In this section we do a step back and consider the decision problem $\IMP_d(\Gamma)$. The reason is that solvability of $\PC$-$\IMP_d(\Gamma)$ places $\IMP_d(\Gamma)$ in $\Ptime$; conversely, hardness for $\IMP_d(\Gamma)$ rules it out. Mastrolilli \cite{Mastrolilli21} applied the algebraic approach to $\IMP_d$ and delineated its tractability borderline for Boolean constraint languages.

\begin{theorem}[\cite{Mastrolilli21}]\label{th:Mastrolilli Boolean IMP}
    Let $\Gamma$ be a finite Boolean constraint language. If the solution space of every relation in $\Gamma$ is closed under any of \{\textsc{Majority}, \textsc{Minority}, \textsc{Max}, \textsc{Min}\}, then $\IMP_d(\Gamma)$ can be solved in $n^{O(d)}$ time for any $d \geq 0$. Otherwise, there is a constant $d \in \{0, 1, 2\}$ such that $\IMP_d(\Gamma)$ is \coNP-complete.
\end{theorem}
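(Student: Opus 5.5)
The plan is to follow the algebraic approach: prove the tractable side separately for each of the four polymorphisms, and obtain the hard side from Post's lattice together with Schaefer's theorem. The link between the two is the Strong Nullstellensatz of \cref{th:domain ideal nullstellensatz}. Because $\langle \mathcal{F}\rangle$ is radical with variety equal to the solution set of $\mathcal{T}$, non-membership $f\notin\langle\mathcal{F}\rangle$ is witnessed exactly by a satisfying assignment $a$ with $f(a)\neq 0$.

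\textbf{Tractability.} For each of \textsc{Majority}, \textsc{Minority}, \textsc{Max}, \textsc{Min} I would show that a $d$-truncated Gröbner basis of $\langle\mathcal{F}\rangle$ can be computed in $n^{O(d)}$ time with polynomially bounded coefficients. Membership of a degree-$d$ polynomial $f$ is then decided by dividing $f$ by this basis. The four cases are as follows.
\begin{itemize}
\item \textsc{Minority}: the instance is an affine system over $\mathbb{F}_2$. Gaussian elimination produces its solution space, and the vanishing ideal of an affine subspace of $\{0,1\}^n$ has an explicit basis obtained by eliminating the pivot variables.
\item \textsc{Majority}: the relations are bijunctive. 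I would first compute all implied binary constraints by path consistency (2-SAT closure). Then I would argue that the degree-$2$ polynomials encoding these implied constraints, together with the $x_i^2-x_i$, form a Gröbner basis after reduction; the key input is that majority-closed relations are determined by their binary projections.
\item \textsc{Min} and \textsc{Max}: the relations are (dual) Horn. I would apply unit propagation to fix forced variables. The remaining Horn clauses, read as monomial-times-linear polynomials, are then closed under S-polynomials of bounded degree. The point is that a new leading monomial arising in degree $\le d$ corresponds to an implied Horn clause with at most $d$ premises, and there are only $n^{O(d)}$ such clauses.
\end{itemize}

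\textbf{Hardness.} Suppose $\Gamma$ has none of the four polymorphisms. By \cref{old:th:pp-definability and polymorphism clone} and \cref{th:pp-definability IMP reduction}, hardness may be transferred between languages with the same clone, so only a few clones need to be examined. I would split into cases according to which constant operations lie in $\Pol(\Gamma)$.
\begin{itemize}
\item \textbf{No constant polymorphism.} Then $\CSP(\Gamma)$ is $\NP$-complete by Schaefer, and $1\in\langle\mathcal{F}\rangle$ iff the instance is unsatisfiable. Hence $\IMP_0(\Gamma)$ is $\coNP$-complete, giving $d=0$.
\item \textbf{Exactly one constant polymorphism, say $\mathbf{0}$.} Then $\Gamma\cup\{\pin{1}\}$ has no Schaefer polymorphism, so its $\CSP$ is $\NP$-complete. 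Given such an instance, I identify all pinned variables into a single $z$ and take $f=x_z$, which has degree $1$. Then $f\notin\langle\mathcal{F}\rangle$ iff some solution of the $\Gamma$-part has $x_z=1$, i.e.\ iff the pinned instance is satisfiable. This gives $d=1$, and the case of $\mathbf{1}$ is symmetric.
\item \textbf{Both constants $\mathbf{0}$ and $\mathbf{1}$.} Here $\Gpin$ is $\NP$-complete. I merge all $1$-pinned variables into $u$ and all $0$-pinned variables into $v$, and take $f=x_u(1-x_v)$. Then $f\notin\langle\mathcal{F}\rangle$ iff the pinned instance is satisfiable, which gives $d=2$.
\end{itemize}
Membership in $\coNP$ holds in every case, since a witness solution $a$ with $f(a)\neq 0$ can be checked in polynomial time.

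\textbf{Main obstacle.} I expect the main obstacle to be the tractable side, in particular the \textsc{Majority} and Horn cases. There one must show that the truncated Buchberger process closes at degree $O(d)$ and that the coefficients stay polynomially bounded, rather than only bounding the number of arithmetic steps. I would attempt this by describing the reduced Gröbner basis combinatorially (implied clauses, each with a $\pm1$-coefficient polynomial) and verifying the S-pair criterion directly on that description. On the hardness side, the only delicate point is whether the doubly-constant case can be pushed down to $d=1$. The statement only requires some $d\in\{0,1,2\}$, so the $d=2$ reduction above suffices for it.
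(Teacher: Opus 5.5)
This theorem is imported from \cite{Mastrolilli21}; the paper does not reprove it, so the comparison is with the paper's hardness argument in \cref{prop:degree-one-hardness}. Your hardness half is correct and is essentially that pinning argument. With no constant polymorphism you get $d=0$ from Schaefer. With exactly one constant, merging the pinned variables and taking $f=x_z$ gives $d=1$. With both constants, your quadratic target $x_u(1-x_v)$ gives $d=2$, which is all the statement asks. The paper's orientation gadget $G_a$ (\cref{lem:orientation}) is what lets it use the linear target $z_1-z_0$ and reach $d=1$ instead. You should still handle the degenerate inputs explicitly, as the paper does. A variable pinned to both values must be mapped to a fixed instance with $f\in\langle\mathcal{F}\rangle$, and an empty pin class needs a fresh variable.

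The genuine gap is on the tractable side, in the case you treat as routine, \textsc{Minority}. Eliminating pivot variables writes each pivot as $x_p=P_p$, where $P_p$ is the rational polynomial of an $\mathbb{F}_2$-parity of up to $n$ free variables. Over $\mathbb{Q}$, a parity of $k$ variables has degree $k$ and $2^k-1$ monomials.
\begin{itemize}
\item That basis is a Gr\"obner basis only for a lex order with pivots first. It is not a $d$-truncated \textsf{grlex} basis: in \textsf{grlex} its leading monomials are the top-degree terms of the $P_p$. Already for $x_3=x_1\oplus x_2$ it leaves six standard monomials for a four-point variety.
\item Dividing by it, even for $f=x_p$, yields a normal form with exponentially many monomials.
\end{itemize}
So ``truncated basis, then divide'' does not run in $n^{O(d)}$ time as you describe it. The paper itself notes that Mastrolilli's \textsc{Minority} algorithm works in lex order and is not degree-bounded.

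The missing idea is a sparse representation. In $\pm1$ coordinates $y_j=1-2x_j$, every parity is a single signed monomial $\pm\prod_{j\in S}y_j$. Hence each monomial of $f$, pulled back to the free variables, expands into at most $2^d$ characters. These characters are linearly independent on the affine space, so $f\in\langle\mathcal{F}\rangle$ if and only if all collected coefficients vanish. This is an $n^{O(d)}$ computation, and the same expansion gives the degree-$\le d$ part of the ideal by linear algebra.

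Your Horn sketch also needs correcting: the \textsf{grlex} basis is not made of clause polynomials $x_T(1-x_j)$ alone. Suppose any two of $x_1,x_2,x_3$ imply the third. Then the S-polynomial of $x_1x_2(1-x_3)$ and $x_1x_3(1-x_2)$ is $x_1x_2-x_1x_3$, which no clause polynomial reduces. Your combinatorial description must therefore include binomials $x_T-x_{T'}$ with $T,T'$ of equal closure, and monomials $x_T$ for infeasible $T$. The \textsc{Majority} sketch is plausible but, as you say, still unproven.
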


We emphasize that these algorithms do not work within bounded-degree $\PC/\mathbb{Q}$; in particular the \textsc{Minority} algorithm relies on the \lex order and does not guarantee bounded degree \cite{BharathiM25}.

We sharpen \cref{th:Mastrolilli Boolean IMP} by closing the only case left open. Consider the three Boolean unary operations: the constants $\mathbf{0}\colon x \mapsto 0$ and $\mathbf{1}\colon x \mapsto 1$, and the negation $\neg\colon x \mapsto 1-x$. Mastrolilli proved $\IMP_1(\Gamma)$ to be \coNP-complete for $\Gamma$ closed under $\mathbf{0}$, $\mathbf{1}$, $\neg$ but under none of \textsc{Majority}, \textsc{Minority}, \textsc{Min}, \textsc{Max}, and conjectured the same when $\neg \notin \Pol(\Gamma)$.

Let $\pin{0} = \{0\}$ and $\pin{1} = \{1\}$ be the unary relations over the Boolean domain. We recall that repeated variables in a constraint are allowed.

\begin{lemma}\label{lem:orientation}
    Suppose that $\mathbf{0}, \mathbf{1} \in \Pol(\Gamma)$ but that $\neg \notin \Pol(\Gamma)$.
    Then there exist a fixed relation $R \in \Gamma$ of arity $k$ and a fixed tuple $a \in R$ with the following property. For two variables $u,v$, let $G_a(u,v)$ denote the constraint
    \begin{equation}\label{eq:guard}
        G_a(u,v) = R(y_1, \ldots, y_k), \qquad y_i = \begin{cases} u, & a_i = 0,\\ v, & a_i = 1,\end{cases} \qquad i \in [k],
    \end{equation}
    obtained by placing $u$ in every coordinate where $a$ is $0$ and $v$ in every coordinate where $a$ is $1$. Then $G_a(u,v)$ is satisfied exactly by the pairs $(0,0), (0,1), (1,1)$, that is, precisely when $u \leq v$.
\end{lemma}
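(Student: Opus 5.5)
Since $\neg \notin \Pol(\Gamma)$, some relation $R \in \Gamma$ of arity $k$ is not closed under negation: there is a tuple $b \in R$ whose complement $\neg b = (1-b_1,\ldots,1-b_k) \notin R$. This gives a fixed $R$ and a fixed tuple, and the approach is to choose $a$ among $b$ and $\neg b$ (or a nearby tuple) so that the guard $G_a$ has the right solution set. Since $\Gamma$ is finite and fixed, both $R$ and $a$ are fixed constants, as the statement requires.

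The set of pairs $(u,v)$ satisfying $G_a(u,v)$ is computed directly from \eqref{eq:guard}. The assignment $(u,v)=(0,0)$ produces the all-zero tuple, which lies in $R$ because $\mathbf{0} \in \Pol(\Gamma)$: apply the constant operation to any tuple of $R$, and $R$ is nonempty since it contains $b$. Symmetrically, $(1,1)$ produces the all-one tuple, which lies in $R$ because $\mathbf{1} \in \Pol(\Gamma)$. The assignment $(0,1)$ produces exactly $a$, and $(1,0)$ produces $\neg a$. So it suffices to pick $a \in R$ with $\neg a \notin R$, and the tuple $a := b$ above does this. Then $G_a$ is satisfied exactly by $(0,0),(0,1),(1,1)$, that is, exactly when $u \le v$.

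One degenerate case needs checking: if $a$ is constant, then $u$ or $v$ does not appear in the guard. But $a$ cannot be all-zero or all-one, since the all-zero and all-one tuples both lie in $R$ and are complements of each other, so neither can play the role of $b$. Hence $a$ has both a $0$ and a $1$ coordinate, both $u$ and $v$ occur, and the solution set is genuinely $\{(u,v): u\le v\}$. With this observation there is no real obstacle; the main care is in justifying that the constant tuples lie in $R$ via the constant polymorphisms.
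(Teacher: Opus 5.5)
Your proof is correct and follows the paper's argument: choose $R\in\Gamma$ and $a\in R$ with $\neg a\notin R$, note that the assignments $(0,0),(0,1),(1,0),(1,1)$ produce $0^k, a, \neg a, 1^k$, and use the constant polymorphisms to place $0^k$ and $1^k$ in $R$. Your extra remarks, that $R\neq\emptyset$ is needed to obtain the constant tuples and that $a$ cannot be constant, are correct details the paper leaves implicit.
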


\begin{proof}
    Since negation does not preserve $\Gamma$, choose $R \in \Gamma$ and $a \in R$ with $\bar a \notin R$, where $\bar a_i = 1 - a_i$. The values taken by $(y_1, \ldots, y_k)$ at $(u,v) = (0,0),(0,1),(1,0),(1,1)$ are respectively $0^k, a, \bar a, 1^k$. The first, second and fourth belong to $R$, and the third does not. This proves the claim.
\end{proof}

\begin{remark}
    The relation of $G_a(u,v)$ is $R \in \Gamma$, so that $G_a(u,v)$ is a constraint over $\Gamma$. It uses only the two variables $u,v$, each occurring several times.
\end{remark}

\begin{proposition}\label{prop:degree-one-hardness}
    Let $\Gamma$ be a finite Boolean constraint language. If the solution space of every relation in $\Gamma$ is closed under $\mathbf{0}$ and $\mathbf{1}$, but under none of $\neg$, \textsc{Majority}, \textsc{Minority}, \textsc{Max}, \textsc{Min}, then $\IMP_1(\Gamma)$ is \coNP-complete.
\end{proposition}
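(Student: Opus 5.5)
Membership of $\IMP_1(\Gamma)$ in $\coNP$ comes from \cref{th:domain ideal nullstellensatz}: $f\notin\langle\mathcal{F}\rangle$ exactly when some solution $a$ of $\mathcal{T}$ has $f(a)\neq 0$. Such an $a$ is a polynomial-size certificate and can be checked in polynomial time. For hardness I reduce $\CSP(\Gpin)$ to the complement of $\IMP_1(\Gamma)$. This suffices because, by Schaefer's dichotomy \cite{Jeavons98,Schaefer78}, $\CSP(\Gpin)$ is $\NP$-complete: $\Gpin$ is preserved by none of the constants, and by none of \textsc{Majority}, \textsc{Minority}, \textsc{Max}, \textsc{Min}, since $\Gamma$ already is not. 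The obstacle is to simulate the two pins using only $\Gamma$ and one degree-$1$ target polynomial. The constants rule out pinning by constraints directly.

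Take an instance $\mathcal{T}$ of $\CSP(\Gpin)$ on variables $x_1,\dots,x_n$. I introduce two fresh variables $z_0,z_1$. Every pin constraint $\pin{0}(x_i)$ is replaced by the guard $G_a(x_i,z_0)$ from \cref{lem:orientation}, which enforces $x_i\le z_0$. Every $\pin{1}(x_i)$ is replaced by $G_a(z_1,x_i)$, which enforces $z_1\le x_i$. The $\Gamma$-constraints are kept as they are. This gives an instance $\mathcal{T}'$ of $\CSP(\Gamma)$, built in polynomial time, with target $f:=z_0-z_1+1$ of degree $1$ (if needed, shift it so that it vanishes exactly when $z_0=0$ and $z_1=1$). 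The solutions of $\mathcal{T}'$ with $z_0=0$, $z_1=1$ correspond exactly to the solutions of $\mathcal{T}$. The polynomial $z_0-z_1+1$ takes the value $0$ exactly at $(z_0,z_1)=(0,1)$ over $\{0,1\}^2$.

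The reduction is then: $\mathcal{T}$ is satisfiable iff some solution of $\mathcal{T}'$ has $f=0$. What is needed is the reverse form: $f\notin\langle\mathcal{F}'\rangle$ iff some solution of $\mathcal{T}'$ has $f\neq0$. So I take the target to be $g:=1-z_0+z_1$, or more precisely a degree-$1$ polynomial that is nonzero exactly at $(z_0,z_1)=(0,1)$. For instance $g:=1-z_1+z_0$ has $g(0,1)=0$, so the roles must be swapped. The right choice is $g:=z_1-z_0$ restricted to solutions. Solutions of $\mathcal{T}'$ always satisfy $z_0\geq x_i\geq z_1$ whenever the pins act on a common variable. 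This is where I expect the main care to go. To guarantee the ordering I also add the guard $G_a(z_1,z_0)$, which enforces $z_1\le z_0$. Then on solutions $z_1-z_0\in\{0,-1\}$, and it is nonzero exactly when $z_1=0$ and $z_0=1$.

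With this fix the orientation is redone as follows. $\pin{1}(x_i)$ becomes $G_a(z_0',x_i)$ with a variable $p$ meant to be $1$: enforce $p\le x_i$. $\pin{0}(x_i)$ becomes $x_i\le q$, with $q$ meant to be $0$, and I add the guard $q\le p$. The degree-$1$ target is $p-q$. On solutions $p-q\in\{0,1,-1\}$ with $q\le p$, so $p-q\ne0$ iff $(q,p)=(0,1)$, and that case is exactly a satisfying assignment of $\mathcal{T}$. Every other solution, in particular the constant solutions guaranteed by $\mathbf{0},\mathbf{1}\in\Pol(\Gamma)$, has $p=q$. Hence $p-q\in\langle\mathcal{F}'\rangle$ iff $\mathcal{T}$ is unsatisfiable, by \cref{th:domain ideal nullstellensatz}. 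This proves $\coNP$-hardness.
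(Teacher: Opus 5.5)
Your final construction is correct and is essentially the paper's argument: $G_a(p,x_i)$ for each $\pin{1}(x_i)$, $G_a(x_i,q)$ for each $\pin{0}(x_i)$, the extra guard $G_a(q,p)$, and target $p-q$. Like the paper, it reduces $\CSP(\Gpin)$ to the complement of $\IMP_1(\Gamma)$ via the orientation guard of \cref{lem:orientation} and a degree-$1$ difference of two fresh variables. The paper instead identifies the pinned variables with $z_0,z_1$ and treats $P_0\cap P_1\neq\emptyset$ as a separate case. Your sandwiching guards handle a doubly pinned variable uniformly, since $p\le x_i\le q\le p$ forces $p=q$.

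Do discard the intermediate paragraph that adds $G_a(z_1,z_0)$ under your first naming. There the guard is oriented the wrong way, so the case $z_1-z_0\neq 0$ makes both pin-simulating bounds vacuous.
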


\begin{proof}
    To begin observe that $\CSP(\Gpin)$ is \NP-complete, where $\Gpin := \Gamma \cup \{\pin{0}, \pin{1}\}$. Indeed, $\Pol(\Gpin) \subseteq \Pol(\Gamma)$, thus no new polymorphism is introduced, and the constant operations do not preserve the two pins. None of Schaefer's six tractable operations preserves $\Gpin$, hence $\CSP(\Gpin)$ is \NP-complete \cite{Jeavons98,Schaefer78}.

    We now reduce $\CSP(\Gpin)$ to the complement of $\IMP_1(\Gamma)$ in polynomial time. Let $\mathcal{T}_{\Gpin} = (X, \BoolD, \mathcal{C})$ be an instance of $\CSP(\Gpin)$, and let $P_0$ and $P_1$ be the sets of variables pinned to $0$ or to $1$, respectively, namely $P_0 = \{x \in X \mid \pin{0}(x) \in \mathcal{C}\}$ and similarly for $P_1$. Let moreover $R$, $a$ and $G_a$ be as in \cref{lem:orientation}.

    Suppose first that $P_0 \cap P_1 \neq \emptyset$. Some variable is then pinned both to $0$ and to $1$, so that $\mathcal{T}_{\Gpin}$ has no solution, and the reduction must produce a satisfiable instance of $\IMP_1(\Gamma)$. We take
    \[
        \mathcal{T} := (\{z_0, z_1\}, \BoolD, \{G_a(z_0,z_1), G_a(z_1,z_0)\}), \qquad f := z_1 - z_0.
    \]
    Let $(b_1, b_2)$ be a solution of $\mathcal{T}$. By \cref{lem:orientation} the two constraints of $\mathcal{T}$ force $b_1 \leq b_2$ and $b_2 \leq b_1$, whence $b_1 = b_2$ and $f(b_1, b_2) = 0$. Therefore $f$ vanishes on $\Variety{\mathcal{F}_{\mathcal{T}}}$, and $f \in \GIdeal{\mathcal{F}_{\mathcal{T}}}$ by \cref{th:domain ideal nullstellensatz}, as required.

    Suppose now that $P_0 \cap P_1 = \emptyset$. We introduce two fresh distinct variables $z_0, z_1$ and we define the substitution
    \[
        \sigma(x) = \begin{cases} z_0, & x \in P_0,\\ z_1, & x \in P_1,\\ x, & x \notin P_0 \cup P_1.\end{cases}
    \]
    Let $\mathcal{C}_0$ be obtained from $\mathcal{C}$ by deleting every pinning constraint and applying $\sigma$ to each of the remaining ones, and set $\mathcal{T}_0 := (X_0, \BoolD, \mathcal{C}_0)$ with $X_0 := (X \setminus (P_0 \cup P_1)) \cup \{z_0, z_1\}$, so that both $z_0$ and $z_1$ occur in $X_0$ even when one of $P_0, P_1$ is empty. Every constraint of $\mathcal{C}_0$ has its relation in $\Gamma$, as the constraints with relation $\pin{0}$ or $\pin{1}$ are exactly the ones we deleted. We finally set
    \begin{equation}\label{eq:reduction-fix}
        \mathcal{T} := (X_0, \BoolD, \mathcal{C}_0 \cup \{G_a(z_0,z_1)\}), \qquad f := z_1 - z_0 ,
    \end{equation}
    which is an instance of $\IMP_1(\Gamma)$, since $G_a(z_0,z_1)$ is a constraint over $\Gamma$. As $\Gamma$ is fixed and finite, $R$, $a$ and $G_a$ are found by exhaustive search in time independent of $n$, thus $(f, \mathcal{T})$ is computed from $\mathcal{T}_{\Gpin}$ in polynomial time. We claim that
    \begin{equation}\label{eq:main-equivalence-fix}
        \mathcal{T}_{\Gpin} \text{ is satisfiable} \quad\Longleftrightarrow\quad f \notin \GIdeal{\mathcal{F}_{\mathcal{T}}} .
    \end{equation}

    Assume first that $\mathcal{T}_{\Gpin}$ has a solution $s = (s_x)_{x \in X}$. As $s$ satisfies every pinning constraint, $s_x = 0$ for $x \in P_0$ and $s_x = 1$ for $x \in P_1$. Define the tuple $b = (b_x)_{x \in X_0}$ by $b_{z_0} := 0$, $b_{z_1} := 1$ and $b_x := s_x$ for $x \in X \setminus (P_0 \cup P_1)$, hence $b_{\sigma(x)} = s_x$ for every $x \in X$. Each constraint of $\mathcal{C}_0$ is the image under $\sigma$ of a constraint satisfied by $s$, whence $b$ satisfies it; and $b$ satisfies $G_a(z_0,z_1)$ as well, since $b_{z_0} = 0 \leq 1 = b_{z_1}$. Hence $b \in \Variety{\mathcal{F}_{\mathcal{T}}}$ while $f(b) = 1 \neq 0$, and by the Nullstellensatz $f \notin \GIdeal{\mathcal{F}_{\mathcal{T}}}$.

    Assume conversely that $f \notin \GIdeal{\mathcal{F}_{\mathcal{T}}}$. Then there is a solution $b = (b_x)_{x \in X_0}$ of $\mathcal{T}$ with $f(b) \neq 0$, that is, with $b_{z_0} \neq b_{z_1}$. Since $b$ satisfies $G_a(z_0,z_1)$, \cref{lem:orientation} gives $b_{z_0} \leq b_{z_1}$, and therefore $b_{z_0} = 0$ and $b_{z_1} = 1$. Define the tuple $s = (s_x)_{x \in X}$ by $s_x := b_{\sigma(x)}$. Every non-pinning constraint of $\mathcal{C}$ is mapped by $\sigma$ to a constraint of $\mathcal{C}_0$, which $b$ satisfies, so $s$ satisfies it as well; and $s$ satisfies every pinning constraint too, since $s_x = b_{z_c} = c$ whenever $x \in P_c$. Hence $s$ is a solution of $\mathcal{T}_{\Gpin}$, which proves \eqref{eq:main-equivalence-fix}.

    Hence the map sending $\mathcal{T}_{\Gpin}$ to $(f, \mathcal{T})$ reduces $\CSP(\Gpin)$ to the complement of $\IMP_1(\Gamma)$, which is therefore \NP-hard, so that $\IMP_1(\Gamma)$ is \coNP-hard. Membership in \coNP{} follows from \cref{th:domain ideal nullstellensatz}, since a solution on which $f$ does not vanish certifies nonmembership and is checked in polynomial time.
\end{proof}

The threshold thus drops from $d \in \{0,1,2\}$ to $d \in \{0,1\}$, and we can finally state the complete dichotomy.

\begin{theorem}\label{th:IMP 0/1 hardness result for Boolean languages}
    Let $\Gamma$ be a finite Boolean constraint language. If the solution space of every relation in $\Gamma$ is closed under any of \{\textsc{Majority}, \textsc{Minority}, \textsc{Max}, \textsc{Min}\}, then $\IMP_d(\Gamma)$ can be solved in $n^{O(d)}$ time for any $d \geq 0$. Otherwise, $\IMP_1(\Gamma)$ is \coNP-complete. Furthermore, if $\CSP(\Gamma)$ is \NP-complete, then $\IMP_0(\Gamma)$ is \coNP-complete.
\end{theorem}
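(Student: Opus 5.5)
The plan is to assemble \cref{th:IMP 0/1 hardness result for Boolean languages} from \cref{th:Mastrolilli Boolean IMP}, \cref{prop:degree-one-hardness} and the Weak Nullstellensatz, splitting into cases on the unary polymorphisms of $\Gamma$. The tractable side is verbatim the first part of \cref{th:Mastrolilli Boolean IMP}, so nothing needs to be done there. For the ``furthermore'' clause, I would note that $\IMP_0(\Gamma)$ with target $f=1$ asks whether $1\in\langle\mathcal{F}\rangle$. By \cref{th:domain ideal nullstellensatz} this holds iff the $\CSP(\Gamma)$ instance is unsatisfiable. Hence $\CSP(\Gamma)$ reduces to the complement of $\IMP_0(\Gamma)$, giving $\coNP$-hardness whenever $\CSP(\Gamma)$ is $\NP$-complete. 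Membership in $\coNP$ comes, as in the proof of \cref{prop:degree-one-hardness}, from a solution on which $f$ does not vanish.

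For the main hardness claim, assume $\Gamma$ is closed under none of \textsc{Majority}, \textsc{Minority}, \textsc{Max}, \textsc{Min}, and split into cases.

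\emph{Case 1: $\mathbf{0}\notin\Pol(\Gamma)$ or $\mathbf{1}\notin\Pol(\Gamma)$.} By Schaefer's theorem \cite{Schaefer78,Jeavons98}, the six tractable Boolean cases are the two constants plus the four operations above, so here $\CSP(\Gamma)$ is $\NP$-complete. Then $\IMP_0(\Gamma)$ is $\coNP$-complete by the paragraph above. To upgrade this to $\IMP_1(\Gamma)$, I would observe that $\IMP_0$ instances are special cases of $\IMP_1$ instances, because the target $f=1$ has degree $0\le 1$. So $\IMP_1(\Gamma)$ is $\coNP$-hard as well, and it is in $\coNP$ by the same certificate argument.

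\emph{Case 2: $\mathbf{0},\mathbf{1}\in\Pol(\Gamma)$ and $\neg\notin\Pol(\Gamma)$.} This is exactly \cref{prop:degree-one-hardness}.

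\emph{Case 3: $\mathbf{0},\mathbf{1},\neg\in\Pol(\Gamma)$.} This is Mastrolilli's degree-$1$ hardness result, which sits inside the ``otherwise'' branch of \cref{th:Mastrolilli Boolean IMP} (it is the case he settled, as recalled before \cref{lem:orientation}).

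The step I expect to need care is Case 3. \Cref{th:Mastrolilli Boolean IMP} as quoted only gives $d\in\{0,1,2\}$, so I must make sure the degree-$2$ case arose only when $\neg\notin\Pol(\Gamma)$. If the quoted statement does not make this explicit, I would reprove Case 3 by adapting \cref{prop:degree-one-hardness}. Since $\neg$ is a polymorphism, a guard forcing $u\le v$ is unavailable. I would instead try to build, from a relation witnessing non-closure under \textsc{Max} or \textsc{Min}, a two-variable gadget that, together with symmetry under negation, encodes $\CSP(\Gpin)$ with $z_0,z_1$ forced to be distinct whenever $f=z_1-z_0$ fails to vanish. In that setting, negation symmetry makes $z_0=0,z_1=1$ and its complement equivalent, so no orientation is needed.

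Finally, I would record that the threshold is tight: $\IMP_0$ alone cannot be $\coNP$-hard in Cases 2 and 3. There every instance is satisfiable via a constant assignment, so the answer for $f=1$ is always ``no''. This is why the theorem states hardness at $d=1$ uniformly.
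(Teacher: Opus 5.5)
\textbf{There is a genuine gap in your Case 1.} Schaefer's theorem makes $\CSP(\Gamma)$ \NP-complete only when $\Gamma$ is closed under \emph{none} of the six operations. So, besides excluding the four operations, you need $\mathbf{0}\notin\Pol(\Gamma)$ \emph{and} $\mathbf{1}\notin\Pol(\Gamma)$, not ``or''.

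Your Case 1 as written therefore also contains the languages closed under exactly one constant, e.g.\ $\mathbf{0}\in\Pol(\Gamma)$ and $\mathbf{1}\notin\Pol(\Gamma)$. For these, every instance is satisfied by the constant assignment, so $\CSP(\Gamma)\in\Ptime$. Moreover, $\IMP_0(\Gamma)$ is trivial: since the variety is never empty, a constant target lies in the ideal iff it is $0$. The route through $\IMP_0$ yields nothing here. Your own closing tightness remark applies to these languages just as it does to Cases 2 and 3.

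Nor is this sub-case picked up by Cases 2 or 3. Both assume both constants are polymorphisms, and $\neg\in\Pol(\Gamma)$ together with one constant produces the other by composition. So the proof as written misses exactly the one-constant languages.

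\textbf{How the paper handles it.} The paper does not re-derive these cases. Its argument is just the assembly of \cref{th:Mastrolilli Boolean IMP} with \cref{prop:degree-one-hardness}. In Mastrolilli's result, the only case left at degree $2$ was $\mathbf{0},\mathbf{1}\in\Pol(\Gamma)$, $\neg\notin\Pol(\Gamma)$, and the proposition handles exactly that case.

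\textbf{How to repair your proof.} You can either cite Mastrolilli's degree-$1$ hardness for the one-constant languages, or run the pinning reduction directly, which here needs no guard:
\begin{enumerate}
  \item Suppose $\mathbf{0}\in\Pol(\Gamma)$ and $\mathbf{1}\notin\Pol(\Gamma)$. Then $\Gamma\cup\{\pin{1}\}$ has neither constant and none of the four operations as a polymorphism, so its $\CSP$ is \NP-complete.
  \item Replace every variable pinned to $1$ by one fresh variable $z$, drop the pins, and take $f:=z$.
  \item A solution with $f\neq 0$ has $z=1$ and pulls back to a solution of the pinned instance. Conversely, a solution of the pinned instance gives one with $z=1$. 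Hence the instance is satisfiable iff $f\notin\langle\mathcal{F}\rangle$.
  \item The symmetric case uses $\pin{0}$ and $f:=1-z$.
\end{enumerate}
With this added, your Cases 2 and 3, the ``furthermore'' clause via the Weak Nullstellensatz, and membership in \coNP{} are fine and agree with the paper's assembly.
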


\subsection{The Boolean dichotomy for $\PC$-$\IMP_d(\Gamma)$}

We shift our focus from the general $\IMP_d(\Gamma)$ to the more constrained $\PC$-$\IMP_d(\Gamma)$. A $\PC$ certificate is verifiable in polynomial time and exists if and only if $f \in \langle \mathcal{F}\rangle$, so clocking a solver at its $n^{O(d)}$ bound and verifying its output decides $\IMP_d(\Gamma)$; thus solvability of $\PC$-$\IMP_d(\Gamma)$ implies $\IMP_d(\Gamma) \in \Ptime$. Contrapositively, since $\IMP_d(\Gamma) \in \coNP$ over any finite domain by the Strong Nullstellensatz, \coNP-hardness of $\IMP_d(\Gamma)$ obstructs solving $\PC$-$\IMP_d(\Gamma)$: it is then not solvable unless $\Ptime = \NP$.

For Boolean $\Gamma$, the degree-$0$ case $\PC$-$\IMP_0(\Gamma)$, the problem of refuting unsatisfiable $\CSP(\Gamma)$ instances in $\PC$, has a clean characterization.

\begin{theorem}[\cite{AtseriasO19}]\label{th:PC refutes bounded width}
    Let $\Gamma$ be a fixed constraint language. Then $\Gamma$ has bounded width if and only if $\PC$-$\IMP_0(\Gamma)$ can be solved with constant degree $\PC$ in polynomial time.
\end{theorem}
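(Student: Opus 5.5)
The plan is to prove the two directions separately, using the reduction machinery of \cref{sect:PC-IMP reductions} to move between $\Gamma$ and a canonical hard language on the lower-bound side. In the degree-$0$ case the target $f=1$ is fixed, so $\PC$-$\IMP_0(\Gamma)$ is exactly the task of producing $\PC$ refutations of unsatisfiable $\CSP(\Gamma)$ instances.

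\emph{Bounded width $\Rightarrow$ solvability.} I would use the standard characterization: $\Gamma$ has bounded width if and only if some fixed $(k,l)$-consistency (local consistency) algorithm detects every unsatisfiable instance of $\CSP(\Gamma)$. The plan is to simulate this algorithm syntactically in $\PC$.
\begin{enumerate}
    \item For each partial assignment $\alpha$ to at most $l$ variables, work with the indicator polynomial $\mathbf{1}_\alpha=\prod_{x\in\mathrm{dom}(\alpha)}\prod_{\lambda\neq\alpha(x)}\frac{x-\lambda}{\alpha(x)-\lambda}$. It has degree at most $l(|\Lambda|-1)$ and coefficients of constant bit-size.
    \item Maintain the invariant that whenever the algorithm discards $\alpha$, $\PC$ has derived $\mathbf{1}_\alpha$ in degree $O(l|\Lambda|)$.
    \item There are two base cases. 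If $\alpha$ violates a constraint, $\mathbf{1}_\alpha$ lies in the ideal of that constraint together with its domain polynomials. This is a system on $O(1)$ variables, so the Buchberger argument of \cref{prop:encoding-independence} derives $\mathbf{1}_\alpha$ in constant degree.
    \item For the inductive step, suppose $\alpha$ is discarded because none of its extensions $\alpha\cup\{y\mapsto\lambda\}$ survives. Then use $\sum_{\lambda\in\Lambda}\mathbf{1}_{y=\lambda}-1\in\langle f_\Lambda(y)\rangle$ to write $\mathbf{1}_\alpha=\sum_\lambda \mathbf{1}_{\alpha\cup\{y\mapsto\lambda\}}$ modulo domain axioms.
    \item Projections (restriction to a subset of $\mathrm{dom}(\alpha)$) are handled by multiplying by indicators, so each step uses only constantly many $\PC$ lines.
    \item When the empty assignment is discarded, $\mathbf{1}_\emptyset=1$ has been derived.
\end{enumerate}
There are $n^{O(l)}$ partial assignments and each step costs $O(1)$ lines with constant-size coefficients. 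The whole refutation therefore has constant degree and polynomial bit-size, and running the consistency algorithm produces it in polynomial time.

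\emph{Not bounded width $\Rightarrow$ no constant-degree refutations.} First I would reduce to cores with singleton unary relations added; this changes neither width nor, via \cref{th:pp-definability and PC reductions} and \cref{th:pp-interpretations and PC reductions}, solvability. Then I would invoke the algebraic characterization of bounded width (Barto--Kozik, Larose--Z\'adori): if $\Gamma$ lacks bounded width, it pp-interprets the language $\mathrm{Lin}_p$ of systems of linear equations over $\mathbb{Z}_p$ (three-variable equations suffice) for some prime $p$. \Cref{th:pp-interpretations and PC reductions} turns constant-degree $\PC$ refutations for $\Gamma$ into constant-degree $\PC$ refutations for $\mathrm{Lin}_p$, since the substitution degree $\ell|\Lambda_\Gamma|$ is constant. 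This contradicts the lower bound of \cite{BussGIP01}: Tseitin-type mod-$p$ systems on expanders require degree $\Omega(n)$ over any field of characteristic $\neq p$, in particular over $\mathbb{Q}$.

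\emph{Main obstacle.} The hardest step is matching the encodings on the lower-bound side. \cite{BussGIP01} works with $\pm1$/Fourier or Boolean encodings of mod-$p$ equations, whereas the $\mathrm{Lin}_p$ instances produced by the pp-interpretation use interpolants over $\{0,\dots,p-1\}$. My first attempt is to check that the translation between these encodings is itself a constant-degree substitution over a domain of size $p$, so that \cref{prop:encoding-independence} together with a $(O(1),O(1))$-reduction in the sense of \cref{def:PC reduction} preserves degree up to a constant factor. If that fails, I would fall back on the degree lower bound as stated in \cite{AtseriasO19} directly.
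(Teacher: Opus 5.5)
\textbf{Overall.} The paper gives no proof of this statement; it imports it from \cite{AtseriasO19}, and your outline follows the same architecture as that source. Your upper bound, the indicator-polynomial simulation of $(k,l)$-consistency, is sound. So is your plan for matching encodings on the lower-bound side: a $\PC$ refutation over $\mathbb{Q}$ is also one over $\mathbb{Q}(\omega_p)$, where the Fourier encoding of \cite{BussGIP01} is a constant-degree substitution away from the interpolant encoding over $\{0,\dots,p-1\}$.

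\textbf{The gap.} It is in the first step of the lower bound. You assert that passing to the core of $\Gamma$ and then adjoining the singleton unary relations preserves solvability ``via'' \cref{th:pp-definability and PC reductions} and \cref{th:pp-interpretations and PC reductions}. Neither step is a pp-definition, and you exhibit no pp-interpretation producing them. In general a singleton $\{b\}$ is not pp-definable, since it fails to be invariant under any automorphism of the core moving $b$. The core itself is reached by a homomorphic equivalence, which those theorems do not cover. This is why \cite{AtseriasO19} treat homomorphic equivalence and addition of constants to a core as reductions separate from pp-interpretation, each with its own preservation argument. Without this step you cannot invoke the Larose--Z\'adori/Barto--Kozik characterization, which is formulated for the idempotent (core-with-constants) language.

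\textbf{How to fix it.} Both missing transfers are short in the substitution framework of \cref{def:PC reduction}, but you must supply them.

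\emph{Core.} Let $\Gamma'$ be the core on $D'\subseteq D$, with relations $R'=R\cap D'^k$. An unsatisfiable $\CSP(\Gamma')$ instance, read as a $\CSP(\Gamma)$ instance, stays unsatisfiable: compose any solution with the retraction onto $D'$. Each $\Gamma$-axiom $f_R$, $f_D(x)$ vanishes on the local variety of the corresponding $\Gamma'$-axioms, because $R'\subseteq R$ and $D'\subseteq D$. So it has a constant-degree derivation from them, and constant-degree $\Gamma$-refutations become constant-degree $\Gamma'$-refutations.

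\emph{Constants.} Write $D'=\{b_1,\dots,b_m\}$. Replace every variable pinned to $b_i$ by a fresh $z_i$ (conflicting pins are refuted directly), and add the constraint $O(z_1,\dots,z_m)$ with $O=\{(\alpha(b_1),\dots,\alpha(b_m)) : \alpha\in\mathrm{Aut}(\Gamma')\}$.
\begin{itemize}
\item $O$ is $\Pol(\Gamma')$-invariant: a polymorphism composed with unary polymorphisms is a unary polymorphism, hence an automorphism of the core. So $O$ is pp-definable, and \cref{th:pp-definability and PC reductions} lets you work over $\Gamma'\cup\{O\}$.
\item Unsatisfiability is preserved: a solution sending $(z_i)_i$ to $(\alpha(b_i))_i$ can be composed with $\alpha^{-1}$.
\item A refutation pulls back along the substitution $z_i\mapsto b_i$. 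This uses that $x-b_i$ is derivable from the pin axiom of $x$, together with \cref{th:renaming}, much as in the proof of \cref{prop:degree-one-hardness}.
\end{itemize}
With these two lemmas inserted, the rest of your lower bound goes through.
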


\noindent Among Boolean polymorphisms, each of $\{\textsc{Majority}, \textsc{Min}/\textsc{Max}, \mathbf{0}, \mathbf{1}\}$ gives bounded width, whereas a language whose only such polymorphism is \textsc{Minority} does not \cite{BartoKW17}.

This settles $d = 0$, but leaves the behaviour at $d \geq 1$ open for the languages closed under \textsc{Majority}, \textsc{Min}, or \textsc{Max}. Bortolotti et al.\ resolve these, in the broader setting of arbitrary finite domains $\Lambda$, through the dual-discriminator and semilattice polymorphisms --- which on the Boolean domain specialize to \textsc{Majority} and \textsc{Min}/\textsc{Max}.

\begin{theorem}[\cite{BortolottiMV25,BortolottiMV_ARXIV25}]\label{th:Bortolotti PC bounds for dual-discriminator and semilattice}
    Let $\Gamma$ be a finite constraint language over a domain $\Lambda$ closed under a semilattice or a dual-discriminator polymorphism. Then $\PC$-$\IMP_d(\Gamma)$ is solvable in polynomial time.
\end{theorem}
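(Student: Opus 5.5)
The plan is to handle the two polymorphism types separately, using the semilattice case to reduce to the Boolean setting and then running a local-consistency argument inside $\PC$. For a semilattice polymorphism $\phi$, \cref{ex:semilattice} already pp-encodes $\Gamma$ into a Boolean \textsc{Min}-closed language, so by \cref{th:pp-encodings and PC reductions} it suffices to treat Boolean \textsc{Min}-closed (Horn) languages. For a dual-discriminator polymorphism, I use the classical fact that every relation preserved by it is pp-definable from binary relations of three kinds: permutations, relations of the form $(\{a\}\times B)\cup(A\times\{b\})$, and unary relations. By \cref{th:pp-definability and PC reductions} I may therefore assume that $\Gamma$ consists of such binary and unary relations. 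In both cases the target is a \emph{local} description of the solution set that $\PC$ can produce with constant degree and with polynomially bounded coefficients.

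\textbf{Step 1 (consistency in $\PC$).} For the binary (dual-discriminator) case, I run path consistency. Whenever the constraints on $x,y$ and on $y,z$ tighten the relation on $x,z$, I derive the interpolant of the tightened relation. Each single tightening involves only three variables, so the argument of \cref{prop:encoding-independence} (Buchberger on $O(1)$ variables) gives a constant-degree derivation. For the Horn case, I instead derive unit propagation and the implied Horn clauses on at most $O(d)$ variables. These have interpolants $x_{i_1}\cdots x_{i_k}(1-y)$, whose coefficients are $\pm1$.

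\textbf{Step 2 (locality).} Dual-discriminators are majority operations, so after strong $3$-consistency every partial assignment to a set $S$ of variables that satisfies the derived binary constraints on $S$ extends to a full solution. In the Horn case, closure under the derived clauses on $S$ similarly determines the projection of the solution set onto $S$. Hence a degree-$d$ polynomial $f$ vanishes on all solutions exactly when, for its at most $n^{O(d)}$ monomial supports, it can be reduced by the derived local polynomials. Concretely, let $S$ be the set of variables of $f$. I then rewrite $f$ by reducing each monomial modulo a Gröbner basis computed from the local constraints on small variable sets, together with the domain polynomials. Because each such computation uses $O(d)$ variables and degree $O(d|\Lambda|)$, it is a $\PC$ derivation of degree $O(d)$. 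If $f$ is in the ideal, the final remainder is zero by \cref{th:domain ideal nullstellensatz}, since after Step 2 the local variety coincides with the projection of the global one. I will organize this as a truncated Gröbner argument on the monomials of degree at most $d$, so that the $n^{O(d)}$ time bound is explicit.

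\textbf{Main obstacle: bit-size.} The delicate point is that path consistency chains up to $n^{O(1)}$ tightening steps. Naively composing these derivations could make the coefficients grow exponentially, which is exactly the failure mode over $\mathbb{Q}$. My fix is to \emph{renormalize} after every tightening. I would not keep the derived polynomial itself, but immediately re-derive the canonical Lagrange interpolant of the new binary relation. This re-derivation goes through a constant-size local Gröbner computation whose inputs are canonical interpolants of bounded bit-size, plus one fresh derivation line. Every line used downstream then has bit-size independent of the derivation history, and the total bit-size is the number of steps times a constant. It remains to verify that the re-derivation only ever needs the ideal membership of the new interpolant in the ideal of the three-variable subsystem, which holds by radicality. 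I would apply the same normalization to the Horn clauses in the \textsc{Min}/\textsc{Max} case.
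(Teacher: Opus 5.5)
Your reduction of the semilattice case to Boolean \textsc{Min}-closed languages is exactly the pp-encoding of \cref{ex:semilattice}. Beyond that the paper gives no proof; it imports the statement from \cite{BortolottiMV25,BortolottiMV_ARXIV25}. Everything therefore rests on your Step~2, and there is a genuine gap there.

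Steps~1--2 do give you local information: for every set $S$ of $O(d)$ variables, the derived constraints cut out the projection $V_S$ of the solution set $V$. But $f$ may involve all $n$ variables. Knowing every $V_S$ does not imply that each degree-$d$ member of $\langle\mathcal{F}\rangle$ is a combination of degree-$O(d)$ multiples of local polynomials. The Nullstellensatz says nothing about a remainder spread over many variables.

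What you need is that the local polynomials form a $d$-truncated Gr\"obner basis of $\langle\mathcal{F}\rangle$: the leading monomial of every degree-$\le d$ member of the ideal must be divisible by the leading monomial of some local polynomial. Leading monomials of global members need not come from their own support. With $x_1=x_2$ and $f=x_1-x_2$, the monomial $x_1$ is unconstrained on $\{x_1\}$, so reducing each monomial modulo the basis of its support leaves $f$ untouched. For bases over larger, overlapping sets you would have to check S-polynomials across them, and that is the whole difficulty.

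A sanity check confirms something is missing. Step~2 uses only that the polymorphism is a majority (strong $3$-consistency gives global consistency). Step~1 also works for any majority-closed language, since such relations are determined by their binary projections. So your argument would make every majority-closed language $\PC$-tractable, including the majorities over $\{0,1,2\}$ that the paper lists as open even for $\IMP_d$.

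For the Horn case, the missing argument can be built from your own ingredients. Let $\mathrm{cl}(A)$ be the set of variables forced to $1$ by setting $A$ to $1$. Then $x_A=x_{\mathrm{cl}(A)}$ on $V$, and $x_A$ vanishes on $V$ if $A$ is inconsistent. The functions $x_C$ for distinct closed consistent $C$ are linearly independent on $V$: evaluating them at the least models gives the unitriangular matrix $[C\subseteq C']$. Hence, after multilinearizing with $x_i^2-x_i$, the degree-$d$ part of the ideal is spanned by two families, with $|A|,|B|\le d$:
\begin{itemize}
\item the binomials $x_A-x_B$ with $\mathrm{cl}(A)=\mathrm{cl}(B)$;
\item the monomials $x_A$ with $A$ inconsistent.
\end{itemize}
Each of these is derivable from your implied clauses in degree $O(d)$ with $\pm1$ coefficients.

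For dual discriminators you need an analogous truncated Gr\"obner-basis (spanning) argument that exploits the specific shape of the permutation and two-fan constraint polynomials. This is the content of the cited results, and global consistency does not supply it.

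Relatedly, your bit-size discussion targets the harmless part. Path consistency over a fixed domain only ever produces constant-size interpolants from constant-size computations. Coefficient growth has to be controlled in the final reduction of $f$. That reduction has $n^{O(d)}$ monomials and, when $|\Lambda|\ge 3$, substitutes permutation interpolants with non-unit coefficients.
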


The separation between $\IMP$ and $\PC$-$\IMP$ is widest in the \textsc{Minority} case. Unsatisfiable affine systems over $\mathbb{F}_2$, which are \textsc{Minority}-closed, require $\PC$ refutations of degree $\Omega(n)$ over $\mathbb{Q}$ \cite{BussGIP01}, a bound \cite{AtseriasO19} recast in the algebraic $\CSP$ framework; so $\PC$-$\IMP_0(\Gamma)$ admits no constant-degree solution, while $\IMP_d(\Gamma) \in \Ptime$ \cite{BulatovRSTOC22,Mastrolilli21}. These languages thus separate $\IMP$ from $\PC$-$\IMP$ \emph{unconditionally}, sharpening the conditional obstruction mentioned earlier. They have no constant polymorphism, however: when $\Gamma$ has one, every instance is satisfiable and degree $0$ has nothing to refute. The separation then starts at degree $1$.

\begin{proposition}\label{prop:affine degree one}
    Let $\Gamma$ be a finite Boolean constraint language with $\textsc{Minority} \in \Pol(\Gamma)$ and $\textsc{Majority}, \textsc{Min}, \textsc{Max} \notin \Pol(\Gamma)$. Then $\PC$-$\IMP_1(\Gamma)$ is not solvable.
\end{proposition}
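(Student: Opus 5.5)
The plan is to turn the Tseitin lower bound of \cite{BussGIP01} into a lower bound for a degree-$1$ target. There are two cases. If $\Gamma$ has no constant polymorphism, then by Schaefer and the hypotheses $\Gamma$ is affine and not of bounded width. \Cref{th:PC refutes bounded width} then already rules out constant-degree $\PC$ solutions at $d=0$, hence at $d=1$, since a refutation is a derivation of the degree-$0$ target $1$ and $\IMP_0$ instances are $\IMP_1$ instances. So the interesting case is $\mathbf{0}\in\Pol(\Gamma)$ or $\mathbf{1}\in\Pol(\Gamma)$; by symmetry (flipping all bits if needed) assume $\mathbf{0}\in\Pol(\Gamma)$. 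Since \textsc{Minority} is a polymorphism, every relation of $\Gamma$ is then a homogeneous linear subspace of $\mathbb{F}_2^k$. Since $\textsc{Majority},\textsc{Min},\textsc{Max}\notin\Pol(\Gamma)$, $\Gamma$ pp-defines (via \cref{old:th:pp-definability and polymorphism clone}) every homogeneous affine relation. In particular it pp-defines the even-parity relations $x_1\oplus\cdots\oplus x_k=0$ of bounded arity. By \cref{th:pp-definability and PC reductions} it therefore suffices to prove non-solvability for the language $\Delta$ of homogeneous parity constraints of arity at most $3$, and this is the direction of the reduction we need.

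The construction is as follows. Take a bounded-degree expander $G$ with a charge function $\chi$ of odd total weight. The Tseitin system $\mathrm{Ts}(G,\chi)$ has one variable $x_e$ per edge and the constraint $\bigoplus_{e\ni v}x_e=\chi(v)$ at every vertex. Introduce one fresh variable $z$ and homogenize each vertex constraint as $\bigoplus_{e\ni v}x_e\oplus \chi(v)z=0$, split into arity-$3$ pieces with auxiliary variables if needed. This is a satisfiable instance of $\CSP(\Delta)$ (the all-zero point works). Summing all the homogenized constraints, each edge appears twice and $z$ an odd number of times, so every solution has $z=0$. Hence the target $f:=z$ lies in the ideal by \cref{th:domain ideal nullstellensatz}, and $(z,\mathcal{T})$ is a promise instance of $\PC$-$\IMP_1(\Delta)$.

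The key step is the substitution $z\mapsto 1$. Applied to a $\PC$ derivation of $z$ from $\mathcal{F}_{\mathcal{T}}$ of degree $D$, it yields a derivation of $1$ from the substituted axioms. Each homogenized constraint polynomial becomes an interpolant of the original inhomogeneous Tseitin constraint, and $f_\Lambda(z)$ becomes $0$; multiplication by $z$ becomes multiplication by the constant $1$, which is a trivial rescaling. So we obtain a $\PC$ refutation of $\mathrm{Ts}(G,\chi)$ of degree at most $D$, up to the encoding change of \cref{prop:encoding-independence}, which costs only $O(1)$ in degree. By \cite{BussGIP01} over $\mathbb{Q}$ any such refutation has degree $\Omega(n)$. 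Thus no degree-$O(1)$ derivation of $z$ exists, $\PC$-$\IMP_1(\Delta)$ is unsolvable, and by the pp-definition reduction so is $\PC$-$\IMP_1(\Gamma)$.

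I expect the main obstacle to be the precise algebraic claim that $\Gamma$ pp-defines all homogeneous parity relations. I would argue this via Post's lattice: the clones containing \textsc{Minority} and $\mathbf{0}$ but not \textsc{Majority}, \textsc{Min}, \textsc{Max} are the affine clone generated by $x\oplus y\oplus z$ together with $\mathbf{0}$, possibly with $\mathbf{1}$ or $\neg$ added. The invariant relations of $\langle x\oplus y\oplus z,\mathbf{0}\rangle$ are exactly the linear subspaces. If $\mathbf{1}$ is also present, we use the subspaces containing the all-ones vector. The even-parity relations of even arity are fine there, and the Tseitin instance can be built on a graph with all degrees even, or with a doubled variable.

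A secondary subtlety is that the substitution step must preserve the Tseitin hardness under the pp-definition gadget variables. This holds because the degree lower bound of \cite{BussGIP01} is robust to bounded-arity splitting, i.e.\ it is stated for bounded-degree expanders.
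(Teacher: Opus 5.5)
\textbf{Gap: the subcase $\mathbf{0},\mathbf{1}\in\Pol(\Gamma)$.} Two of your cases are fine. When $\Gamma$ has no constant polymorphism, \cref{th:PC refutes bounded width} applies. When exactly one constant is a polymorphism, a bit flip puts $\Pol(\Gamma)$ in the clone of $0$-preserving affine operations, so $\Gamma$ pp-defines all homogeneous parity relations and the target $z$ works. The gap is the remaining subcase $\mathbf{0},\mathbf{1}\in\Pol(\Gamma)$, where $\Pol(\Gamma)$ is the clone of all affine operations. This subcase satisfies the hypotheses; for example, it is $\Pol(\{R_\oplus\})$ for the $6$-ary even parity relation. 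There every nonempty relation pp-definable from $\Gamma$ is preserved by $\mathbf{1}$, so it contains the all-ones tuple. Hence the all-ones assignment solves every instance you can build from the available parity constraints, and $z$ takes the value $1$ on it. So $f=z$ is never in $\GIdeal{\mathcal{F}_{\mathcal{T}}}$, whatever graph or splitting gadget you use. Your proposed remedies do not escape this. With a single $z$, the available relations force every constraint arity $\deg(v)+\chi(v)$ to be even, yet these arities sum to $2|E|+\sum_v\chi(v)$, which is odd. A ``doubled variable'' could only help if you also changed the target, and you keep $f:=z$ throughout.

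\textbf{The fix, which is the paper's construction.} Introduce two fresh variables $w_0,w_1$ and put $w_0\oplus w_1$ in place of the charge in every vertex constraint. Take the degree-$1$ target $w_1-w_0$. This target vanishes on the all-zero and all-one solutions alike, and summing the $n$ equations with $n$ odd forces $w_0=w_1$ on every solution. The substitution $w_1\mapsto 1$, $w_0\mapsto 0$ then turns any derivation into a Tseitin refutation, exactly as in your argument. On a $4$-regular graph with charge $1$ at every vertex, the constraints are instances of the $6$-ary even parity $R_\oplus$. Every affine operation preserves $R_\oplus$, so $\Gamma$ pp-defines it in all cases. This single construction therefore covers every $\Gamma$ in the statement. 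It also makes your case split, the appeal to \cref{th:PC refutes bounded width}, the bit-flip symmetry, and the splitting into arity-$3$ pieces unnecessary.
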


\begin{proof}
    Consider $R_{\oplus} := \{a \in \BoolD^6 \mid a_1 \oplus \cdots \oplus a_6 = 0\}$. Every affine operation is of the form $f(x_1, \ldots, x_m) = c \oplus x_1 \oplus \cdots \oplus x_m$ and, moreover, it preserves $R_{\oplus}$. Indeed, applying $f$ coordinatewise to $a^1, \ldots, a^m \in R_{\oplus}$ it returns a tuple whose six coordinates sum to $c \oplus \cdots \oplus c = 0$. The set $\Pol(\Gamma)$ is closed under composition, and Post classified such sets of Boolean operations \cite{BohlerCRV03,Post41}: one containing \textsc{Minority} but not \textsc{Majority} consists of affine operations. Every polymorphism of $\Gamma$ is thus affine and preserves $R_{\oplus}$, so that $\Pol(\Gamma) \subseteq \Pol(R_{\oplus})$. Hence $\Gamma$ pp-defines $R_{\oplus}$, and by \cref{th:pp-definability and PC reductions} it suffices to prove the claim for $\Gamma = \{R_{\oplus}\}$.

    Let $G = (V,E)$ be a $4$-regular graph of expansion $\epsilon$ on an odd number $n := |V|$ of vertices. The \emph{Tseitin system} $TS(G)$ has one variable $x_e$ per edge and the equation $\bigoplus_{e \ni v} x_e = 1$ per vertex. Summing all of them over $\mathbb{F}_2$ cancels every variable, so $TS(G)$ is unsatisfiable, as $n$ is odd; moreover no $\PC$ refutation of a polynomial encoding of $TS(G)$ has degree below $\epsilon n / 8$ \cite{BussGIP01}. Introduce two fresh variables $w_0, w_1$ and let $\mathcal{T}$ be the $\CSP(\{R_{\oplus}\})$ instance on $X := \{x_e\}_{e \in E} \cup \{w_0, w_1\}$ whose constraints are
    \begin{equation}\label{eq:affine tseitin instance}
        R_{\oplus}(x_{e_1}, x_{e_2}, x_{e_3}, x_{e_4}, w_0, w_1) \iff x_{e_1} \oplus x_{e_2} \oplus x_{e_3} \oplus x_{e_4} \oplus w_0 \oplus w_1 = 0, \qquad v \in V,
    \end{equation}
    where $e_1, \ldots, e_4 \in E$ are the edges adjacent to $v$. By summing all the $n$ equations on the right side, every solution of $\mathcal{T}$ satisfies $n(w_0 \oplus w_1) = 0$, and, because $n$ is odd, $w_0 = w_1$. The polynomial $f := w_1 - w_0$ therefore vanishes on every solution of $\mathcal{T}$, so that $f \in \GIdeal{\mathcal{F}_{\mathcal{T}}}$ by the Strong Nullstellensatz and $(f, \mathcal{T})$ is a promise instance of $\PC$-$\IMP_1(\{R_{\oplus}\})$.

    Let $\mathfrak{Q} = [q_1, \ldots, q_s = f]$ be a $\PC$ derivation of $f$ from $\mathcal{F}_{\mathcal{T}}$ of degree $\delta$, and let $r \colon \mathbb{Q}[X] \to \mathbb{Q}[\{x_e\}_{e \in E}]$ be the ring homomorphism fixing every $x_e$ and sending $w_1$ to $1$ and $w_0$ to $0$. The obtained sequence $[r(q_1), \ldots, r(q_s)]$ is a $\PC$ derivation from $r(\mathcal{F}_{\mathcal{T}})$ of degree at most $\delta$.

    Now $r(\mathcal{F}_{\mathcal{T}})$ is a polynomial encoding of $TS(G)$ as each constraint at $v$ becomes a polynomial vanishing exactly on the tuples with $\bigoplus_{e \ni v} x_e = 1$. Moreover, $r(q_s) = r(f) = 1$, so $[r(q_1), \ldots, r(q_s)]$ is a $\PC$ refutation of $TS(G)$ of degree at most $\delta$. The change of variables $y_e = 1 - 2 x_e$ transforms this into a refutation of an encoding of $TS(G)$ over $\{1,-1\}$, which by \cref{prop:encoding-independence} may be taken to be the system $TS_n(2)$ of \cite{BussGIP01}. Hence $\delta \geq \epsilon n / 8$ by \cite[Theorem 8]{BussGIP01}, which is the desired degree lower bound.
\end{proof}

Assembling these results yields the Boolean classification below.

\begin{theorem}\label{th:complete Boolean classification}
   Let $\Gamma$ be a fixed Boolean constraint language. If $\Gamma$ is closed under \textsc{Majority}, \textsc{Min}, or \textsc{Max}, then $\PC$-$\IMP_d(\Gamma)$ is solvable for every fixed $d \in \mathbb{N}$. If $\Gamma$ is closed under none of these but under \textsc{Minority}, then $\PC$-$\IMP_1(\Gamma)$ is not solvable, and already $\PC$-$\IMP_0(\Gamma)$ is not solvable when $\mathbf{0}, \mathbf{1} \notin \Pol(\Gamma)$; both bounds are unconditional. If $\Gamma$ is closed under none of \textsc{Majority}, \textsc{Min}, \textsc{Max}, \textsc{Minority} but under $\mathbf{0}$ or $\mathbf{1}$, then $\PC$-$\IMP_0(\Gamma)$ is solvable while $\PC$-$\IMP_1(\Gamma)$ is not solvable unless $\Ptime = \NP$. For every remaining $\Gamma$, $\PC$-$\IMP_0(\Gamma)$ is not solvable unless $\Ptime = \NP$.
\end{theorem}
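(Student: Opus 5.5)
The plan is a case analysis on $\Pol(\Gamma)$ along Post's lattice, assembling earlier results case by case.

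\emph{Tractable case.} If $\Gamma$ is closed under \textsc{Majority}, \textsc{Min} or \textsc{Max}, then on the Boolean domain \textsc{Majority} is the dual-discriminator and \textsc{Min}/\textsc{Max} are semilattice operations. Hence \cref{th:Bortolotti PC bounds for dual-discriminator and semilattice} gives solvability of $\PC$-$\IMP_d(\Gamma)$ for every fixed $d$.

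\emph{Minority case.} Suppose \textsc{Minority} is a polymorphism but none of the three above is. Unsolvability of $\PC$-$\IMP_1(\Gamma)$ is exactly \cref{prop:affine degree one}, which is unconditional. If moreover $\mathbf{0},\mathbf{1}\notin\Pol(\Gamma)$, then Post's classification makes $\Pol(\Gamma)$ the clone of idempotent affine operations, possibly together with $\neg$. Every relation of the form $x_1\oplus\cdots\oplus x_k=c$ is preserved by such operations, so by \cref{old:th:pp-definability and polymorphism clone} and \cref{th:pp-definability and PC reductions} it suffices to refute unsatisfiable Tseitin systems over the pp-definable affine relations. The linear lower bound of \cite{BussGIP01}, via \cref{prop:encoding-independence}, then shows that no constant-degree refutation exists. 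Equivalently, \cref{th:PC refutes bounded width} applies, since such $\Gamma$ does not have bounded width.

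\emph{Constant case.} Suppose $\Gamma$ is closed under none of the four operations but under $\mathbf{0}$ or $\mathbf{1}$. Every instance is then satisfiable, so $\PC$-$\IMP_0(\Gamma)$ is trivially solvable: there is no promise instance with $f=1$. For degree~$1$, if both constants are polymorphisms and $\neg$ is not, \cref{prop:degree-one-hardness} gives \coNP-completeness of $\IMP_1(\Gamma)$. If both constants and $\neg$ are polymorphisms, Mastrolilli's result \cref{th:Mastrolilli Boolean IMP} covers the case. If only one constant, say $\mathbf{0}$, is a polymorphism, I would adapt the proof of \cref{prop:degree-one-hardness}. Pin $z_0$ by a constraint obtained from a relation of $\Gamma$ not closed under $\mathbf{1}$, identify all variables so that it forces $0$. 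Then reduce $\CSP(\Gamma\cup\{\pin{1}\})$, which is \NP-complete by Schaefer's theorem, to the complement of $\IMP_1$ using the target $f=z_1-z_0$ or simply $f=z_1$. In all these subcases the hardness transfers through the remark that solvability of $\PC$-$\IMP_d$ puts $\IMP_d$ in \Ptime, which yields the conditional statement.

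\emph{Remaining languages.} No Schaefer polymorphism applies, so $\CSP(\Gamma)$ is \NP-complete and \cref{th:IMP 0/1 hardness result for Boolean languages} makes $\IMP_0(\Gamma)$ \coNP-complete. The same transfer then gives unsolvability of $\PC$-$\IMP_0(\Gamma)$ unless $\Ptime=\NP$.

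The main obstacle I anticipate is the single-constant subcase of the constant case. \cref{prop:degree-one-hardness} is stated only under both constants, so I must check via Post's lattice that the missing pin can be simulated by a constraint over $\Gamma$. The relevant relation is one not closed under the absent constant, with all its coordinates identified to a single variable. Without $\neg$, $\mathbf{1}$ and Schaefer polymorphisms the remaining pin still yields \NP-completeness.
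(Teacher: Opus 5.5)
Your proposal is correct and is essentially the paper's own assembly: \cref{th:Bortolotti PC bounds for dual-discriminator and semilattice} for the tractable case, \cref{prop:affine degree one} together with the bounded-width characterization of \cref{th:PC refutes bounded width} for the \textsc{Minority} case, and \cref{th:IMP 0/1 hardness result for Boolean languages} combined with the remark that solvability of $\PC$-$\IMP_d(\Gamma)$ puts $\IMP_d(\Gamma)$ in $\Ptime$ for the conditional bounds; your single-constant adaptation of \cref{prop:degree-one-hardness} is valid but unnecessary, since \cref{th:IMP 0/1 hardness result for Boolean languages} already gives $\coNP$-completeness of $\IMP_1(\Gamma)$ throughout the third case. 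One slip in your alternative Tseitin route: when $\neg \in \Pol(\Gamma)$, parity relations of odd arity (e.g.\ $x_1 = 1$) are \emph{not} preserved, so the pp-definability step needs even-degree graphs, such as the $4$-regular expanders used in \cref{prop:affine degree one}.
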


This completes the Boolean picture. Over every domain the $d = 0$ case is settled by the bounded-width characterization of \cref{th:PC refutes bounded width}; for $d \ge 1$ over larger domains, however, little is known beyond the semilattice, dual-discriminator, and \textsc{Minority} cases, leaving the classification largely open.


\section{Algebras}\label{sect:algebras}

In this section we leverage the algebraic approach to $\CSP$s to reframe $\PC$-$\IMP$ reductions, such as the one in \cref{ex:semilattice}, in a much more elegant form. Throughout, $D$, $E$ denote the domains of algebras; when an algebra arises from a constraint language $\Gamma$ as $\textsf{Alg}(\Gamma)$ (see below), its domain is the language domain $\Lambda_\Gamma$ of \cref{sec:pp-interpretations} (so $\Lambda_\Gamma = D$, $\Lambda_\Delta = E$).

The algebraic notions we use are standard; see e.g.\ \cite{BulatovRSTOC22}. We recall the two operators linking relations and operations. For a constraint language $\Gamma$ over $D$, $\Pol(\Gamma)$ is the set of operations on $D$ preserving every relation in $\Gamma$ (its \emph{polymorphisms}); for a set of operations $\Psi$, $\Inv(\Psi)$ is the set of relations preserved by every $\psi \in \Psi$ (its \emph{invariants}).

\begin{definition}\label{def:algebras and tractability}
    Let $D$ be a finite domain and let $\Psi$ be a set of (multi-ary, in general) operations over $D$. The pair $(D, \Psi)$ is called an \emph{algebra}. For a constraint language $\Gamma$ over $D$ we write $\textsf{Alg}(\Gamma) := (D, \Pol(\Gamma))$ for the algebra of its polymorphisms.
    We say that $(D, \Psi)$ is $\PC$-tractable if $\PC$-$\IMP_d(\Gamma)$ is solvable for every finite constraint language $\Gamma \subseteq \Inv(\Psi)$.
\end{definition}

Cast as a property of algebras, $\PC$-tractability propagates from a tractable $\mathcal{D} = \textsf{Alg}(\Gamma)$ to the algebras built from it by the three standard constructions---subalgebras, finite direct powers, and homomorphic images: in each, the relations invariant under the new algebra are pp-definable or pp-interpretable in $\Gamma$, so \cref{th:pp-definability and PC reductions,th:pp-interpretations and PC reductions} transfer solvability.

\begin{definition}[Subalgebras and direct powers]
    Let $\mathcal{D} = (D, \Psi)$ be an algebra.
    \begin{itemize}
        \item \textbf{(Subalgebra)} \quad Let $E \subseteq D$ such that, for any $k$-ary operation $\psi \in \Psi$ and for any $b_1, \ldots, b_k \in E$, we have $\psi(b_1, \ldots, b_k) \in E$. In other words, $\psi$ is a polymorphism of $E$ or $E \in \textsf{Inv}(\Psi)$. The algebra $\mathcal{E} = (E, \Psi |_{E})$, where $\Psi |_{E}$ consists of the restrictions to $E$ of all operations in $\Psi$, is called \emph{subalgebra} of $\mathcal{D}$.
        \item \textbf{(Direct power)} \quad For a natural number $k$, the $k$-th \emph{direct power} $\mathcal{D}^k$ of $\mathcal{D}$ is the algebra $\mathcal{D}^k = (D^k, \Psi^k)$, where $\Psi^k$ consists of all the operations from $\Psi$ acting on $D^k$ component wise.
    \end{itemize}
\end{definition}

\begin{lemma}\label{lem:subalgebra tractable}
    Let $\mathcal{D} = (D, \Psi_D) = \textsf{Alg}(\Gamma_D)$ be a $\PC$-tractable algebra. If $\mathcal{E} = (E, \Psi_E)$ is a subalgebra of $\mathcal{D}$, then $\mathcal{E}$ is $\PC$-tractable as well.
\end{lemma}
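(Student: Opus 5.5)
Fix a finite constraint language $\Delta \subseteq \Inv(\Psi_E)$ over $E$; we must show that $\PC$-$\IMP_d(\Delta)$ is solvable. The plan is to realize $\Delta$, or a language equivalent to it, as a constraint language over $D$ that is pp-definable from a finite subset of $\Inv(\Psi_D)$. We then transfer solvability either through \cref{th:pp-definability and PC reductions} or through a pp-encoding (\cref{th:pp-encodings and PC reductions}) with $\ell=1$.

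\textbf{Step 1: every relation of $\Delta$ is $\Psi_D$-invariant as a relation on $D$.} Let $S \in \Delta$ be $k$-ary. Then $S \subseteq E^k \subseteq D^k$. For any $\psi \in \Psi_D$ and tuples of $S$, every entry lies in $E$, so componentwise application only uses $\psi|_E \in \Psi_E$. Since $S \in \Inv(\Psi_E)$, the result lies in $S$. Hence $S \in \Inv(\Psi_D) = \Inv(\Pol(\Gamma_D))$. Likewise the unary relation $E$ belongs to $\Inv(\Psi_D)$, by definition of a subalgebra.

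\textbf{Step 2: encode.} Set $\Gamma := \{E\} \cup \Delta$, viewed as a finite language over $D$. Then $\Gamma \subseteq \Inv(\Psi_D)$, and $\PC$-tractability of $\mathcal{D}$ gives solvability of $\PC$-$\IMP_c(\Gamma)$ for every $c$.

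Next, $\Gamma$ pp-encodes $\Delta$ in the sense of \cref{def:pp-interpretability,def:pp-encodings}. Take $\ell = 1$, $F = E \subseteq D$, and $\pi = \mathrm{id}_E$, which is a bijection. The three required relations are $F = E$, the $\pi$-preimage of $=_E$, and the $\pi$-preimage $S$ of each $S \in \Delta$. All three are pp-definable from $\Gamma$; for the equality on $E$ use $E(x)\wedge x=x'$.

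\cref{th:pp-encodings and PC reductions} then yields that $\PC$-$\IMP_d(\Delta)$ $\PC$-reduces to $\PC$-$\IMP_{d|D|}(\Gamma)$, which is solvable. Since $\Delta$ was arbitrary, $\mathcal{E}$ is $\PC$-tractable.

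\textbf{Main obstacle.} The one point to handle carefully is that \cref{th:pp-interpretations and PC reductions} is stated for languages over domains $\Lambda_\Gamma \neq \Lambda_\Delta$, with distinct domain polynomials $f_E$ and $f_D$. The reduction does not simply reuse the same variables, because an unconstrained variable of a $\Delta$-instance must be restricted to $E$ rather than $D$. The $F$-constraints $E(y_i)$ added for every variable handle exactly this. The interpolants $U_{1,i}(x_i) = x_i$ are trivial, so the degree blow-up is only a constant factor.

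I would verify that nothing in the proof of \cref{th:variables interpolation and domain polynomials cong 0 for pp-interpretations} breaks when $\pi$ is the identity embedding. This amounts to checking that $f_D(x_i) \in \langle f_E(x_i)\rangle$, which holds because $f_E$ divides $f_D$; a degree-$|D|$ derivation therefore exists trivially.
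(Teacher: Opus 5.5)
Your proof is correct and is essentially the paper's argument. The paper builds the same lifted instance over $D$: the constraints of $\mathcal{T}_E$ plus a unary $E$-constraint on every variable, which are exactly your $F$-constraints. It then pulls the derivation back by hand, checking that $f_C^D$, the interpolant of $E$, and $f_D(x_i)=f_E(x_i)\prod_{\rho\in D\setminus E}(x_i-\rho)$ are derivable from the $E$-generators. You instead instantiate \cref{th:pp-encodings and PC reductions} with $\ell=1$, $F=E$, $\pi=\mathrm{id}_E$, whose internal simulation collapses to exactly these checks because the substitution is the identity; the only cost is the nominal degree bound $d|D|$, and in exchange subalgebras are handled the same way as the paper's direct powers and homomorphic images.
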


\begin{proof}
    Let $\Gamma_E \subseteq \textsf{Inv}(\Psi |_E)$ be finite and consider the instance $(f, \mathcal{T}_E)$ of $\PC$-$\IMP_d(\Gamma_E)$, where $\mathcal{T}_E = (X, E, \mathcal{C}_E)$ is an instance of $\CSP(\Gamma_E)$. The definition of subalgebra implies $\Gamma_E \subseteq \textsf{Inv}(\Psi_D)$: $\Psi_D$ and $\Psi_E$ are the same operations, those in $\Psi_D$ acting on all of $D$ and those in $\Psi_E$ acting only on $E \subseteq D$.

    Next, we define a constraint language $\tilde{\Gamma}$ on the domain $D$ and an instance $\mathcal{T}_D$ of $\CSP(\tilde{\Gamma})$ equivalent to $\mathcal{T}_E$. Introduce the unary relation $R_E := E$, define $\tilde{\Gamma} = \Gamma_E \cup \{R_E\}$, the constraints $\mathcal{C}_D := \mathcal{C}_E \cup \{ (x_i, R_E)\}_{i \in [n]}$, and $\mathcal{T}_D := (X, D, \mathcal{C}_D)$. Then $\mathcal{T}_E$ and $\mathcal{T}_D$ have the same solution set.

    Let $\mathcal{F}_{\mathcal{T}_E}$ and $\mathcal{F}_{\mathcal{T}_D}$ be the generators from $\mathcal{T}_E$ and $\mathcal{T}_D$. Then $\langle \mathcal{F}_{\mathcal{T}_E} \rangle = \langle \mathcal{F}_{\mathcal{T}_D} \rangle$. Since $\Gamma_E \subseteq \textsf{Inv}(\Psi_D)$ and $R_E = E \in \textsf{Inv}(\Psi_D)$, we have $\tilde\Gamma \subseteq \textsf{Inv}(\Psi_D)$, so $\Gamma_D$ pp-defines $\tilde\Gamma$ and hence \cref{th:pp-definability and PC reductions} implies that there is a bounded-degree, polynomial-size $\PC$ derivation $\mathfrak{G} = [g_1, g_2, \ldots, f]$ of $f$ from $\mathcal{F}_{\mathcal{T}_D}$.

    Lastly, we simulate $\mathfrak{G}$ from $\mathcal{F}_{\mathcal{T}_E}$. Since
    \begin{align*}
        \mathcal{F}_{\mathcal{T}_D} = \{ f_C\}_{C \in \mathcal{C}_D} \cup \{f_D(x_i)\}_{i \in [n]},
    \end{align*}
    it suffices to observe:
    \begin{enumerate}[i.]
        \item For any constraint $C \in \mathcal{C}_E$ of arity $k$, write $f_C^{E}$ and $f_C^{D}$ for its interpolating polynomials over the grids $E^k$ and $D^k$, which differ in general. By \cref{th:domain ideal nullstellensatz} the ideal $\langle f_C^{E}, \{f_E(x)\}_{x \in Var(C)} \rangle$ is radical with variety $R_C$, on which $f_C^{D}$ vanishes, so $f_C^{D}$ lies in it; the local argument of \cref{prop:encoding-independence} derives it with constant degree and polynomial bit-size.
        \item For any constraint $S = (x_i, R_E) \in \mathcal{C}_D$ the polynomial $f_{S}(x_i) \in \langle f_E(x_i) \rangle$ by Hilbert Strong Nullstellensatz, hence is derivable efficiently by $\PC$.
        \item Any $f_D(x_i)$ is derivable from $f_E(x_i)$ via $f_D(x_i) = f_E(x_i) \cdot \Pi_{\rho \in D \setminus E}(x_i - \rho)$.
    \end{enumerate}
    Thus $\mathcal{F}_{\mathcal{T}_D}$, and therefore $f$, can be derived efficiently from $\mathcal{F}_{\mathcal{T}_E}$.
\end{proof}

\begin{lemma}\label{lem:direct power tractable}
    Let $\mathcal{D} = (D, \Psi) = \textsf{Alg}(\Gamma)$ be a $\PC$-tractable algebra. If $\mathcal{E} = \mathcal{D}^k$ is the $k$-th direct power of $\mathcal{D}$, then $\mathcal{E}$ is $\PC$-tractable as well.
\end{lemma}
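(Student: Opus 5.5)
The plan is to show that every finite $\Gamma_E \subseteq \Inv(\Psi^k)$ over $D^k$ is pp-encoded by a finite constraint language over $D$ that is pp-definable from $\Gamma$. Solvability then follows from \cref{th:pp-encodings and PC reductions} together with \cref{th:pp-definability and PC reductions}. Since \cref{def:PC reduction} works with domains $\Lambda \subseteq \mathbb{Q}$, we first fix an injective labelling of $D^k$ by rationals, say $E \subseteq \mathbb{Q}$ with a bijection $\pi \colon D^k \to E$. This is harmless because relations and polymorphisms depend only on the abstract set.

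\textbf{Step 1 (encoding).} Take $\ell := k$, $F := D^k \subseteq D^\ell$, and $\pi \colon F \to E$ the bijection above. We need three families of relations on $D$ to be pp-definable from $\Gamma$.
\begin{itemize}
\item The relation $F = D^k$ is pp-defined by the empty conjunction.
\item The $\pi$-preimage of equality on $E$ is the $2k$-ary relation $\bigwedge_{j} y_{j1} = y_{j2}$, which is a pp-formula using only equalities.
\item For each $S \in \Gamma_E$ of arity $m$, $\pi^{-1}(S) \subseteq D^{km}$ is the flattening of $S$ viewed as a relation on $D$.
\end{itemize}

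\textbf{Step 2 (key check).} Each flattened $\pi^{-1}(S)$ lies in $\Inv(\Psi)$. Take an $r$-ary $\psi \in \Psi$ and tuples $t^1,\dots,t^r \in \pi^{-1}(S)$. Applying $\psi$ coordinatewise over $D$ is the same as applying the componentwise operation $\psi^{(k)} \in \Psi^k$ blockwise on the $m$ coordinates in $D^k$. Since $S$ is invariant under $\Psi^k$, the result lies in $S$, so its flattening lies in $\pi^{-1}(S)$.

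It follows that $\Pol(\Gamma) = \Psi \subseteq \Pol(\Gamma')$ for the finite language $\Gamma' := \{\pi^{-1}(S) : S \in \Gamma_E\}$. By \cref{old:th:pp-definability and polymorphism clone}, $\Gamma$ pp-defines $\Gamma'$, and therefore $\Gamma$ pp-encodes $\Gamma_E$.

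\textbf{Step 3 (transfer).} Since $\Gamma'$ is finite and $\Gamma' \subseteq \Inv(\Psi)$, the language $\Gamma \cup \Gamma'$ is a finite subset of $\Inv(\Psi)$. By $\PC$-tractability of $\mathcal{D}$, $\PC$-$\IMP_c(\Gamma\cup\Gamma')$ is solvable for every fixed $c$. Then \cref{th:pp-interpretations and PC reductions} (in its encoding form, \cref{th:pp-encodings and PC reductions}) gives that $\PC$-$\IMP_d(\Gamma_E)$ $\PC$-reduces to $\PC$-$\IMP_{dk|D|}$ of that language. It is therefore solvable for every fixed $d$. Because $\Gamma_E$ was an arbitrary finite subset of $\Inv(\Psi^k)$, $\mathcal{E}$ is $\PC$-tractable.

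The main obstacle is a bookkeeping point rather than a conceptual one. \cref{th:pp-interpretations and PC reductions} requires $\Gamma$ to pp-define the relations, whereas tractability of $\mathcal{D}$ is stated for arbitrary finite subsets of $\Inv(\Psi)$. I would route the argument through $\Gamma \cup \Gamma'$ (or just $\Gamma'$) directly, so that no infinite language is ever needed.

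A second point is that the interpolant $p$ of $\pi$ has degree up to $k|D|$. This is a constant, so the degree blow-up from $d$ to $dk|D|$ stays fixed, which is all that solvability requires.
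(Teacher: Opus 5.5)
Your proposal is correct and follows the paper's proof. Both pp-encode each relation of $\mathcal{D}^k$ in $\Gamma$ with $\ell = k$, $F = D^k$, and coordinatewise equality as $\pi^{-1}(=)$; both check that the flattening of a $\Psi^k$-invariant relation is $\Psi$-invariant, so $\Gamma$ pp-defines it; and both conclude via \cref{th:pp-encodings and PC reductions}. Your explicit rational relabelling of $D^k$ (where the paper simply writes $\pi = \mathrm{id}_{D^k}$) and your routing through $\Gamma \cup \Gamma'$ rather than $\Gamma$ itself are harmless refinements that make the bookkeeping explicit.
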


\begin{proof}
    Let $\Delta \subseteq \textsf{Inv}(\Psi^k)$. We show that every $R \in \textsf{Inv}(\Psi^k)$ is pp-encoded in $\Gamma$, with $\ell = k$, $F = D^k$, and $\pi = \mathrm{id}_{D^k}$; that is, each variable over $D^k$ is encoded by $k$ variables over $D$, its coordinates.

    The encoded constraint is $\pi^{-1}(R)$: the $km$-ary relation over $D$ obtained by reading an $m$-ary $R \subseteq (D^k)^m$ coordinatewise (with $\pi = \mathrm{id}$, the same set of tuples, reinterpreted over $D$). The action of $\psi \in \Psi$ on $D^k$ is by definition its coordinatewise action, so $R \in \textsf{Inv}(\Psi^k)$ if and only if $\pi^{-1}(R) \in \textsf{Inv}(\Psi)$. Hence $\textsf{Pol}(\Gamma) \subseteq \textsf{Pol}(\pi^{-1}(R))$ and $\Gamma$ pp-defines $\pi^{-1}(R)$ \cite{BartoKW17}. The two remaining requirements are immediate: $F = D^k$ is the full $k$-ary relation, and $\pi^{-1}(=_{D^k})$ is coordinatewise equality $\bigwedge_{j \in [k]} (y_j = y_j')$. With $\pi = \mathrm{id}$ a bijection, this is a pp-encoding of $\Delta$ in $\Gamma$.
    
    Tractability follows from \cref{th:pp-encodings and PC reductions}: the resulting $\PC$-$\IMP$ instance is over $D$, with each $D^k$-variable replaced by its $k$ coordinates over $D$.
\end{proof}

\begin{definition}[Homomorphic image]\label{def:homomorphic image}
    Let $E$ be a set and $\chi : D \rightarrow E$ a surjective mapping such that for any $k$-ary $\psi \in \Psi$ and any $a_1, \ldots, a_k, b_1, \ldots, b_k \in D$, if $\chi(a_i) = \chi(b_i)$ for $i \in [k]$, then $\chi(\psi(a_1, \ldots, a_k)) = \chi(\psi(b_1, \ldots, b_k))$. The algebra $\mathcal{E} = (E, \Psi_\chi)$ is called a \emph{homomorphic image} of $\mathcal{D}$, where for every $\psi \in \Psi$ the set $\Psi_\chi$ contains $\psi/_\chi$ given by $\psi/_\chi(c_1, \ldots, c_k) = \chi(\psi(a_1, \ldots, a_k))$ for any $a_1, \ldots, a_k \in D$ with $c_i = \chi(a_i)$, $i \in [k]$. When $\chi$ is bijective, $\mathcal{E}$ is called an \emph{isomorphic image} of $\mathcal{D}$.
\end{definition}

\begin{lemma}\label{lem:homo pp-interpreted}
    Consider $\mathcal{D} = (D, \Psi) = \textsf{Alg}(\Gamma)$ and let $\mathcal{E} = (E, \Psi_\chi)$ be a homomorphic image of $\mathcal{D}$. Then every $R \in \textsf{Inv}(\Psi_\chi)$ is pp-interpreted in $\Gamma$ via $\ell = 1$, $F = D$, $\pi = \chi$. If $\chi$ is bijective, $R$ is pp-encoded.
\end{lemma}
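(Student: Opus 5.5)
We verify the three conditions of \cref{def:pp-interpretability} for $\ell = 1$, $F = D$ and $\pi = \chi$, which is onto by \cref{def:homomorphic image}. In each case we show that the required relation over $D$ is preserved by every operation of $\Psi = \Pol(\Gamma)$, that is, lies in $\Inv(\Pol(\Gamma))$. Then \cref{old:th:pp-definability and polymorphism clone} gives that $\Gamma$ pp-defines it: the singleton language consisting of that relation has $\Pol(\Gamma)$ among its polymorphisms.

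Condition 1 is immediate, since $F = D$ is the full unary relation and every operation preserves it.

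For conditions 2 and 3 we prove one general claim: for every relation $S \in \Inv(\Psi_\chi)$ of arity $k$, the preimage $\chi^{-1}(S) = \{(a_1,\ldots,a_k) \in D^k : (\chi(a_1),\ldots,\chi(a_k)) \in S\}$ is preserved by every $m$-ary $\psi \in \Psi$. Take tuples $a^1,\ldots,a^m \in \chi^{-1}(S)$ and let $b = \psi(a^1,\ldots,a^m)$, computed coordinatewise. In each coordinate $j$, the well-definedness clause of \cref{def:homomorphic image} gives
\[
\chi(b_j) = \chi(\psi(a^1_j,\ldots,a^m_j)) = \psi/_\chi(\chi(a^1_j),\ldots,\chi(a^m_j)).
\]
Hence $\chi(b)$ is obtained by applying $\psi/_\chi$ coordinatewise to the tuples $\chi(a^1),\ldots,\chi(a^m)$, all of which lie in $S$. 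Since $S$ is invariant under $\psi/_\chi \in \Psi_\chi$, we get $\chi(b) \in S$, that is, $b \in \chi^{-1}(S)$.

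We apply the claim twice. Taking $S = R$ gives condition 3. Taking $S$ to be the equality relation $=_E$ gives condition 2, since equality is preserved by every operation and so belongs to $\Inv(\Psi_\chi)$. Its preimage is the kernel relation $\{(a,a') : \chi(a)=\chi(a')\}$, which is exactly the compatibility condition in \cref{def:homomorphic image}.

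When $\chi$ is bijective, $\pi$ is a bijection, and \cref{def:pp-encodings} gives a pp-encoding. The only delicate point in the whole argument is the identity $\chi(\psi(\cdot)) = \psi/_\chi(\chi(\cdot))$ applied coordinatewise, and this holds by definition. No $\PC$-specific work is needed here, since the transfer of solvability is handled afterwards by \cref{th:pp-interpretations and PC reductions}.
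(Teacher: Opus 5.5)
Your proof is correct and follows the paper's argument. Like the paper, you show that each required preimage relation lies in $\Inv(\Pol(\Gamma))$ via the coordinatewise identity $\chi(\psi(\cdot)) = \psi/_\chi(\chi(\cdot))$ and then invoke the pp-definability/polymorphism correspondence. The only (cosmetic) difference is that you obtain condition 2 as the special case $S = {=_E}$ of your general preimage claim, using that equality is invariant under every operation, whereas the paper checks the kernel relation $\theta_\chi$ directly from the compatibility condition in the definition of homomorphic image.
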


\begin{proof}
    Set $\pi = \chi$, $\ell = 1$, $F = D$. We verify the three pp-definability requirements of pp-interpretation (\cref{def:pp-interpretability}).

    \emph{$F = D$.} The full unary relation, trivially pp-definable.

   \emph{$\pi^{-1}(=_E)$.} This is the relation $\theta_\chi := \{(a,b) \in D^2 : \chi(a) = \chi(b)\}$. Let $\psi \in \Psi$ be $k$-ary and let $(a_i, b_i) \in \theta_\chi$ for $i \in [k]$; writing $\bar a = (a_1, \dots, a_k)$, $\bar b = (b_1, \dots, b_k)$, \cref{def:homomorphic image} gives $\chi(\psi(\bar a)) = \chi(\psi(\bar b))$, so $(\psi(\bar a), \psi(\bar b)) \in \theta_\chi$ and $\theta_\chi \in \textsf{Inv}(\Psi)$.

    \emph{$\pi^{-1}(R)$, $R \in \textsf{Inv}(\Psi_\chi)$ of arity $k$.} By definition $\pi^{-1}(R) = \{\bar a \in D^k : (\chi(a_1), \ldots, \chi(a_k)) \in R\}$. Let $\psi \in \Psi$ be $h$-ary, take $\bar a_1, \ldots, \bar a_h \in \pi^{-1}(R)$, and put $\bar a = \psi(\bar a_1, \ldots, \bar a_h)$ componentwise, so $a_i = \psi(a_{i,1}, \ldots, a_{i,h})$. Then
    \[
        \chi(a_i) = \chi(\psi(a_{i,1}, \ldots, a_{i,h})) = (\psi/_\chi)(\chi(a_{i,1}), \ldots, \chi(a_{i,h})).
    \]
    Since $\psi/_\chi \in \Psi_\chi$ and $R \in \textsf{Inv}(\Psi_\chi)$, applying $\psi/_\chi$ to the $h$ tuples $(\chi(a_{1,j}), \ldots, \chi(a_{k,j})) \in R$, $j \in [h]$, yields a tuple of $R$, namely $(\chi(a_1), \ldots, \chi(a_k))$; hence $\bar a \in \pi^{-1}(R)$ and $\textsf{Pol}(\Gamma) \subseteq \textsf{Pol}(\pi^{-1}(R))$.

    Since $\Gamma$ pp-defines every relation preserved by $\Pol(\Gamma)$ \cite{BartoKW17}, it pp-defines $F$, $\theta_\chi$, and each $\pi^{-1}(R)$; that is, $\Gamma$ pp-interprets every $R \in \textsf{Inv}(\Psi_\chi)$. If $\chi$ is bijective, so is $\pi$, and the interpretation is a pp-encoding.
\end{proof}

\begin{proposition}\label{prop:homo reduction}
    If $\mathcal{E}$ is a homomorphic image of $\mathcal{D}$, then $\PC$-$\IMP$ from $\mathcal{E}$ $\PC$-reduces to $\PC$-$\IMP$ from $\mathcal{D}$; in particular $\mathcal{E}$ is $\PC$-tractable whenever $\mathcal{D}$ is.
\end{proposition}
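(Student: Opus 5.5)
The plan is to combine \cref{lem:homo pp-interpreted} with \cref{th:pp-interpretations and PC reductions}. Write $\mathcal{D} = (D,\Psi) = \textsf{Alg}(\Gamma_D)$ and $\mathcal{E} = (E,\Psi_\chi)$. Fix an arbitrary finite constraint language $\Delta \subseteq \Inv(\Psi_\chi)$ and a fixed degree $d$. We must show that $\PC$-$\IMP_d(\Delta)$ $\PC$-reduces to $\PC$-$\IMP_c(\Gamma)$ for some finite $\Gamma \subseteq \Inv(\Psi)$ and some constant $c$.

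\textbf{Step 1: build the pp-interpretation.} By \cref{lem:homo pp-interpreted}, each $R \in \Delta$ is pp-interpreted in $\Gamma_D$ with $\ell = 1$, $F = D$ and $\pi = \chi$. The data $(\ell, F, \pi)$ do not depend on $R$, so a single interpretation serves all of $\Delta$ simultaneously. Concretely, $\Gamma_D$ pp-defines $F$, $\theta_\chi = \pi^{-1}(=_E)$, and every $\pi^{-1}(R)$ with $R \in \Delta$.

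\textbf{Step 2: make the target language finite.} $\Gamma_D$ may be infinite (for instance, all of $\Inv(\Psi)$). I would therefore replace it by the finite language
\[
\Gamma := \{F, \theta_\chi\} \cup \{\pi^{-1}(R) : R \in \Delta\} \subseteq \Inv(\Psi).
\]
This $\Gamma$ pp-defines all required relations trivially, so it pp-interprets $\Delta$ with $\ell = 1$.

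\textbf{Step 3: apply the reduction theorem.} \Cref{th:pp-interpretations and PC reductions} now gives that $\PC$-$\IMP_d(\Delta)$ $\PC$-reduces to $\PC$-$\IMP_{d|D|}(\Gamma)$. This is the first claim. If $\mathcal{D}$ is $\PC$-tractable, then $\PC$-$\IMP_{d|D|}(\Gamma)$ is solvable, because $\Gamma$ is a finite subset of $\Inv(\Psi)$. The reduction then transfers solvability, as explained after \cref{def:PC reduction}: the degree grows by a constant factor and the bit-size stays polynomial. Since $\Delta$ and $d$ were arbitrary, $\mathcal{E}$ is $\PC$-tractable.

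\textbf{Main obstacle.} The delicate point is bookkeeping rather than mathematics. \Cref{th:pp-interpretations and PC reductions} needs \emph{fixed} finite languages, and it needs the polynomial $p$ interpolating $\pi = \chi$ and the section interpolants $U_{1,i}$ to have constant degree and constant-size coefficients. Both hold here: $p$ is univariate of degree less than $|D|$, and the $U_{1,i}$ interpolate a section of $\chi$ of degree less than $|E|$, so the constants depend only on $D$ and $E$.

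One should also check that the ``$\Gamma_D$ pp-defines'' hypothesis used inside \cref{th:pp-definability and PC reductions} is applied to the finite $\Gamma$ above. Alternatively, one can route through a finite sublanguage of $\Gamma_D$ that pp-defines these finitely many relations, which exists by compactness of pp-definitions.
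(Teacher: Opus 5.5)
Your proposal is correct and follows the paper's own proof: it likewise obtains the pp-interpretation from \cref{lem:homo pp-interpreted} (with $\ell = 1$, $F = D$, $\pi = \chi$) and then invokes \cref{th:pp-interpretations and PC reductions}. Your Step~2 replaces a possibly infinite $\Gamma_D$ by the finite language $\{F,\theta_\chi\}\cup\{\pi^{-1}(R) : R \in \Delta\} \subseteq \Inv(\Psi)$, which makes explicit a finiteness point that the paper's one-line argument leaves implicit.
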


\begin{proof}
    By \cref{lem:homo pp-interpreted}, every language of $\mathcal{E}$ is pp-interpreted in $\Gamma$, so the reduction follows from \cref{th:pp-interpretations and PC reductions}. When $\chi$ is bijective the interpretation is a pp-encoding.
\end{proof}

Together, \cref{lem:subalgebra tractable,lem:direct power tractable,prop:homo reduction} show that $\PC$-tractability is preserved under subalgebras, finite direct powers, and homomorphic images; it therefore propagates to every algebra obtained from a $\PC$-tractable $\mathcal{D}$ by these operations.

\subsection{Median operations and fixed-value majorities}\label{sect:majorities}
We apply the transfer principle to obtain new tractable classes over ternary and larger domains. The engine is the \emph{median} operation on a finite chain, which reduces to the Boolean majority algebra; its three-element case yields the \emph{fixed-value majorities} over $\{0,1,2\}$, advancing the ternary $\IMP_d$ classification.

\begin{proposition}\label{prop:median-chain}
    For every $\ell \geq 1$, let $m_\ell$ be the ternary operation on $\{0,1,\ldots,\ell\}$ returning the middle one of its three arguments, i.e. $m_\ell(a,b,c) = a + b + c - \max(a,b,c) - \min(a,b,c)$. Then the algebra $(\{0,1,\ldots,\ell\}, m_\ell)$ is $\PC$-tractable.
\end{proposition}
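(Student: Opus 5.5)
We follow the route announced in the introduction: realize $(\{0,\dots,\ell\},m_\ell)$ as an isomorphic image of a subalgebra of $\mathcal{B}^\ell$, where $\mathcal{B}=(\{0,1\},\textsc{Maj})$, and then chain \cref{lem:direct power tractable}, \cref{lem:subalgebra tractable} and \cref{prop:homo reduction}. The base case is that $\mathcal{B}$ is $\PC$-tractable: every finite Boolean $\Gamma\subseteq\Inv(\textsc{Maj})$ has solvable $\PC$-$\IMP_d$ by \cref{th:Bortolotti PC bounds for dual-discriminator and semilattice}, since \textsc{Majority} is the Boolean dual-discriminator. To apply the transfer lemmas, which are stated for $\textsf{Alg}(\Gamma)$, we take $\Gamma$ to be a finite basis of $\Inv(\textsc{Maj})$ on $\{0,1\}$, for instance all binary Boolean relations. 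Every \textsc{Maj}-closed relation is pp-definable from its binary projections, so $\Pol(\Gamma)$ is the clone generated by \textsc{Maj}. The same finite-basis remark applies at each later stage. Alternatively, one can observe that the lemmas only use the operations through $\Inv(\cdot)$, which is what matters in each case.

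The encoding is the threshold map $\tilde\pi\colon\{0,\dots,\ell\}\to\{0,1\}^\ell$, $\tilde\pi(a)_t=\mathbf{1}[a\ge t]$ for $t\in[\ell]$, in the spirit of \cref{ex:semilattice}. Its image $E$ is the set of monotone (non-increasing) $0/1$ vectors. First check that $E$ is a subalgebra of $\mathcal{B}^\ell$, i.e.\ closed under coordinatewise \textsc{Maj}. If $u,v,w$ are non-increasing and $\textsc{Maj}(u_t,v_t,w_t)=1$, then at least two of $u_t,v_t,w_t$ equal $1$, so at least two of $u_{t'},v_{t'},w_{t'}$ equal $1$ for every $t'<t$. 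Next check that $\tilde\pi$ is an isomorphism from $(\{0,\dots,\ell\},m_\ell)$ onto $(E,\textsc{Maj})$: $\mathbf{1}[m_\ell(a,b,c)\ge t]=\textsc{Maj}(\mathbf{1}[a\ge t],\mathbf{1}[b\ge t],\mathbf{1}[c\ge t])$, because the median is at least $t$ exactly when at least two of the arguments are. Setting $\chi:=\tilde\pi^{-1}\colon E\to\{0,\dots,\ell\}$, the chain algebra is an isomorphic image of the subalgebra $E$ in the sense of \cref{def:homomorphic image}. The compatibility condition there is vacuous because $\chi$ is a bijection.

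With these facts, the chain of reductions runs as follows. \Cref{lem:direct power tractable} gives $\PC$-tractability of $\mathcal{B}^\ell$. \Cref{lem:subalgebra tractable} then gives it for the subalgebra $(E,\textsc{Maj}|_E)$. Finally \cref{prop:homo reduction}, which in the bijective case is a pp-encoding handled by \cref{th:pp-encodings and PC reductions}, gives it for $(\{0,\dots,\ell\},m_\ell)$. The degree blows up by a constant factor depending only on $\ell$ and the domain sizes, e.g.\ $d\,\ell\,|\Lambda_\Gamma|$ in \cref{th:pp-interpretations and PC reductions}, so solvability for every fixed $d$ is preserved.

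A technical point needs care at the step through $\mathcal{B}^\ell$. The domain $D^\ell=\{0,1\}^\ell$ is not literally a subset of $\mathbb{Q}$, as the preliminaries require. We handle this by identifying $\{0,1\}^\ell$ with $2^\ell$ rationals via any injection, since \cref{prop:encoding-independence} makes the choice irrelevant. Alternatively, we skip the intermediate power algebra and compose directly: a finite $\Delta\subseteq\Inv(m_\ell)$ is pp-encoded in the Boolean language with $F=E$ and $\pi=\chi$. This is exactly the verification made in \cref{lem:homo pp-interpreted} and \cref{lem:direct power tractable}, since $E$ and the $\pi$-preimages of relations in $\Delta$ are \textsc{Maj}-closed and hence pp-definable from $\Gamma$. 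Then apply \cref{th:pp-encodings and PC reductions} once. I would present this direct pp-encoding as the main proof, because it avoids the non-numeric intermediate domain. The only real content is the two closure identities above, and the main obstacle is this bookkeeping about domains and finite generating languages, not any new $\PC$ argument.
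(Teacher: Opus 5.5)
Your proof is correct and follows the paper's route. It uses the same subalgebra of non-increasing vectors in $\mathcal{B}^\ell$, the same bijection $\chi$ (the sum map, inverse to your threshold map) and the same subalgebra/power/isomorphic-image transfers; your threshold identity replaces the paper's pairwise-comparability argument for the homomorphism check, and your one-step pp-encoding with $F=E$, $\pi=\chi$ is just the composite of those lemmas. That one-step version is a sensible way to sidestep the intermediate domain $\{0,1\}^\ell\not\subseteq\mathbb{Q}$ and the $\textsf{Alg}(\Gamma)$ hypothesis, both of which the paper leaves implicit.
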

\begin{proof}
    Let $\mathcal{B} = (\{0,1\}, \textsc{Maj})$ with $\textsc{Maj}$ the ternary Boolean majority. Consider $F = \{(a_1, \ldots, a_{\ell}) \in \{0,1\}^{\ell} : a_1 \geq a_2 \geq \ldots \geq a_{\ell}\}$ and note that it is closed under componentwise $\textsc{Maj}$, implying that $(F, \textsc{Maj}^{\ell})$ is a subalgebra of $\mathcal{B}^{\ell}$. We recall that $\mathcal{B}$ is $\PC$-tractable \cite{BortolottiMV_ARXIV25}, therefore $(F, \textsc{Maj}^{\ell})$ is also $\PC$-tractable. The map $\chi : F \to \{0, \ldots, \ell\}$ with $\chi(a_1, \ldots, a_{\ell}) = \sum_{j = 1}^{\ell} a_j$ is a bijection, and it is monotone for the componentwise order on $F$: if $x \leq y$ componentwise then $\chi(x) \leq \chi(y)$. Any $b_1, b_2, b_3 \in F$ are pairwise comparable, and suppose $b_1 \leq b_2 \leq b_3$. Then $\textsc{Maj}^{\ell}(b_1, b_2, b_3) = b_2$, while $\chi(b_1) \leq \chi(b_2) \leq \chi(b_3)$ gives $m_\ell(\chi(b_1), \chi(b_2), \chi(b_3)) = \chi(b_2)$. Hence $\chi(\textsc{Maj}^{\ell}(b_1, b_2, b_3)) = m_\ell(\chi(b_1), \chi(b_2), \chi(b_3))$, and $(\{0, \ldots, \ell\}, m_\ell)$ is an isomorphic image of $(F, \textsc{Maj}^{\ell})$, concluding the proof.
\end{proof}

Specializing to $\ell = 2$ gives the fixed-value majorities. For $i \in \{0,1,2\}$, the \emph{fixed-value majority} $\mu_i$ over $\{0,1,2\}$ is the majority returning the constant $i$ on all-distinct triples: $\mu_i(a,b,c) = \textsc{Maj}(a,b,c)$ if $|\{a,b,c\}| \leq 2$, and $\mu_i(a,b,c) = i$ otherwise.

\begin{corollary}\label{cor:012-majorities}
    For every $i \in \{0,1,2\}$, the algebra $(\{0,1,2\}, \mu_i)$ is $\PC$-tractable.
\end{corollary}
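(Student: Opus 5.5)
The plan is to observe that each fixed-value majority $\mu_i$ is itself a median operation on a three-element chain, with the chain order chosen so that $i$ is the middle element, and then to transfer \cref{prop:median-chain} for $\ell = 2$ along a relabelling bijection.

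\textbf{Step 1 (identifying $\mu_i$ as a median).} Fix $i \in \{0,1,2\}$ and let $\le_i$ be the linear order on $\{0,1,2\}$ in which $i$ is the middle element. For $i=1$ this is $0<1<2$; for $i=0$ take $1<0<2$; for $i=2$ take $0<2<1$. Write $\mathrm{med}_i$ for the median with respect to $\le_i$. I will check that $\mu_i = \mathrm{med}_i$ by cases.
\begin{itemize}
\item If $|\{a,b,c\}| \le 2$, two arguments coincide. In any chain the median of such a triple is the repeated value, which equals $\textsc{Maj}(a,b,c)$.
\item If $a,b,c$ are all distinct, then $\{a,b,c\} = \{0,1,2\}$. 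The median with respect to $\le_i$ is then the middle element $i$, exactly as $\mu_i$ prescribes.
\end{itemize}
Hence $\mu_1 = m_2$, and in general $\mu_i = \chi \circ m_2 \circ (\chi^{-1} \times \chi^{-1} \times \chi^{-1})$, where $\chi\colon\{0,1,2\}\to\{0,1,2\}$ is the order isomorphism from $(\{0,1,2\},\le)$ to $(\{0,1,2\},\le_i)$. Concretely, $\chi$ is the transposition swapping $1$ and $i$, or the identity when $i=1$.

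\textbf{Step 2 (transfer).} By Step 1, $(\{0,1,2\},\mu_i)$ is an isomorphic image of $(\{0,1,2\},m_2)$ via $\chi$ in the sense of \cref{def:homomorphic image}. The compatibility condition is vacuous because $\chi$ is bijective, and $m_2/_\chi = \mu_i$.

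To avoid the hypothesis $\mathcal{D} = \textsf{Alg}(\Gamma)$ in \cref{prop:homo reduction}, I would argue directly at the level of languages.
\begin{itemize}
\item Let $\Delta \subseteq \Inv(\mu_i)$ be finite, and set $\Gamma := \{\chi^{-1}(S) : S \in \Delta\} \cup \{=_{\{0,1,2\}}\}$.
\item Each $\chi^{-1}(S)$ is preserved by $m_2$, since $\mu_i$ is conjugate to $m_2$ by $\chi$. So $\Gamma \subseteq \Inv(m_2)$ is finite, and $\PC$-$\IMP_c(\Gamma)$ is solvable for every $c$ by \cref{prop:median-chain} with $\ell = 2$.
\item $\Gamma$ pp-encodes $\Delta$ with $\ell=1$, $F=\{0,1,2\}$ and $\pi = \chi$. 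Indeed $F$ is the full relation, $\pi^{-1}(=)$ is equality because $\chi$ is bijective, and $\pi^{-1}(S) \in \Gamma$.
\end{itemize}
Then \cref{th:pp-encodings and PC reductions} gives that $\PC$-$\IMP_d(\Delta)$ is solvable for every fixed $d$.

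\textbf{Main obstacle.} The only real content is Step 1, namely spotting that the constant returned on all-distinct triples can be made the middle of a chain. This is a finite check with no genuine difficulty. The remaining care is bookkeeping: \cref{prop:median-chain} is stated for the algebra $(\{0,1,2\},m_2)$, whose invariant languages are exactly those handled above. The explicit pp-encoding in Step 2 sidesteps any mismatch between $(D,\{m_2\})$ and $\textsf{Alg}(\Gamma)$.
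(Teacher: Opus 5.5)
Your proposal is correct and follows the paper's proof. The identity $\mu_1 = m_2$ settles $i=1$ via \cref{prop:median-chain}, and the cases $i \in \{0,2\}$ follow by relabelling along a bijection exchanging $i$ and $1$; the paper notes that any bijection sending $i$ to $1$ works, since $\mu_i$ is fixed by its value on all-distinct triples. Your Step 2 writes the isomorphic-image transfer out as an explicit pp-encoding ($\ell=1$, $F=\{0,1,2\}$, $\pi=\chi$), which is sound and makes precise what the paper's one-line appeal to isomorphism leaves implicit, including the $\textsf{Alg}(\Gamma)$ hypothesis of the homomorphic-image lemma.
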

\begin{proof}
    For $\ell = 2$ we have $m_2 = \mu_1$, so $(\{0,1,2\}, \mu_1)$ is $\PC$-tractable by \cref{prop:median-chain}. For $i \in \{0,2\}$, the algebra $(\{0,1,2\}, \mu_i)$ is isomorphic to $(\{0,1,2\}, \mu_1)$ via any bijection of $\{0,1,2\}$ sending $i$ to $1$.
\end{proof}

\cref{cor:012-majorities} illustrates the versatility of these reductions: the $\PC$-tractability of $(\{0,1,2\},\mu_i)$ follows with no further certificate construction, purely from the isomorphism built in \cref{prop:median-chain}.
Exhibiting that isomorphism---and, more generally, a homomorphic image---is the only work; transfer is then automatic.

\section{Future work}\label{sect:future}

The framework of \cref{sect:algebras} makes $\PC$-tractability a property closed under subalgebras, finite powers, and homomorphic images, so that the tractability of an entire class of languages follows from that of a few representative algebras generating it. The semilattice (\cref{ex:semilattice}) and the median over a chain (\cref{prop:median-chain}), both reducing to a $\PC$-tractable Boolean base, are two such representatives.
The natural program is to identify the others: to find a set of base algebras from which, under the reductions of this paper, every $\PC$-tractable language is generated --- the proof-complexity counterpart of the representative operations that organize the $\CSP$ and $\IMP$ classifications.

Several cases remain open. Over $\{0,1,2\}$, the majorities outside the fixed-value family are not classified even for $\IMP_d$, and so neither for $\PC$-$\IMP_d$ (\cref{sect:majorities}); over larger domains, no classification of $\PC$-$\IMP_d$ is known beyond the semilattice, dual-discriminator, and \textsc{Minority} cases (\cref{sect:csp-pc-imp}).
There, for bounded-width languages, the only general lower bound is the conditional hardness inherited from $\IMP$---$\coNP$-hardness of $\IMP_d(\Gamma)$ rules out solving $\PC$-$\IMP_d(\Gamma)$ unless $\Ptime=\NP$; obtaining unconditional $\PC$ lower bounds outside the affine case is what a complete picture would require.

\newpage

\bibliography{main}

\end{document}